\newcommand*{\rom}[1]{\expandafter\@slowromancap\romannumeral #1@}
\newcommand{\R}{\mathbb{R}}
\newcommand{\E}{\mathbb{E}}
\newtheorem{theorem}{Theorem}
\newtheorem{corollary}{Corollary}
\newtheorem{definition}{Definition}
\newtheorem{proposition}{Proposition}
\newenvironment{proof}[1][Proof]{\textbf{#1.} }{\  \rule{0.5em}{0.5em}}
\renewcommand{\cite}{\citeasnoun}
\begin{document}

\title{{Pair copula constructions of point-optimal sign-based tests for predictive linear and nonlinear regressions}}
\author{Kaveh Salehzadeh Nobari\thanks{%
Department of Mathematics and Statistics, Lancaster University, Lancaster, LA1 4YR, UK
(\href{emailto: k.salehzadehnobari@lancs.ac.uk}{k.salehzadehnobari@lancs.ac.uk}),}
\\Lancaster University}
\date{\today}
\maketitle

\begin{abstract}
We propose pair copula constructed point-optimal sign tests in the context of linear and nonlinear predictive regressions with endogenous, persistent regressors, and disturbances exhibiting serial (nonlinear) dependence. The proposed approach entails considering the entire dependence structure of the signs to capture the serial dependence, and building feasible test statistics based on pair copula constructions of the sign process. The tests are exact and valid in the presence of heavy tailed and nonstandard errors, as well as heterogeneous and persistent volatility. Furthermore, they may be inverted to build confidence regions for the parameters of the regression function. Finally, we adopt an adaptive approach based on the split-sample technique to maximize the power of the test by finding an appropriate alternative hypothesis. In a Monte Carlo study, we compare the performance of the proposed \textquotedblleft quasi\textquotedblright-point-optimal sign tests based on pair copula constructions by comparing its size and power to those of certain existing tests that are intended to be robust against heteroskedasticity. The simulation results maintain the superiority of our procedures to existing popular tests.   
\end{abstract}

\noindent \textbf{Keywords}: Predictive regressions, D-vine, truncation, sequential estimation, persistent regressors, sign test, exact inference, adaptive method, power envelope.

\noindent \textbf{JEL Codes}: C12, C13, C15, C22, C52

\newpage
\section{Introduction \label{Introduction}}

{\hskip 1.5em} Predictive regressions are frequently encountered in the economics and finance literature. Within this framework, the regressors are often highly persistent (and potentially nonstationary) with errors that exhibit contemporaneous correlation with the disturbances in the predictive regression. This leads to endogeneity, as a result of which the least-squared estimator of the regression parameters is biased. In such settings, \textit{t}-type tests possess a non-standard distribution and inference using asymptotic critical values is no longer valid [see \citet{mankiw1986we} and \citet{stambaugh1999predictive} among others]. The econometric analysis of predictive regressions has been addressed extensively and numerous papers have suggested solutions to overcome this problem. These include reducing the bias in finite samples [see \citet{nelson1993predictable} and \citet{stambaugh1999predictive} among others] or self generated instrumental variables (IVs) that eliminate endogeneity [see \citet{magdalinos2009limit}, \citet{phillips2013predictive} among others]. However, many of these studies impose strict assumptions on the degree of persistency of the regressors  [see \citet{phillips2014confidence} for an overview]. In this paper, we propose pair copula constructed point-optimal sign-based tests (PCC-POS-based tests hereafter) in the context of linear and nonlinear predictive regressions. The proposed tests are robust in the presence of persistent/endogenous regressors and errors, heterogeneous and persistent volatility, and disturbances that exhibit serial (nonlinear) dependence. Moreover, they are exact, consider the entire dependence structure of the signs and may be inverted to build confidence regions for the parameters of the regression function.

Sign-based tests, such as those proposed by \citet{dufour1995exact, campbell1997exact}, \citet{ luger2003exact}, and \citet{dufour2010exact} are randomized tests with a randomized distribution under the null hypothesis of unpredictability [see \citet{pratt2012concepts} for a review of randomized tests]. Hence, under mild assumptions these procedures are distribution-free and do not suffer from the issues encountered by \textit{t}-type statistics in finite samples. These class of tests are valid in the presence of non-normal distributions and heteroskedasiticty of unknown form [see \citet{boldin1997sign} and \citet{taamouti2015finite} for a review of sign-based tests]. Furthermore, \citet{dufour2010exact} show that the heteroskedasticity and autocorrelation corrected tests developed by \citet{white1980heteroskedasticity} (more commonly referred to as \textquotedblleft HAC\textquotedblright{ }procedures) are plagued with low power when the errors follow GARCH-type structures or there is a break in variance. To address these issues, \citet{dufour2010exact} propose point-optimal sign-based (POS-based) inference to test whether the conditional median of a response variable is zero against a linear regression alternative, where these procedures are further extended to nonlinear models.

In an earlier paper, we proposed an extension of the POS-based tests within a predictive regression framework. However, in order to obtain \textit{feasible} test statistics, we imposed a Markovian assumption on the sign process. This paper relaxes the Markovian assumption on the signs, by decomposing the joint distribution of the signs using models proposed by \citet{panagiotelis2012pair} for multivariate discrete data based on pair copula constructions. The latter allow us to build feasible test statistics that are robust against heavy-tailed and asymmetric distributions, provided that the errors have zero median conditional on their own past and the explanatory variables. The PCC-POS-based tests are shown to be robust against non-standard distributions and heteroskedasticity of unknown form, and have the highest power among certain parametric and nonparametric tests that are intended to be robust against heteroskedasticity. Moreover, as in \citet{dufour2010exact}, they can be inverted to produce a confidence region for the vector (sub-vector) of the parameters.  

Although, the literature surrounding sign-based and sign-ranked inference is vast, the focus of the POS-based tests constructed by \citet{dufour2010exact} is to maximize power at a nominated point in the alternative parameter space, such that the power of the POS-based test is close to that of the power envelope - i.e. maximum attainable power for a given testing problem [see \citet{king1987towards}]. Similarly, the PCC-POS-based tests are Neyman-Pearson type tests based on the signs, and as in \citet{dufour2010exact} a practical problem concerns finding an alternative at which the power of the PCC-POS-based tests is close to that of the power envelope. By conducting simulations exercises, \citet{dufour2010exact} find that the power of the POS-based tests is shifted close to the power envelope, when 10\% of the sample is used to estimate the alternative and the remaining portion to calculate the test-statistic. Our simulations using a variety of split-sample PCC-POS-based tests confirm these findings.

Due to the nonlinear nature of the signs, there is inherent uncertainty regarding the structure of sign dependence. Therefore, it is important to consider the entire dependence structure of the signs. One approach for computing the joint distribution of the signs $s(y_1),\cdots,s(y_T)$, where $s(y_i)=\mathbbm{1}_{\R^+ \cup\{0\} }\{y_i\}$, entails taking advantage of copula functions [see \citet{sklar1959fonctions}], which express the joint distribution of the signs in terms of i) the marginal distributions of the individual signs; and ii) the copula models capturing the dependence of the $T$ signs. As the signs are discrete, the likelihood function of the POS-based tests under the alternative hypothesis can then be calculated using rectangle probabilities and in turn estimated using copulas with closed analytical form. However, this approach would not yield feasible test statistics, as the number of \textit{multivariate} copulas that need to be evaluated increase at an exponential rate with growing sample sizes $T$. As a result of this curse of dimensionality, the literature concerning calculating probability mass functions (p.m.f hereafter) using discrete data is limited to low-dimensional data and copulas that are fast to calculate [see \citet{nikoloulopoulos2008multivariate,nikoloulopoulos2009finite} and \citet{li2010two}]. 

To propose feasible POS-based test statistics within a predictive regression framework, we use a discrete analogue of the vine PCCs introduced by \citet{panagiotelis2012pair}. The likelihood function of the signs under the alternative hypothesis can then be decomposed as a vine PCC under certain set of conditions that are later outlined in the paper. The most important advantage of this method is that for a sample of size $T$, only $2T(T-1)$ \textit{bivariate} copula evaluations are required, as opposed to $2^T$ \textit{multivariate} copula evaluations using rectangle probabilities. Another advantage of the vine PCC methodology is that model selection techniques can be used to identify the conditional independence in the process of signs in order to create more parsimonious PCC models.

The rest of the paper is organized as follows: in Section \ref{Framework}, we motivate the use of the discrete analogue of the vine PCC for building POS-based tests. In Section \ref{Point-optimal sign
test based on PCC}, we outline the conditions under which vine PCCs can be implemented and we also discuss the choice of the PCC model. We then propose PCC-POS-based test statistics for linear and nonlinear models. Secion \ref{EstimationC3}, discusses the estimation approach implemented for the vine PCC models. In Section \ref{optimal alternative hypothesis}, we derive the power envelope and highlight the choice of the alternative hypothesis for computing the PCC-POS-based test statistic. In Section \ref{projectiontechniqueC3}, we discuss the problem of finding a confidence set for a vector (subvector) of parameters using the projection techniques. In Section \ref{sec: Monte Carlo study}, we assess the performance of the proposed tests in terms of size and power. Finally, in Section \ref{ConclusionC3}, we conclude the findings of the paper.

\section{Framework \label{Framework}}

 {\hskip 1.5em}Consider a stochastic process $Z=\{Z_t=(y_t,\bm{x}_{t-1}'):\Omega\rightarrow\R^{(k+1)},t=1,2,\cdots\}$ defined on a probability space $(\Omega,\mathcal{F},P)$. Within a framework of a predictive regression $y_{t}$ can linearly be explained by a vector variable $\bm{x}_{t-1}$%
\begin{equation}
y_{t}=\bm{\beta}'\bm{x}_{t-1}+\varepsilon_{t},\quad t=1,\cdots,T,  \label{eq: DGP}
\end{equation}%
where $y_t$ is the dependent variable and $\bm{x}_{t-1}$ is an $(k+1)\times 1$ vector of stochastic explanatory 
variables, say $\bm{x}_{t-1}=[1,x_{1,t-1},\cdots,x_{k,t-1}]'$, $\bm{\beta} \in \mathbb{R}^{(k+1)}$ is an unknown vector of parameters with $\bm{\beta}=[\beta_0,\beta_1,\cdots,\beta_k]'$ and
\[
 \varepsilon_t\mid X\sim F_t(.\mid X)
\] 
where $F_{t}(.\mid X)$ is an unknown conditional distribution function and X=$[\bm{x}_0',\cdots,\bm{x}_{T-1}']'$ is an $T\times (k+1)$ information matrix. 

We follow \citet{coudin2009finite} by considering the median as an alternative measure of central tendency, which differs from the Martingale Difference Sequence (MDS hereafter) assumption. In the latter, it is generally assumed that for an adapted stochastic sequence $\{Z_t,\mathcal{F}_{t}, t=1,2,\cdots\}$, where $\mathcal{F}_{t}$ is a $\sigma$-field in $\Omega$, $\mathcal{F}_s\subseteq\mathcal{F}_t$ for $s< t$, $\mathbb{E}\{\varepsilon_t\mid\mathcal{F}_{t-1}\}=0$, $\forall t\geq 1$. Thus the alternative implies imposing a median-based analogue of the MDS on the error process - namely we suppose that $\varepsilon_t$ is a strict conditional mediangale
\begin{equation}\label{eq: median}
P[\varepsilon_{t}> 0\mid \bm{{\varepsilon}}_{t-1},X]=P[\varepsilon_{t}<0\mid \bm{\varepsilon}_{t-1},X]=\frac{1}{2},
\end{equation}%
with
\[
\bm{\varepsilon}_{0}=\{\emptyset\},\quad\bm{\varepsilon}_{t-1}=\{\varepsilon_1,\cdots,\varepsilon_{t-1}\},\quad\text{for}\quad t\geq2.
\]

 Note (\ref{eq: median}) entails
that $\varepsilon _{t}\mid X$ has no mass at zero for all $t$, which is only true if $\varepsilon_{t}\mid X$\ is a
continuous variable. Model (\ref{eq: DGP}) in conjunction with assumption (\ref{eq: median}) allows the error terms to possess asymmetric, heteroskedastic and serially dependent distributions, so long as the conditional medians are zero. Assumption (\ref{eq: median}) allows for many dependent schemes, such as those of the form $\varepsilon_1=\sigma_1(x_1,\cdots,x_{t-2})\epsilon_1$, $\varepsilon_t=\sigma_1(x_1,\cdots,x_{t-2},\varepsilon_1,\cdots,\varepsilon_{t-1})\epsilon_t$ ,$t=2,\cdots,n$, where $\epsilon_1,\cdots,\epsilon_n$ are independent with a zero median. In time-series context this includes models such as ARCH, GARCH or stochastic volatility with non-Gaussian errors. Furthermore, in the mediangale framework the disturbances need not be second order stationary. 

Suppose, we wish to test the null hypothesis of unpredictability
\begin{equation}\label{eq: null}
H_0:\bm{\beta}=\bm{0},
\end{equation}
against the alternative
\begin{equation}\label{eq: alt}
H_1:\bm{\beta}=\bm{\beta}_1.
\end{equation}
where $\bm{0}$ is a $(k+1)\times 1$ zero vector. Define the vector of signs as follows
\begin{equation*}
U(T)=(s(y_1),\cdots,s(y_T))',
\end{equation*}
where for $t=1,\cdots,T$
\begin{equation*}
s(y_{t})=\left\{ 
\begin{tabular}{l}
$1,$ $if$ $y_{t}\geq 0$ \\ 
$0,$ $if$ $y_{t}<0$%
\end{tabular}%
\ \right. \text{.}
\end{equation*}
We consider Neyman-Pearson type test based on the signs. Thus, to build POS-based tests for testing the null hypothesis (\ref{eq: null}) against the alternative (\ref{eq: alt}), we first define the likelihood function of the sample in terms of signs $s(y_1),\cdots,s(y_T)$ conditional on $X$
\begin{equation}\label{eq: likelihood}
L(U(T),\bm{\beta},X)=P[s(y_1)=s_1,\cdots,s(y_T)=s_T\mid X]=\prod\limits_{t=1}^{T}P\left[s(y_t)=s_t\mid \text{\b{S}}_{t-1}=\text{\b{s}}_{t-1},X\right],
\end{equation}
with 
\begin{equation*}
\text{\b{S}}_{0}=\{\emptyset\},\quad \text{\b{S}}_{t-1}=\{s(y_1),\cdots,s(y_{t-1})\},\quad\text{for}\quad t\geq 2,
\end{equation*}
and
\[
P[s(y_1)=s_1\mid\text{\b{S}}_{0}=\text{\b{s}}_{0},X]=P[s(y_1)=s_1\mid X],
\]
where each $s_t$ for $1\leq t\leq T$ takes two possible values of 0 and 1. Under model (\ref{eq: DGP}) and assumption (\ref{eq: median}), the variables $s(\varepsilon_1),\cdots,s(\varepsilon_T)$ and in turn $s(y_1),\cdots,s(y_T)$ are i.i.d conditional on $X$, according to the distribution
\[
P[s(\varepsilon_1)=1\mid X]=P[s(\varepsilon_1)=0\mid X]=\frac{1}{2},\quad t=1,\cdots,T.
\]
This results holds true iff for any combination of $t=1,\cdots,T$ there is a permutation $\pi: i\rightarrow j$ such that the mediangale assumption holds for $j$. Then the signs $s(\varepsilon_1),\cdots,s(\varepsilon_T)$ are i.i.d [see Theorem \ref{Theorem1} of \citet{coudin2009finite}]. Therefore, under the null hypothesis of unpredictability we have

\begin{equation}
P[s(y_t)=1\mid X]=P[s(y_t)=0\mid X]=\frac{1}{2},\quad t=1,\cdots,T.
\end{equation}
Consequently, under the null hypothesis of orthogonality, the log-likelihood function conditional on $X$ is given by
\[
L_0(U(T),\bm{0},X)=\prod\limits_{t=1}^{T}P[s(y_t)=s_t\mid X]=\left(\frac{1}{2}\right)^T.
\]
On the other hand, under the alternative we have
\[
L_1(U(T),\bm{\beta}_1,X)=\prod\limits_{t=1}^{T}P[s(y_t)=s_t\mid \text{\b{S}}_{t-1}=\text{\b{s}}_{t-1}, X],
\]
where now for $t=1,\cdots,T$
\[
y_t=\bm{\beta}_1'\bm{x}_{t-1}+\varepsilon_t.
\]

In an earlier paper, we considered optimal sign-based tests (in the Neyman-Pearson sense), which maximize power under the constraint $P[\text{Reject }  H_0\mid H_0]\leq \alpha$, where $\alpha$ is an arbitrary significance level [see \citet{lehmann2006testing}]. Let $H_0$ and $H_1$ be defined by (\ref{eq: null}) and (\ref{eq: alt}) respectively. Then under the assumptions (\ref{eq: DGP}) and (\ref{eq: median}), the log-likelihood ratio
\begin{equation}\label{eq: teststat}
SL_T(\bm{\beta_1})=\ln\left\{\frac{L_1(U(T),\bm{\beta}_1,X)}{L_0(U(T),\bm{0},X)}\right\}>c,
\end{equation} 
is most powerful for testing $H_0$ against $H_1$ among level $\alpha$ tests based on the signs $(s(y_1),\cdots,s(y_T))'$, where $c$ is the smallest constant such that 
\[
P[SL_T(\bm{\beta}_1)>c\mid H_0]\leq \alpha,
\]
and $\alpha$ is an arbitrary significance level.

 For POS-based tests within a predictive regression framework, the test statistic requires the calculation of $P[y_t\geq0\mid\text{\b{S}}_{t-1}=\text{\b{s}}_{t-1},X]$ and $P[y_t<0\mid\text{\b{S}}_{t-1}=\text{\b{s}}_{t-1},X]$. The latter is not easy to compute, as it involves the distribution of the joint process of signs  $s(y_1),\cdots,s(y_T)$, conditional on $X$, which is unknown. Therefore, to obtain feasible test statistics, we may impose a Markovian assumption on the sign process. However, it may be important to capture the dependence structure of the entire process.

In this paper, we consider the entire dependence structure of the vector of signs by taking advantage of copulas. The Theorem of  \citet{sklar1959fonctions} states that there exists a copula $C$ such that
\begin{equation}
F(s_1,\cdots,s_T\mid X)=C(F_1(s_1\mid X),\cdots,F_T(s_T\mid X)),
\end{equation}
where $F$ is a conditional joint cumulative distribution function (CDF hereafter) of the signs vector $\text{\b{S}}=(s(y_1),\cdots,s(y_T))'$ with conditional marginal distribution functions $F_j$ for $j=1,2,\cdots,T$. Copula $C(.)$ is unique for continuous variables, but for discrete variables, it is unique only on the set
\[
\text{Range}(F_1)\times\cdots\times\text{Range}(F_T),
\]
which is the Cartesian product of the ranges of the conditional marginal distribution functions. To illustrate an example of non-uniqueness in the discrete case, let us consider a sample of two discrete binary variables, say $s(y_1)$ and $s(y_2)$, with corresponding conditional marginal distribution functions $F_1$ and $F_2$. We know that $F_j\sim\text{Bernoulli}(p_j)$ for $j=1,2$, such that
\begin{equation}\label{eq: BernoulliCDF}
F_j(s_j\mid X)=\left\{ 
\begin{tabular}{lll}
$0,$ &$\text{for}$& $s_j< 0$ \\ 
$1-p_j,$ &$\text{for}$ &$0\leq s_j< 1$ \\
$1,$& $\text{for}$& $s_j\geq 1$
\end{tabular}%
\right.
\end{equation} 
Thus, $\text{Range}(F_1)=\{0,1-p_1,1\}$ and $\text{Range}(F_2)=\{0,1-p_2,1\}$, with the copula only being unique for $C(1-p_1,1-p_2)$, noting that $C(0,1-p_j)=0$ and $C(1,1-p_j)=1-p_j$ for $j=1,2$. However, this non-uniqueness does not preclude the use of parametric copulas for modelling discrete data [see. \citet{joe1997multivariate}, \citet{song2009joint}]. 

By considering this bivariate example, the conditional p.m.f can be expressed in terms of rectangle probabilities, 
\begingroup
\allowdisplaybreaks
\begin{align*}
P[s(y_1)=s_1,s(y_2)=s_2\mid X]&=P[s_1-1<s(y_1)\leq s_1,s_2-1<s(y_2)\leq s_2\mid X]\\
&=F(s_1,s_2\mid X)-F(s_1-1,s_2\mid X)\\
&\textcolor{white}{=}-F(s_1,s_2-1\mid X)+F(s_1-1,s_2-1\mid X),
\end{align*}
\endgroup
and in turn in terms of copulas as follows
\begingroup
\allowdisplaybreaks
\begin{align*}
P[s(y_1)=s_1,s(y_2)=s_2\mid X]&=F(s_1,s_2\mid X)-F(s_1-1,s_2\mid X)\\
&\textcolor{white}{=}-F(s_1,s_2-1\mid X)+F(s_1-1,s_2-1\mid X)\\
&=C(F_1(s_1\mid X),F_2(s_2\mid X))-C(F_1(s_1-1\mid X),F_2(s_2\mid X))\\
&\textcolor{white}{=}-C(F_1(s_1\mid X),F_2(s_2-1\mid X))+C(F_1(s_1-1\mid X),F_2(s_2-1\mid X)),
\end{align*}
\endgroup
which implies that the $T$-variate conditional likelihood function (\ref{eq: likelihood}) can be expressed in terms of $2^T$ finite differences
\begingroup
\allowdisplaybreaks
\begin{align*}
P[s(y_1)=s_1,\cdots,s(y_T)=s_T\mid X]&=\sum\limits_{i_1=0,1}\cdots\sum\limits_{i_T=0,1}(-1)^{i_1+\cdots+i_T}P[s(y_1)\leq s_1-i_1,\cdots,s(y_T)\leq s_T-i_T\mid X]\\
&=\sum\limits_{i_1=0,1}\cdots\sum\limits_{i_T=0,1}(-1)^{i_1+\cdots+i_T}C(F_1(s_1-i_1\mid X),\cdots,F_T(s_T-i_T\mid X)).
\end{align*}
\endgroup 

Evidently, the calculation of the conditional likelihood function (\ref{eq: likelihood}) using this approach would require $2^T$ \textit{multi\-variate} copula evaluations, which is not computationally feasible. However, by employing the vine PCC introduced later in the paper, we will show that this number can be reduced to only $2T(T-1)$ \textit{bivariate} copula evaluations. The latter method provides us with flexibility, since any multivariate discrete distribution can be decomposed as a vine PCC under a set of conditions that are discussed in the following Section. 
\section{Pair copula constructions of point-optimal sign tests for predictive regressions \label{Point-optimal sign test based on PCC}}
{\hskip 1.5em}In this Section, we derive POS-based tests in the context of linear and nonlinear regression models based on vine PCC decomposition. Following a structure similar to \citet{dufour2010exact}, we first consider the problem of testing whether the conditional median of a vector of observations is zero against a linear regression alternative. We further consider the conditions under which the conditional likelihood function under the alternative hypothesis can be decomposed as a vine PCC, and as such, choose an appropriate vine model. These results are later generalized to test whether the coefficients of a possibly nonlinear median regression function have a given value against an alternative nonlinear median regression.  
\subsection{Testing orthogonality hypothesis in linear predictive regressions \label{Linear}}

{\hskip 1.5em}Consider the problem of testing the null hypothesis of unpredictability (\ref{eq: null}) against the alternative (\ref{eq: alt}), using the test statistic (\ref{eq: teststat}) and given the assumptions (\ref{eq: DGP}) and (\ref{eq: median}). As it was shown in Section \ref{Framework}, under the alternative hypothesis the conditional likelihood function can be expressed as
\begin{equation}\label{eq: likelihood1}
L_1(U(T),\bm{\beta}_1,X)=\prod\limits_{t=1}^{T}P\left[s(y_t)=s_t\mid \text{\b{S}}_{t-1}=\text{\b{s}}_{t-1},X\right].
\end{equation}
Let $s(y_j)$ be a scalar element of $\text{\b{S}}_{t-1}$, with $\text{\b{S}}_{t-1}^{\backslash j}=\text{\b{S}}_{t-1}\backslash s(y_j)$ such that
\[
 \text{\b{S}}_{t-1}^{\backslash j}=\left\{s(y_1),s(y_2),\cdots,s(y_{j-1}),s(y_{j+1}),\cdots,s(y_{t-1})\right\}
\] 
and $s(y_t)\notin \text{\b{S}}_{t-1}$. By choosing a single element of $\text{\b{S}}_{t-1}$, say $s(y_j)$, we would have
\begingroup
\allowdisplaybreaks
\begin{align}\label{eq: bayes}
\begin{split}
P\left[s(y_t)=s_t\mid\text{\b{S}}_{t-1}=\text{\b{s}}_{t-1},X\right]&=\frac{P[s(y_t)=s_t,s(y_j)=s_j\mid\text{\b{S}}_{t-1}^{\backslash j}=\text{\b{s}}_{t-1}^{\backslash j},X]}{P[s(y_j)=s_j\mid\text{\b{S}}_{t-1}^{\backslash j}=\text{\b{s}}_{t-1}^{\backslash j},X]}\\
&=\sum\limits_{k_t=0,1}\sum\limits_{k_j=0,1}(-1)^{k_t+k_j}\times\\
&\textcolor{white}{=}\left\{P[s(y_t)\leq s_t-k_t,s(y_j)\leq s_j-k_j\mid \text{\b{S}}_{t-1}^{\backslash j}=\text{\b{s}}_{t-1}^{\backslash j},X]\right\}\\
&\textcolor{white}{=}/P[s(y_j)=s_j\mid \text{\b{S}}_{t-1}^{\backslash j}=\text{\b{s}}_{t-1}^{\backslash j},X],
\end{split}
\end{align}
\endgroup
where the bivariate conditional probability in (\ref{eq: bayes}) can be expressed in terms of copulas as follows
\begingroup
\allowdisplaybreaks
\begin{align}\label{eq: copulas}
\begin{split}
P[s(y_t)=s_t\mid \text{\b{S}}_{t-1}=\text{\b{s}}_{t-1},X]&=\sum\limits_{k_t=0,1}\sum\limits_{k_j=0,1}(-1)^{k_t+k_j}\bigg\{\\
&\textcolor{white}{=}C_{s(y_t),s(y_j)\mid \text{\b{S}}_{t-1}^{\backslash j} }\left(F_{s(y_t)\mid\text{\b{S}}_{t-1}^{\backslash j}}(s_t-k_t\mid\text{\b{s}}_{t-1}^{\backslash j},X),F_{s(y_j)\mid\text{\b{S}}_{t-1}^{\backslash j}}(s_j-k_j\mid \text{\b{s}}_{t-1}^{\backslash j},X)\right)\bigg\}\\
&\textcolor{white}{=}/P[s(y_j)=s_j\mid \text{\b{S}}_{t-1}^{\backslash j}=\text{\b{s}}_{t-1}^{\backslash j},X].
\end{split}
\end{align}
\endgroup
Further, let $\text{\b{S}}_{t-1}^{\backslash i,j}=\text{\b{S}}_{t-1}^{\backslash j}\backslash s(y_ i)$, such that $s(y_ i)$ is a scalar element of $\text{\b{S}}_{t-1}^{\backslash j}$. Then the arguments $F_{s(y_t)\mid\text{\b{S}}_{t-1}^{\backslash j}}$ and $F_{s(y_j)\mid\text{\b{S}}_{t-1}^{\backslash j}}$ in copula expression (\ref{eq: copulas}) can be presented by the general form
\begingroup
\allowdisplaybreaks
\begin{align}\label{eq: copulas1}
\begin{split}
&F_{s(y_t)\mid s(y_{i}),\text{\b{S}}_{t-1}^{\backslash i,j}}(s_t-k_t\mid s_{i},\text{\b{s}}_{t-1}^{\backslash i,j},X)=\\
&\left\{C_{s(y_t),s(y_ i)\mid \text{\b{S}}_{t-1}^{\backslash i,j}}\left(F_{s(y_t)\mid \text{\b{S}}_{t-1}^{\backslash i,j}}(s_t-k_t\mid \text{\b{s}}_{t-1}^{\backslash i,j},X),F_{s(y_ i)\mid \text{\b{S}}_{t-1}^{\backslash i,j}}(s(y_ i)\mid \text{\b{s}}_{t-1}^{\backslash i,j},X)\right)\right.-\\
&\textcolor{white}{=}\left.C_{s(y_t),s(y_ i)\mid \text{\b{S}}_{t-1}^{\backslash i,j}}\left(F_{s(y_t)\mid \text{\b{S}}_{t-1}^{\backslash i,j}}(s_t-k_t\mid \text{\b{s}}_{t-1}^{\backslash i,j},X),F_{s(y_ i)\mid \text{\b{S}}_{t-1}^{\backslash i,j}}(s(y_ i)-1\mid \text{\b{s}}_{t-1}^{\backslash i,j},X)\right)\right\}\\
&/P[s(y_ i)=s_ i\mid \text{\b{S}}_{t-1}^{\backslash i,j}=\text{\b{s}}_{t-1}^{\backslash i,j},X].
\end{split}
\end{align}
\endgroup
Thus, decomposition (\ref{eq: copulas}), and in turn (\ref{eq: copulas1}) can be applied recursively to the elements of the conditional likelihood function (\ref{eq: likelihood}), such that it is expressed in terms of bivariate copulas. Let $\text{\b{S}}_{t-1}=\{s(y_1),\cdots,s(y_{t-1})\}$ be the variables that $s(y_t)$ for $t=2,\cdots,T$ is conditioned on. We follow \citet{joe2014dependence}, by letting $\underline{\sigma}_{t-1}=\{\sigma(1,t),\cdots,\sigma(t-1,t)\}$ be a permutation of $\text{\b{S}}_{t-1}$, such that $s(y_t)$ is paired sequentially first with $\sigma(1,t)$, then $\sigma(2,t)$ and finally $\sigma(t-1,t)$, where in the $r^{\text{th}}$ step $(2\leq r\leq t-1)$, $\sigma(r,t)$ is paired to $t$ conditional on $\sigma(1,t),\cdots,\sigma(r-1,t)$. For $n\leq3$ (i.e. $t=2,3$) there are only three possible permutations with $\underline{\sigma}_1=\{s(y_1)\}$ for $t=2$, and $\underline{\sigma}_2=\{s(y_1),s(y_2)\}$, as well as $\underline{\sigma}_2=(s(y_2),s(y_1))$ for $t=3$ respectively. Therefore, under assumptions (\ref{eq: DGP}) and (\ref{eq: median}%
), and with $T\leq 3$, let $H_{0}$ and $H_{1}$ be defined by (\ref{eq: null}) - (\ref%
{eq: alt}), then the Neyman-Pearson type test-statistic based on the signs $(s(y_1),\cdots,s(y_T))'$ can be expressed as
\begingroup
\allowdisplaybreaks
\begin{align*}
SL_{T}(\bm{\beta}_{1})=\ln P\left[s(y_1)=s_1\mid X\right]&+\sum\limits_{t=2}^{T}\ln\Delta_{s_t^-}^{s_t^+}\Delta_{s_{t-1}^-}^{s_{t-1}^+}C_{t,t-1\mid t-2}\\
&-\sum\limits_{t=2}^{T}\ln P[s(y_{t-1})=s_{t-1}\mid \text{\b{S}}_{t-2}=\text{\b{s}}_{t-2},X]-T\ln\left\{\frac{1}{2}\right\},
\end{align*}
\endgroup
for $t=2,\cdots,T$, where
\begingroup
\allowdisplaybreaks
\begin{align*}
\Delta_{s_t^-}^{s_t^+}\Delta_{s_{t-1}^-}^{s_{t-1}^+}C_{t,t-1\mid t-2}&=\sum\limits_{k_t=0,1}\sum\limits_{k_{t-1}=0,1}(-1)^{k_t+k_{t-1}}\\
&\textcolor{white}{\times}\times \left(C_{s(y_t),s(y_{t-1})\mid \text{\b{S}}%
_{t-2}}\left(F_{s(y_t)\mid \text{\b{S}}%
_{t-2} }(s_t-k_t\mid \text{\b{s}}%
_{t-2},X),F_{s(y_{t-1})\mid \text{\b{S}}%
_{t-2}}(s_{t-1}-k_{t-1}\mid \text{\b{s}}%
_{t-2},X)\right)\right)
\end{align*}
\endgroup
and such that
\begingroup
\allowdisplaybreaks
\begin{align*}
\ln P\left[s(y_1)=s_1\mid \text{\b{S}}_{0}=\text{\b{s}}_{0},X\right]=s(y_1)\ln \left\{\frac{P[y_1\geq 0\mid X]}{P[y_1< 0\mid X]}\right\}+\ln P[y_1< 0\mid X].
\end{align*}
\endgroup

For $T>3$, the permutations $\underline{\sigma}_{t-1}$ are dependent on the choice of the permutations at stages $3,\cdots,t-1$. Therefore, an issue that requires considerable attention is whether there \textit{exists} a decomposition such as the one considered in the earlier example for $T>3$. Furthermore, the conditional likelihood function expressed in terms of bivariate copulas by recursively using (\ref{eq: copulas}) and (\ref{eq: copulas1}), assumes that a single copula is specified for each conditional bivariate distribution $F_{s(y_t),s(y_j)\mid \text{\b{S}}_{t-1}^{\backslash j}}$ in decomposition (\ref{eq: likelihood1}) over all possible values of $\text{\b{S}}_{t-1}^{\backslash j}$. This implies that the copula is unique for the Cartesian product of the ranges of conditional CDFs $F_{s(y_t)\mid \text{\b{S}}_{t-1}^{\backslash j}}$ and $F_{s(y_j)\mid \text{\b{S}}_{t-1}^{\backslash j}}$. Therefore, the decomposition must be such that each conditional bivariate distribution in said vine has a constant conditional copula [see \citet{panagiotelis2012pair}]. For a constant conditional copula to exist, the following conditions outlined by \citet{panagiotelis2012pair} must be satisfied.

\begin{definition}[Existence of constant conditional copula] Consider the conditional bivariate distribution function $F_{s(y_t),s(y_j)\mid \text{\b{S}}_{t-1}^{\backslash j}}$. We say that a copula $C=C_{s(y_t),s(y_j)\mid\text{\b{S}}_{t-1}^{\backslash j}}$ is constant over all possible values of $\text{\b{S}}_{t-1}^{\backslash j}$ if
\[
\sum\limits_{m=0,1}\sum\limits_{n=0,1}(-1)^{m+n}C(a_{k-m},b_{l-n})\geq 0,\quad \forall k,l\in\{1,2\}\times\{1,2\},
\] 
where $a_0<a_1<a_2$ and $b_0<b_1<b_2$, are the distinct points corresponding to the ranges of the conditional Bernoulli CDFs $F_{s(y_t)\mid \text{\b{S}}_{t-1}^{\backslash j}}$ and $F_{s(y_j)\mid \text{\b{S}}_{t-1}^{\backslash j}}$ respectively, such that $a_0=b_0=0$ and $a_{2}=b_{2}=1$, and where further, the following constraints are satisfied:
\begingroup
\allowdisplaybreaks
\begin{align*}
&C_{s(y_t),s(y_j)\mid \text{\b{S}}_{t-1}^{\backslash j}}\left(a_{s(y_t)\mid\text{\b{S}}_{t-1}^{\backslash j}},b_{s(y_j)\mid\text{\b{S}}_{t-1}^{\backslash j}}\right)=P[s(y_t)\leq s_t,s(y_j)\leq s_j\mid \text{\b{S}}_{t-1}^{\backslash j}=\text{\b{s}}_{t-1}^{\backslash j},X],\\
&C_{s(y_t),s(y_j)\mid  \text{\b{S}}_{t-1}^{\backslash j}}\left(1,b_{s(y_j)\mid\text{\b{S}}_{t-1}^{\backslash j}}\right)=b_{s(y_j)\mid\text{\b{S}}_{t-1}^{\backslash j}},\quad C_{s(y_t),s(y_j)\mid  \text{\b{S}}_{t-1}^{\backslash j}}\left(a_{s(y_t)\mid\text{\b{S}}_{t-1}^{\backslash j}},1\right)=a_{s(y_t)\mid\text{\b{S}}_{t-1}^{\backslash j}},
\end{align*}
\endgroup
with $a_{s(y_t)\mid\text{\b{S}}_{t-1}^{\backslash j}}:=P[s(y_t)\leq s_t\mid \text{\b{S}}_{t-1}^{\backslash j}=\text{\b{s}}_{t-1}^{\backslash j},X]$ and $b_{s(y_j)\mid\text{\b{S}}_{t-1}^{\backslash j}}:=P[s(y_j)\leq s_j\mid \text{\b{S}}_{t-1}^{\backslash j}=\text{\b{S}}_{t-1}^{\backslash j},X]$.
\end{definition}

To satisfy the above conditions, the vine decomposition must be such that the strength of the dependence of the conditional bivariate distribution does not vary much across different values of the conditioning set [see \citet{panagiotelis2012pair}]. As we are dealing with time-series data, the D-vine decomposition yields a constant dependence structure over different values of $\text{\b{S}}_{t-1}^{\backslash j}$, and is thus, the most appropriate and intuitive choice for the decomposition of the conditional likelihood function (\ref{eq: likelihood1}). 

The D-vine PCC (figure \ref{fig: D-vineC3}) is constructed by $T-1$ trees, say $D=\{\mathcal{T}_1,\cdots,\mathcal{T}_{T-1}\}$, comprised of the edges $\xi(D)=\xi(\mathcal{T}_1)\cup\cdots\cup \xi(\mathcal{T}_{T-1})$, where $\xi(\mathcal{T}_l)$ refers to the edges of the tree $\mathcal{T}_l$. In the first tree $T_1$, the marginals $F(s_1\mid X),F(s_2\mid X),\cdots,F(s_T\mid X)$, are arranged as nodes in consecutive order, say $N(\mathcal{T}_1):=\{1,2,\cdots,T-1,T\}$, where the nodes are of degree two, meaning that no more than two edges is connected to each node. The corresponding edges join the adjacent nodes, such that $\xi(\mathcal{T}_1):=\{12,23,\cdots,(T-1,T)\}$. Next, the edges of the first tree $\xi(\mathcal{T}_1)$ become the nodes of $\mathcal{T}_2$, a process which is completed in a recursive manner, such that $N(\mathcal{T}_{l+1})=\xi(\mathcal{T}_l)$, with the edges of each tree joining the adjacent nodes, and with the mutual elements between the nodes becoming the conditioning set.  
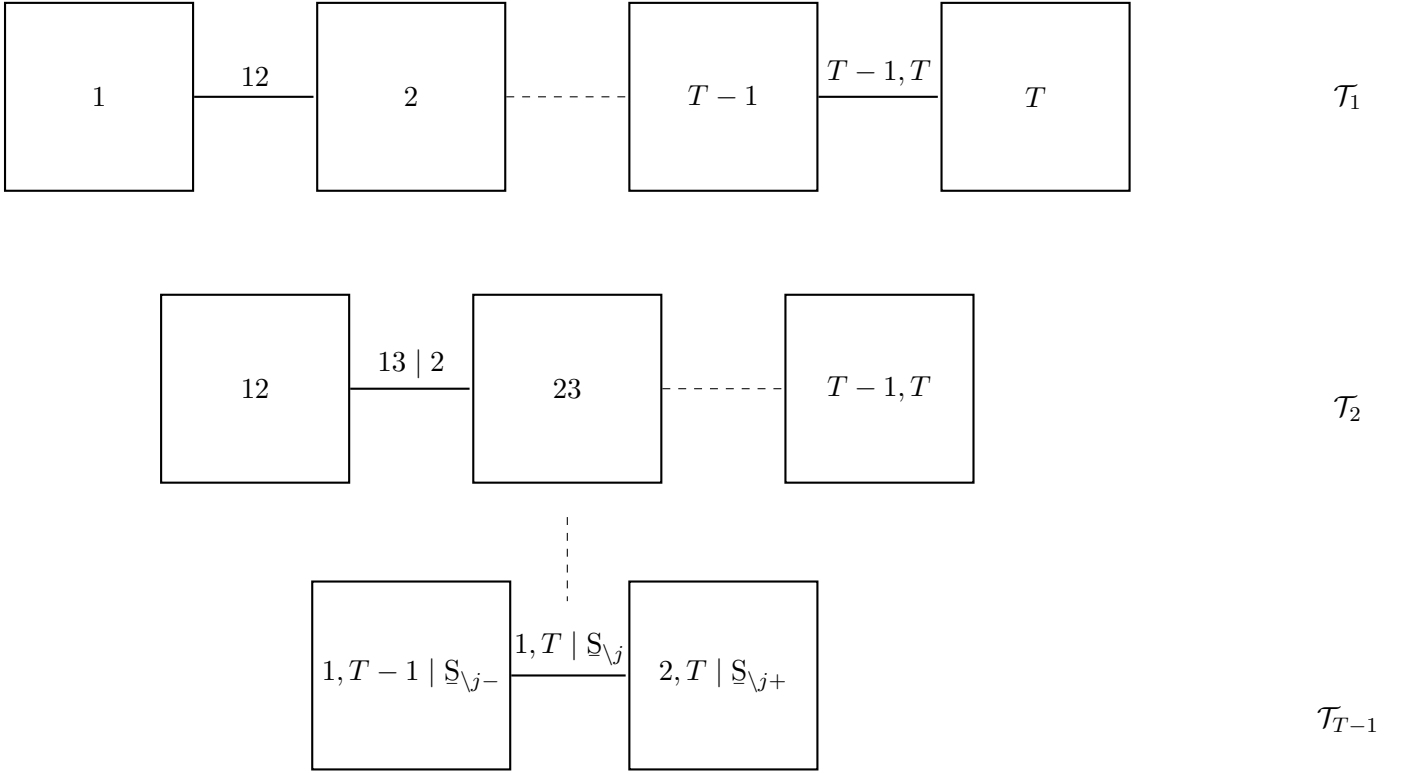
\begin{figure}[hbtp!]
\caption{D-vine PCC for the $T$-variate case}
\begin{center}
\begin{tikzpicture}[shorten >=1pt,node distance=4.15cm,auto]
\tikzstyle{state}=[shape=rectangle,thick,draw,minimum size=2.5cm]

\node[state] (1) {$1$};
\node[state,right of=1] (2) {$2$};
\node[state,right of=2] (T-1) {$T-1$};
\node[state,right of=T-1] (T) {$T$};
\node[right of=T] (T1) {$\mathcal{T}_1$};

\path[draw,thick]
(1) edge node (12) {$12$} (2) 
(T-1) edge node (T-1T){$T-1,T$} (T);

\path [draw,dashed]
(2) edge node (2T-1) {$$} (T-1);

\node[state,below of=12] (12T) {$12$};
\node[state,right of=12T] (23T) {$23$};
\node[state,right of=23T] (T-1TT) {$T-1,T$};
\node[below of=T1] (T2) {$\mathcal{T}_2$};

\path[draw,thick]
(12T) edge node (132) {$13\mid 2$} (23T); 

\path [draw,dashed]
(23T) edge node (34T-1T) {$$} (T-1TT);

\node[state,below of=132] (l1) {$1,T-1\mid \text{\b{S}}_{\backslash j-}$};
\node[state,right of=l1] (l2) {$2,T\mid \text{\b{S}}_{\backslash j+}$};
\node[below of=T2] (TT) {$\mathcal{T}_{T-1}$};
\path[draw,thick]
(l1) edge node (last) {$1,T\mid \text{\b{S}}_{\backslash j} $} (l2); 

\path[draw,dashed]
($($ (23T) !0.5! ($ (l1) !0.5! (l2) $) $)!0.20cm!(23T)$) edge node[near end] {$$} ($ ($ (l1) !0.5! (l2) $) !0.25! (23T) $); 

\end{tikzpicture}
\end{center}
\doublespacing
Note: D-vine for a sample size $T$ consists of $T-1$ trees. The first tree consists of the conditional marginals ordered consecutively as nodes, with the edges connecting the adjacent nodes, and with the elements shared by the two nodes going in the conditioning set. The edges of each tree $\mathcal{T}_l$ become the nodes of the tree $\mathcal{T}_{l+1}$. In this figure, $\text{\b{S}}_{\backslash i-}$ and  $\text{\b{S}}_{\backslash j+}$ correspond to the elements $\text{\b{S}}_{\backslash j-}:=\{2,3,\cdots,T-2\}$ and $\text{\b{S}}_{\backslash j+}:=\{3,\cdots,T-1\}$ respectively, with $\text{\b{S}}_{\backslash j}:=\{2,3,\cdots,T-1\}$.
\label{fig: D-vineC3}
\end{figure}

To express the conditional likelihood function (\ref{eq: likelihood1}) as a D-vine, we begin by calculating the conditional marginals $F_1,\cdots,F_T$, where $F_t\sim Bernoulli(p_t)$, for $t=1,\cdots,T$, with conditional CDFs that are expressed as (\ref{eq: BernoulliCDF}). Therefore, under assumptions (\ref{eq: DGP}) and (\ref{eq: median}), we have
\begin{equation}\label{eq: BernoulliCDF2}
F_t(s_t\mid X)=\left\{ 
\begin{tabular}{lll}
$0,$ &$\text{for}$& $s_t< 0$ \\ 
$1-P[\varepsilon_t\geq  -\beta' x_{t-1}\mid X],$ &$\text{for}$ &$0\leq s_t< 1$ \\
$1,$& $\text{for}$& $s_t\geq 1$
\end{tabular}%
\right.,\quad t=1,\cdots,T.
\end{equation} 
\begin{sloppypar}
Once the marginals are obtained, the next step consists of evaluating the copulas in the first tree - i.e. $C_{12}(F_1,F_2),\cdots,C_{T-1,T}(F_{T-1},F_T)$, corresponding to the edges $\xi(\mathcal{T}_1)$). In the second tree, the copulas $C_{13\mid 2}(F_{1\mid 2},F_{3\mid 2}),\cdots,C_{T-2,T \mid T-1}(F_{T-2\mid T-1},F_{T\mid T-1})$ are evaluated, then $C_{14\mid 23}(F_{1\mid 23},F_{4\mid 23}),\cdots,C_{T-3,T\mid T-2,T-1}(F_{T-3\mid T-2,T-1},F_{T\mid T-2,T-1})$ in the third tree, and so on. 
\end{sloppypar}

In the case of continuous variables, say $\{s^*(y_t)\in \mathbb{R},t=1,\cdots,T\}$, the construction of the D-vine involves an iterative copula evaluation process for the trees $\mathcal{T}_1,\cdots,\mathcal{T}_{T-1}$. This leads to $T(T-1)/2$ copula evaluations, which correspond to one copula evaluation for each edge [see Appendix]. On the other hand, for discrete variables, the conditional p.m.fs are expressed as in (\ref{eq: copulas}), which requires the evaluation of the following four copulas
\begingroup
\allowdisplaybreaks
\begin{align*}
C_{t,j\mid \mathbf{\backslash j}}^{++}(F_{t\mid \mathbf{\backslash j}}^+,F_{j\mid \mathbf{\backslash j}}^+),\quad C_{t,j\mid \mathbf{\backslash j}}^{+-}(F_{t\mid \mathbf{\backslash j}}^+,F_{j\mid \mathbf{\backslash j}}^-),\\
C_{t,j\mid \mathbf{\backslash j}}^{-+}(F_{t\mid \mathbf{\backslash j}}^-,F_{j\mid \mathbf{\backslash j}}^+),\quad C_{t,j\mid \mathbf{\backslash j}}^{--}(F_{t\mid \mathbf{\backslash j}}^-,F_{j\mid \mathbf{\backslash j}}^-),
\end{align*}
\endgroup
where $F_{t\mid \mathbf{\backslash j}}^+=P[s(y_t)\leq s_t \mid\text{\b{S}}_{t-1}^{\backslash j}=\text{\b{s}}_{t-1}^{\backslash j},X]$ and $F_{t\mid \mathbf{\backslash j}}^-=P[s(y_t)\leq s_t-1 \mid\text{\b{S}}_{t-1}^{\backslash j}=\text{\b{s}}_{t-1}^{\backslash j},X]$. Henceforth, $4\times T(T-1)/2$ bivariate copulas need to be evaluated in the case of discrete data. 

Let us express the conditional joint p.m.f of the signs as follows
\begin{equation}
P_1[s(y_1)=s_1\mid X]\times\prod\limits_{t=2}^{T}P_{t\mid 1:{t-1}}[s(y_t)=s_t\mid s(y_1)=s_1,\cdots,s(y_{t-1})=s_{t-1},X],
\end{equation}
where following the results by \citet{stoeber2013simplified}, if the D-vine is expressed as a vine-array $A=(\sigma_{lt})_{1\leq l\leq t\leq T}$, where $l=2,\cdots,T-1$ is the row with tree $\mathcal{T}_l$, and column $t$ has the permutation $\underline{\sigma}_{t-1}=(\sigma(1,t),\cdots,\sigma(t-1,t))$ of the previously added variables, such that
\begin{equation*}
\begin{bmatrix}
- & 12 & 23 &34 &\cdots & T-1,T\\
 &-&13\mid2&24\mid 3&\cdots&T-2,T\mid T-1\\
& & \ddots&\cdots&\cdots&\vdots\\
&&&-&1,T-1\mid \text{\b{S}}_{\backslash j-}&2,T \mid \text{\b{S}}_{\backslash j+}\\
&&&&-&1,T\mid \text{\b{S}}_{\backslash j}\\
&&&&&-
\end{bmatrix},\quad
A=
\begin{bmatrix}
1 & 1 & 2 &3 &\cdots & T-1\\
 &2&1&2&\cdots&T-2\\
& & \ddots&\cdots&\cdots&\vdots\\
&&&T-2&1&2\\
&&&&T-1&1\\
&&&&&T
\end{bmatrix}
\end{equation*}
then 
\begin{equation}\label{eq: stober}
P_{t\mid 1:{t-1}}[s(y_t)=s_t\mid s(y_1)=s_1:s(y_{t-1})=s_{t-1}, X]=\left\{\prod\limits_{l=t-1}^{2} c_{\sigma_{lt}t,\mid \sigma_{1t},\cdots,\sigma_{t-1,t}}\right\}\times c_{\sigma_{lt}t}\times P_t[s(y_t)=s_t\mid X],
\end{equation}
where following \citet{joe2014dependence}, the copula densities in expression (\ref{eq: stober}) are calculated by
\begin{equation}
c_{t,j\mid \mathbf{\backslash j}}=\frac{C_{t,j\mid \mathbf{\backslash j}}^{++}(F_{t\mid \mathbf{\backslash j}}^+,F_{j\mid \mathbf{\backslash j}}^+)-C_{t,j\mid \mathbf{\backslash j}}^{-+}(F_{t\mid \mathbf{\backslash j}}^-,F_{j\mid \mathbf{\backslash j}}^+)-C_{t,j\mid \mathbf{\backslash j}}^{+-}(F_{t\mid \mathbf{\backslash j}}^+,F_{j\mid \mathbf{\backslash j}}^-)+C_{t,j\mid \mathbf{\backslash j}}^{--}(F_{t\mid \mathbf{\backslash j}}^-,F_{j\mid \mathbf{\backslash j}}^-)}{P_{t\mid \mathbf{\backslash j}}[s(y_t)=s_t\mid\text{\b{S}}_{t-1}^{\backslash j}=\text{\b{s}}_{t-1}^{\backslash j},X]P_{j\mid \mathbf{\backslash j}}[s(y_j)=s_j\mid\text{\b{S}}_{t-1}^{\backslash j}=\text{\b{s}}_{t-1}^{\backslash j},X]},
\end{equation}
which leads to the following proposition.
\begin{proposition}\label{coroll1}
Let $A=(\sigma_{lt})_{1\leq l\leq t\leq T}$ be a D-vine array for the signs $s(y_1),\cdots,s(y_T)$. Under assumptions (\ref{eq: DGP}) and (\ref{eq: median}%
), let $H_{0}$ and $H_1$ be defined by (\ref{eq: null}) - (\ref%
{eq: alt})$,$ 
\begingroup
\allowdisplaybreaks
\begin{align*}
SL_{T}(\bm{\beta}_{1})=\sum\limits_{t=2}^{T}\sum\limits_{l=t-1}^{2}\ln c_{\sigma_{lt}t,\mid \sigma_{1t},\cdots,\sigma_{t-1,t}}+\sum\limits_{t=2}^{T}\ln c_{\sigma_{1t}t}+\sum\limits_{t=1}^{T}s(y_t)a_t(\bm{\beta}_1)>c_1(\bm{\beta}_1),
\end{align*}%
\endgroup
where 
\[
a_t(\bm{\beta}_1)=\ln\left\{\frac{1-P_t[\varepsilon_t\leq -\bm{\beta}' \bm{x}_{t-1}\mid X]}{P_t[\varepsilon_t\leq -\bm{\beta}'\bm{ x}_{t-1}\mid X]}\right\},
\]
and suppose the constant $c_{1}(\bm{\beta}_{1})$ satisfies $P\left[SL_T(\bm{\beta}_1)>c_{1}(\bm{\beta}_1)\right] =\alpha $
under $H_{0},$ with $0<\alpha <1.$ Then the test that rejects $H_{0}$ when 
\begin{equation}
SL_{T}(\bm{\beta}_{1})>c_{1}(\bm{\beta}_1) \label{eq: SLn critical region}
\end{equation}%
is most powerful for testing $H_{0}$ against $H_1$ among level-$\alpha $
tests based on the signs $\big(s(y_{1}),\cdots,s(y_{T})\big)%
'.$
\end{proposition}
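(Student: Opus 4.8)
The plan is to show that the statistic displayed in the proposition is, conditionally on $X$, equal to the log-likelihood ratio $SL_T(\bm{\beta}_1)=\ln\{L_1(U(T),\bm{\beta}_1,X)/L_0(U(T),\bm{0},X)\}$ of (\ref{eq: teststat}) minus a term that does not depend on the realized signs. Since that ratio is already most powerful among level-$\alpha$ sign-based tests by the Neyman--Pearson lemma, the optimality asserted here will follow at once, provided the algebra matches. All manipulations are performed conditionally on $X$, so any quantity measurable with respect to $X$ and the fixed alternative $\bm{\beta}_1$, but not depending on the signs, is to be treated as a constant and moved into the critical value.

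First I would substitute the D-vine factorization (\ref{eq: stober}) of each one-step conditional mass function into $L_1=\prod_{t=1}^{T}P[s(y_t)=s_t\mid\text{\b{S}}_{t-1}=\text{\b{s}}_{t-1},X]$ and take logarithms. Collecting the copula-density factors reproduces the double sum $\sum_{t=2}^{T}\sum_{l=t-1}^{2}\ln c_{\sigma_{lt}t,\mid\sigma_{1t},\cdots,\sigma_{t-1,t}}$ coming from trees $\mathcal{T}_2,\cdots,\mathcal{T}_{T-1}$ and the single sum $\sum_{t=2}^{T}\ln c_{\sigma_{1t}t}$ from the first tree, while the residual marginal factors accumulate as $\sum_{t=1}^{T}\ln P_t[s(y_t)=s_t\mid X]$. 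The care needed at this stage is combinatorial: using the vine array $A$, I must verify that the recursive application of (\ref{eq: copulas}) and (\ref{eq: copulas1}) leaves each conditional copula density appearing exactly once with the correct conditioning set, and that the factorization is licit in that every bivariate conditional copula in the D-vine is constant over its conditioning set in the sense of the Definition.

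Next I would reduce the marginal sum. Under the alternative $y_t=\bm{\beta}_1'\bm{x}_{t-1}+\varepsilon_t$, the conditional marginal is $F_t\sim\text{Bernoulli}(p_t)$ with $p_t=P[\varepsilon_t\geq-\bm{\beta}_1'\bm{x}_{t-1}\mid X]$ by (\ref{eq: BernoulliCDF2}), so $P_t[s(y_t)=s_t\mid X]=p_t^{s(y_t)}(1-p_t)^{1-s(y_t)}$ and hence $\ln P_t[s(y_t)=s_t\mid X]=s(y_t)a_t(\bm{\beta}_1)+\ln(1-p_t)$. Continuity of $\varepsilon_t\mid X$ forbids an atom at $-\bm{\beta}_1'\bm{x}_{t-1}$, so the log-odds $\ln\{p_t/(1-p_t)\}$ coincides with the quantity $a_t(\bm{\beta}_1)$ in the statement. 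Summing over $t$ therefore contributes the data-dependent term $\sum_{t=1}^{T}s(y_t)a_t(\bm{\beta}_1)$ together with the sign-independent remainder $\sum_{t=1}^{T}\ln(1-p_t)$.

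Finally I would assemble $SL_T(\bm{\beta}_1)=\ln L_1-T\ln(1/2)$ using $L_0=(1/2)^T$. The copula sums and the term $\sum_{t=1}^{T}s(y_t)a_t(\bm{\beta}_1)$ are precisely the left-hand side of the proposition, whereas $\sum_{t=1}^{T}\ln(1-p_t)-T\ln(1/2)$ is a constant given $X$; absorbing it into the threshold defines $c_1(\bm{\beta}_1)$ so that the rejection region $\{SL_T(\bm{\beta}_1)>c_1(\bm{\beta}_1)\}$ is identical to the Neyman--Pearson region $\{\ln(L_1/L_0)>c\}$. Choosing $c_1(\bm{\beta}_1)$ so that the null rejection probability equals $\alpha$ then fixes the level, and the most-powerful property is inherited verbatim from (\ref{eq: teststat}). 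I expect the only genuine obstacle to be the bookkeeping in the second step, rather than any new probabilistic argument: confirming that the vine-array indexing in (\ref{eq: stober}) enumerates each copula factor once and only once, and that the constant-conditional-copula hypothesis of the Definition legitimizes the decomposition for the sign dependence considered here.
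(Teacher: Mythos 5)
Your proposal is correct and takes essentially the same route as the paper's own proof: both substitute the D-vine factorization of the conditional likelihood under $H_1$, write the Bernoulli marginals so that $\ln P_t[s(y_t)=s_t\mid X]=s(y_t)a_t(\bm{\beta}_1)+\ln\left(1-p_t\right)$, absorb the sign-independent remainder (together with $-T\ln(1/2)$ from $L_0$) into the critical value, and conclude by the Neyman--Pearson lemma. The only difference is presentational: the paper invokes the vine-array decomposition directly without re-verifying the bookkeeping and constant-conditional-copula conditions that you rightly flag as the hypotheses legitimizing the factorization.
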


Under the null hypothesis, the signs $s(y_1),\cdots,s(y_T)$ are i.i.d. according to Bernoulli $Bi(1,0.5)$, with the distribution of $SL_T(\bm{\beta}_1)$ only depending on the weights $a_t(\bm{\beta}_1)$, without the presence of any nuisance parameters. Assumption (\ref{eq: median}) implies that tests based on $SL_{T}(\bm{\beta}_{1})$, such as the test given by (\ref{eq: SLn critical region}), are distribution-free and robust against heteroskedasticity of unknown form. On the other hand, under the alternative hypothesis, the power function of the test depends on the form of the distribution of $\varepsilon_t$. A special case is where $\varepsilon_1,\cdots,\varepsilon_T$ are independently distributed according to $N(0,1)$, which leads to the optimal test statistic assuming the following form
\[
SL_{T}(\bm{\beta} _{1})=\sum\limits_{t=2}^{T}\sum\limits_{l=t-1}^{2}\ln c_{\sigma_{lt}t,\mid \sigma_{1t},\cdots,\sigma_{t-1,t}}+\sum\limits_{t=2}^{T}\ln c_{\sigma_{1t}t}+\sum\limits_{t=1}^{T}s(y_t)a_t(\bm{\beta}_1)>c_1(\bm{\beta}_1),
\]
where 
\[
a_t(\bm{\beta}_1)=\ln\left\{\frac{\Phi(\bm{\beta}' \bm{x}_{t-1})}{1-\Phi(\bm{\beta}' \bm{x}_{t-1})}\right\},
\]
where $\Phi(.)$ is the standard normal distribution function. The distribution of $SL_{T}(\bm{\beta}_{1})$ can be simulated under the null hypothesis with sufficient number of replications, and the critical values can be obtained to any degree of precision.
\subsection{Testing general full coefficient hypothesis in nonlinear predictive regressions \label{nonlinear}}
{\hskip 1.5em}We now consider the nonlinear predictive regression model%
\begin{equation}
y_{t}=f(\bm{x}_{t-1},\bm{\beta} )+\varepsilon_{t},\text{ }t=1,\cdots,\,T\text{,}
\label{eq: modelnl}
\end{equation}%
where $\bm{x}_{t-1}$ is a $(k+1)\times 1$ vector of stochastic
explanatory variables, such that $\bm{x}_{t-1}=[1,x_{1,t-1},\cdots,x_{k,t-1}]'$, $f(\,\cdot \,)$ is a scalar function, $\bm{\beta} \in 
\mathbb{R}^{(k+1)}$ is an unknown vector of parameters and
\begin{equation*}
\varepsilon_{t}\mid X \sim F_{t}(.\mid X)
\end{equation*}%
where as before $F_{t}(.\mid X)$ is a distribution function and $X=[\bm{x}_0',\cdots,\bm{x}_{T-1}']$ is an $T\times (k+1)$ matrix. Suppose that the error process $\{\varepsilon_t,t=1,2,\cdots\}$ is a strict conditional mediangale, such that

\begin{equation}\label{eq: mediane1}
P[\varepsilon_{t}> 0\mid \bm{{\varepsilon}}_{t-1},X]=P[\varepsilon_{t}<0\mid \bm{\varepsilon}_{t-1},X]=\frac{1}{2},
\end{equation}%
with
\[
\bm{\varepsilon}_{0}=\{\emptyset\},\quad\bm{\varepsilon}_{t-1}=\{\varepsilon_1,\cdots,\varepsilon_{t-1}\},\quad\text{for}\quad t\geq2
\]
and where (\ref{eq: mediane1}) entails that $\varepsilon_t\mid X$ has no mass at zero, \emph{i.e.} $P[\varepsilon_t=0\mid X]$=0 for all $t$. We do not require that the
parameter vector $\bm{\beta}$ be identified. 

We consider the problem of testing the null hypothesis
\begin{equation}\label{eq: nullnl}
H(\bm{\beta}_0):\bm{\beta}=\bm{\beta}_0,
\end{equation}
against the alternative hypothesis
\begin{equation}\label{eq: altnl}
H(\bm{\beta}_1):\bm{\beta}=\bm{\beta}_1,
\end{equation}
We construct a test statistic for testing $H(\beta_0)$ against $H(\beta_1)$ in a similar manner to Section \ref{Linear}, by transforming model (\ref{eq: modelnl}) to 
\[
\tilde{y}_t=g(\bm{x}_{t-1},\bm{\beta},\bm{\beta}_0)+\varepsilon_t,\quad t=1,\cdots,T
\]
where $\tilde{y}_t=y_t-f(\bm{x}_{t-1},\bm{\beta}_0)$ and $g(\bm{x}_{t-1},\bm{\beta},\bm{\beta}_0)=f(\bm{x}_{t-1},\bm{\beta})-f(\bm{x}_{t-1},\bm{\beta}_0)$. Notice that testing $H(\bm{\beta}_0)$ against $H(\bm{\beta}_1)$ is equivalent to testing
\[
\bar{H}_0:g(\bm{x}_{t-1}, \bm{\beta},\bm{\beta}_0)=\bm{0},\quad\text{for}\quad t=1,\cdots,T
\]
where $\bm{0}$ is a $(k+1)\times 1$ zero vector, against the alternative
\[
\bar{H}_A: g(\bm{x}_{t-1},\bm{\beta},\bm{\beta}_0)=f(\bm{x}_{t-1},\bm{\beta}_1)-f(x_{t-1},\bm{\beta}_0),\quad\text{for}\quad t=1,\cdots,T.
\]
For $\tilde{U}(T)=(s(\tilde{y}_1),\cdots,s(\tilde{y}_T))'$, where for $1\leq t\leq T$
\begin{equation*}
s(\tilde{y}_{t})=\left\{ 
\begin{tabular}{l}
$1,$ $if$ $\tilde{y}_{t}\geq 0$ \\ 
$0,$ $if$ $\tilde{y}_{t}<0$%
\end{tabular}%
\ \right. \text{.}
\end{equation*}
As before, the conditional joint p.m.f of the process of signs is expressed as
\begin{equation}\label{eq: IDKANYMORE}
P_1[s(\tilde{y}_1)=\tilde{s}_1\mid X]\times\prod\limits_{t=2}^{T}P_{t\mid 1:{t-1}}[s(\tilde{y}_t)=\tilde{s}_t\mid s(\tilde{y}_1)=\tilde{s}_1,\cdots,s(\tilde{y}_{t-1})=\tilde{s}_{t-1},X].
\end{equation}
Furthermore, the D-vine-array $\tilde{A}=(\tilde{\sigma}_{lt})_{1\leq l\leq t\leq T}$, is such that $l=2,\cdots,T-1$ is the row with tree $\mathcal{T}_l$, and column $t$ has the permutation $\tilde{\underline{\sigma}}_{t-1}=(\tilde{\sigma}_{1t},\cdots,\tilde{\sigma}_{t-1,t})$ of the previously added variables. Then
\begin{equation}\label{eq: stober2}
P_{t\mid 1:{t-1}}[s(\tilde{y}_t)=\tilde{s}_t\mid s(\tilde{y}_1)=\tilde{s}_1,\cdots,s(\tilde{y}_{t-1})=\tilde{s}_{t-1},X]=\left\{\prod\limits_{l=t-1}^{2} c_{\tilde{\sigma}_{lt}t,\mid \tilde{\sigma}_{1t},\cdots,\tilde{\sigma}_{t-1,t}}\right\}\times c_{\tilde{\sigma}_{lt}t}\times P_t[s(\tilde{y}_t)=\tilde{s}_t\mid X],
\end{equation}
which leads to the following corollary.
\begin{corollary}\label{coroll2}
 Let $\tilde{A}=(\tilde{\sigma}_{lt})_{1\leq l\leq t\leq T}$ be a D-vine array for the signs $s(\tilde{y}_1),\cdots,s(\tilde{y}_T)$. Under assumptions (\ref{eq: modelnl}) and (\ref{eq: median}%
), let $H(\bm{\beta}_{0})$ and $H(\bm{\beta}_{1})$ be defined by (\ref{eq: nullnl}) - (\ref%
{eq: altnl})$,$ 
\begingroup
\allowdisplaybreaks
\begin{align*}
SN_{T}(\bm{\beta}_0\mid\bm{\beta}_{1})=\sum\limits_{t=2}^{T}\sum\limits_{l=t-1}^{2}\ln c_{\tilde{\sigma}_{lt}t,\mid \tilde{\sigma}_{1t},\cdots,\tilde{\sigma}_{t-1,t}}+\sum\limits_{t=2}^{T}\ln c_{\tilde{\sigma}_{1t}t}+\sum\limits_{t=1}^{T}s(y_t-f(\bm{x}_{t-1},\bm{\beta}_0))\tilde{a}_t(\bm{\beta}_0\mid\bm{\beta}_1)>c_1(\bm{\beta}_0,\bm{\beta}_1),
\end{align*}%
\endgroup
where 
\[
\tilde{a}_t(\bm{\beta}_0\mid\bm{\beta}_1)=\ln\left\{\frac{1-p_t[\bm{x}_{t-1},\bm{\beta}_0,\bm{\beta}_1\mid X]}{p_t[\bm{x}_{t-1},\bm{\beta}_0,\bm{\beta}_1\mid X]}\right\},\quad p_t[\bm{x}_{t-1},\bm{\beta}_0,\bm{\beta}_1\mid X]=P_t[\varepsilon_t\leq f(\bm{x}_{t-1},\bm{\beta}_0)-f(\bm{x}_{t-1},\bm{\beta}_1)\mid X]
\]
and suppose the constant $c_{1}(\bm{\beta}_0,\bm{\beta}_{1})$ satisfies the constraint $P\left[SN_T(\bm{\beta}_0\mid\bm{\beta}_1)>c_{1}(\bm{\beta}_0,\bm{\beta}_1)\right] =\alpha $
under $H(\bm{\beta}_{0}),$ with $0<\alpha <1.$ Then the test that rejects $H(\bm{\beta}_{0})$ when 
\begin{equation}
SN_{T}(\bm{\beta}_0\mid\bm{\beta}_{1})>c_{1}(\bm{\beta}_0,\bm{\beta}_1) \label{eq: SLn critical region}
\end{equation}%
is most powerful for testing $H(\bm{\beta}_{0})$ against $H(\bm{\beta}_{1})$ among level-$\alpha $
tests based on the signs $\big(s(\tilde{y}_{1}),\cdots,s(\tilde{y}_{T})\big)%
'.$

\end{corollary}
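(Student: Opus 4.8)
The plan is to reduce the nonlinear testing problem to the linear-case argument behind Proposition \ref{coroll1} by working throughout with the transformed response $\tilde{y}_t = y_t - f(\bm{x}_{t-1},\bm{\beta}_0)$. As noted above, testing $H(\bm{\beta}_0)$ against $H(\bm{\beta}_1)$ is equivalent to testing $\bar{H}_0:g(\bm{x}_{t-1},\bm{\beta},\bm{\beta}_0)=\bm{0}$ against $\bar{H}_A$, and under this reformulation $\tilde{y}_t = g(\bm{x}_{t-1},\bm{\beta},\bm{\beta}_0)+\varepsilon_t$ plays exactly the role that $y_t=\bm{\beta}'\bm{x}_{t-1}+\varepsilon_t$ played in Section \ref{Linear}. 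Since the transformation subtracts only an $X$-measurable quantity, the error process $\{\varepsilon_t\}$ remains a strict conditional mediangale under (\ref{eq: mediane1}), so the entire copula machinery constructed for the signs $s(y_t)$ transfers verbatim to the signs $s(\tilde{y}_t)$.

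First I would pin down the null distribution. Under $H(\bm{\beta}_0)$ we have $\bm{\beta}=\bm{\beta}_0$, hence $g\equiv\bm{0}$ and $\tilde{y}_t=\varepsilon_t$. By (\ref{eq: mediane1}) together with Theorem 1 of \citet{coudin2009finite}, the signs $s(\tilde{y}_t)=s(\varepsilon_t)$ are i.i.d.\ Bernoulli$(1/2)$ conditional on $X$, so the null likelihood equals $(1/2)^T$, free of nuisance parameters. Next I would compute the alternative likelihood: under $H(\bm{\beta}_1)$ the conditional joint p.m.f.\ of the signs factorises as in (\ref{eq: IDKANYMORE}), and each conditional factor admits the D-vine PCC decomposition (\ref{eq: stober2}). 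The marginal contribution follows from $\tilde{y}_t\geq 0\iff \varepsilon_t\geq f(\bm{x}_{t-1},\bm{\beta}_0)-f(\bm{x}_{t-1},\bm{\beta}_1)$, whence
\[
P[s(\tilde{y}_t)=1\mid X]=1-P_t[\varepsilon_t\leq f(\bm{x}_{t-1},\bm{\beta}_0)-f(\bm{x}_{t-1},\bm{\beta}_1)\mid X]=1-p_t[\bm{x}_{t-1},\bm{\beta}_0,\bm{\beta}_1\mid X],
\]
giving $\ln P_t[s(\tilde{y}_t)=\tilde{s}_t\mid X]=s(\tilde{y}_t)\,\tilde{a}_t(\bm{\beta}_0\mid\bm{\beta}_1)+\ln p_t$ with $\tilde{a}_t$ the log-odds defined in the statement.

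Taking the logarithm of the ratio of the alternative to the null likelihood then isolates the two D-vine copula-density sums together with the linear-in-signs term $\sum_t s(y_t-f(\bm{x}_{t-1},\bm{\beta}_0))\,\tilde{a}_t(\bm{\beta}_0\mid\bm{\beta}_1)$; the residual summands $\sum_t\ln p_t$ and $-T\ln(1/2)$ are $X$-measurable constants and are absorbed into the threshold $c_1(\bm{\beta}_0,\bm{\beta}_1)$. Since $SN_T(\bm{\beta}_0\mid\bm{\beta}_1)$ is thus a strictly increasing function of the likelihood ratio, the Neyman-Pearson lemma \citep{lehmann2006testing} delivers that the test rejecting $H(\bm{\beta}_0)$ when $SN_T>c_1$ is most powerful at level $\alpha$ among all tests based on $(s(\tilde{y}_1),\cdots,s(\tilde{y}_T))'$.

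The main obstacle, exactly as for Proposition \ref{coroll1}, is to ensure that the D-vine decomposition (\ref{eq: stober2}) is itself legitimate, i.e.\ that each conditional bivariate distribution $F_{s(\tilde{y}_t),s(\tilde{y}_j)\mid \text{\b{S}}_{t-1}^{\backslash j}}$ admits a \emph{constant} conditional copula in the sense of the Definition above. This is where the time-series structure must be invoked: because the transformation $\tilde{y}_t=y_t-f(\bm{x}_{t-1},\bm{\beta}_0)$ leaves the serial-dependence structure of $\{\varepsilon_t\}$ intact, the D-vine ordering continues to pair temporally adjacent signs, so the strength of conditional dependence varies little across the conditioning set and the rectangle-inequality condition of the Definition holds precisely as it did in the linear case. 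Granting this, the argument above is a direct transcription of the proof of Proposition \ref{coroll1} with $y_t$ replaced by $\tilde{y}_t$ and $\bm{\beta}'\bm{x}_{t-1}$ replaced by $g(\bm{x}_{t-1},\bm{\beta}_1,\bm{\beta}_0)$.
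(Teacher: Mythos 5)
Your proposal is correct and follows essentially the same route as the paper: the nonlinear problem is reduced via $\tilde{y}_t=y_t-f(\bm{x}_{t-1},\bm{\beta}_0)$ to the setting of Proposition \ref{coroll1}, whose proof (null likelihood $(1/2)^T$ from the mediangale property, D-vine decomposition of the alternative likelihood, log-odds marginal terms, absorption of the $X$-measurable constants into the critical value, and the Neyman--Pearson lemma) is then transcribed with $s(y_t)$ replaced by $s(\tilde{y}_t)$. Your explicit flagging of the constant-conditional-copula requirement for the validity of decomposition (\ref{eq: stober2}) is a point the paper treats as a standing modeling assumption rather than re-verifying in the proof, but it does not alter the argument.
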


Consider a linear function $f(\bm{x}_{t-1},\bm{\beta})=\bm{\beta}'\bm{x}_{t-1}$, and assume that under the alternative hypothesis $\varepsilon_t$ for $t=1,\cdots,T$ follows a standard normal distribution (i.e. $\varepsilon_t\sim N(0,1))$. Then the statistic for testing $H(\bm{\beta}_0)$ against the alternative $H(\bm{\beta}_1)$ is given by
\begingroup
\allowdisplaybreaks
\begin{align*}
SN_{T}(\bm{\beta}_0\mid\bm{\beta} _{1})=\sum\limits_{t=2}^{T}\sum\limits_{l=t-1}^{2}\ln c_{\tilde{\delta}_{lt}t,\mid \tilde{\delta}_{1t},\cdots,\tilde{\delta}_{t-1,t}}+\sum\limits_{t=2}^{T}\ln c_{\tilde{\delta}_{1t}t}+\sum\limits_{t=1}^{T}s(y_t-\bm{\beta}_0'\bm{x}_{t-1})\tilde{\delta}_t(\bm{\beta}_0\mid\bm{\beta}_1)>c_1(\bm{\beta}_0,\bm{\beta}_1),
\end{align*}%
\endgroup
where 
\[
\tilde{a}_t(\bm{\beta}_0\mid\bm{\beta}_1)=\ln\left\{\frac{\Phi((\bm{\beta}_1-\bm{\beta}_0)'\bm{x}_{t-1})}{1-\Phi((\bm{\beta}_1-\bm{\beta}_0)'\bm{x}_{t-1})}\right\},
\]
such that $\Phi(.)$ is the standard normal distribution function. As in Section \ref{Point-optimal sign test based on PCC}, the distribution of $SN_{T}(\bm{\beta}_0\mid\bm{\beta} _{1})$ can be simulated under the null hypothesis with sufficient number of replications and the relevant critical values can be obtained to any degree of precision.
\section{Estimation \label{EstimationC3}}
{\hskip 1.5em}In this Section, we first consider the issue of estimating the bivariate copulas in the D-vine decomposition and suggest a sequential estimation strategy for the parameters of the copulas. We then turn our attention to the problem of selecting a class of parametric bivariate copulas. The choice of the latter has an important implication on introducing dependence to the vector of signs.  
\subsection{Sequential estimation of the D-vine}

{\hskip 1.5em}The calculation of the test statistics in Section \ref{Point-optimal sign test based on PCC} requires four bivariate copula evaluations at $T(T-1)/2$ distinct points, leading to a total of $2T(T-1)$ copula evaluations. The estimation of the D-vine is often facilitated with the maximum likelihood (MLE hereafter). However, since the latter requires optimization with respect to at least $2T(T-1)$ copula parameters, sequential estimation procedures are favored for faster computation times, with the caveat that the increased speed comes at the cost of efficiency. Furthermore, the sequential estimates may be provided as starting points for the simultaneous numerical optimization using MLE [see \citet{czado2012maximum}, \citet{haff2012comparison} and \citet {dissmann2013selecting} among others]. We assume that the copulas are specified parametrically, given by an appropriate parameter (vector). More specifically, let $\pmb{\theta}_{l}=(\theta_{1,k}',\cdots,\theta_{T-l,k}')'$ be the set of all the parameters to be estimated for tree $\mathcal{T}_l$, $l=1,\cdots,T-1$ of the D-vine, with $k=l-1$ conditioning variables. Therefore, $\pmb{\theta}=(\pmb{\theta}_1',\cdots,\pmb{\theta}_{T-1}')'$ is the entire set of the parameters that need to be estimated for the D-vine decomposition. To estimate the parameter vector $\pmb{\theta}$, we follow a sequential estimation strategy proposed by \citet{czado2012maximum}, whereby first, the parameters of the unconditional bivariate copulas are estimated. These parameters are then utilized as means of estimating the parameters of bivariate copulas with a single conditioning variable. The latter are then used to estimate the pair-copulas with two conditioning variables, and so on. This bivariate copula estimation approach is continued sequentially until all parameters are estimated. 

In the first step, the marginals are obtained by computing the conditional Bernoulli CDFs (\ref{eq: BernoulliCDF2}) using an arbitrary distribution, such as the standard normal distribution considered in Section \ref{Point-optimal sign test based on PCC}. The second step of the process involves estimating the parameters of the unconditional copula, by fixing the marginals with their aforementioned estimates and maximizing the bivariate likelihood corresponding to each copula in each tree $\mathcal{T}_l$ to obtain $\pmb{\hat{\theta}}_l=(\hat{\theta}_{1,k}',\cdots,\hat{\theta}_{T-l,k}')$ for $l=1,\cdots,T-1$ and $k=l-1$ . As all the variables are discrete, the log-likelihood function, say, for the unconditional copula $C_{t,t+1}$ for $t=1,\cdots,T-1$ for the signs $\left(s(y_{i,t}),s(y_{i,t+1})\right)$, $i=1,\cdots,n-1$ is expressed as 
\[
L(\theta_{t,0})=\sum\limits_{i=1}^{T-1}\log\left\{\sum\limits_{\{a_1,a_2\}\in \{-,+\}^{2}}(-1)^{a_{j}}C_{t,t+1}\left(F_t(s_{i,t}^{a_1}\mid X;\mathbf{\hat{\bm{\beta}}_1}),F_{t+1}(s_{i,t+1}^{a_2}\mid X;\mathbf{\hat{\bm{\beta}}_1});\theta_{t,0}\right)\right\}.
\]
The estimate of the copula parameter, $\hat{\theta}_{t,0}$ for $t=1,\cdots,T-1$, is then obtained as follows
\[
\hat{\theta}_{t,0}=\arg \max_{\theta_{t,0}} L(\mathbf{\theta}_{t,0}),
\] 
which under regularity conditions solves
\[
\frac{\partial L(\mathbf{\theta}_{t,0})}{\partial \theta_{t,0}}=0.
\]

Let us illustrate this process with an example: once the marginals are obtained, the next step involves estimating the parameters $\theta_{t,0}$ for $t=1,\cdots,T-1$ of the unconditional copulas. Next, we are interested in estimating $\theta_{t,1}$ for $t=1,\cdots,T-2$. Define
\[
\hat{u}_{t\mid t+1}=F_{t\mid t+1}\left(s_t\mid s_{t+1},X;\hat{\theta}_{t,0} \right),
\]  
and
\[
\hat{v}_{t+2\mid t+1}=F_{t+2\mid t+1}\left(s_{t+2}\mid s_{t+1},X;\hat{\theta}_{t+1,0} \right),
\]
for $t=1,\cdots,T-2$. The data $\hat{u}_{t\mid t+1}$ and $\hat{v}_{t+2\mid t+1}$ is then used to estimate the parameters $\theta_{t,1}$ for $t=1,\cdots,T-2$, denoted by $\hat{\theta}_{t,1}$. This procedure is repeated sequentially until all parameters are estimated. \citet{haff2010simplified} show that under regularity conditions, the sequential estimates are asymptotically normal; however, as noted earlier their asymptotic covariance is \textquotedblleft intractable\textquotedblright{ }and the faster computation time comes at the cost of efficiency. Therefore, the sequential estimates can be utilized as the starting values of the high-dimensional MLE.

Another approach for estimating one-parameter pair-copulas in the sequential estimation procedure for copula families with a known relationship to Kendall's $\tau$ consists of inverting the empirical Kendall's $\tau$ based on, say, $\hat{u}_{t}$ and $\hat{u}_{t+1}$ for $t=1,\cdots,T-1$ for the edges of the first tree. However, we provide a caveat that the Kendall's $\tau$ of discrete data does not correspond to the Kendall's $\tau$ of the bivariate copulas [see \citet{denuit2005constraints}]. \citet{denuit2005constraints} show that by continuous extension of the discrete variables with a perturbation with values in $[0,1]$, the continuous features of Kendall's $\tau$ are adaptable to discrete data. In other words
\[ 
\tau(s^*(y_{t}),s^*(y_{t+1}))=4\int\int_{[0,1]^2}C_{t,t+1}^*\left(\hat{u}_{t},\hat{u}_{t+1}\right)dC_{t,t+1}^*\left(\hat{u}_{t},\hat{u}_{t+1}\right)-1
\]
for $t=1,\cdots,T-1$, such that $\hat{u}_{t}=F\left(s^*_t\mid X;\hat{\bm{\beta}}_1\right)$,
\[
s^*(y_t)=s(y_t)+U-1
\]
where $U$ is a continuous random variable in $[0,1]$. A natural choice for $U$ is the uniform distribution.
\subsection{Selection of the copula family \label{Selection of the copula family}}
{\hskip 1.5em}Many different classes of parametric bivariate copulas have extensively been studied and reviewed by \citet{joe2014dependence}. These include the Archimedean, elliptical, extreme value or max-id copula families that can specify the dependence structure of the vector of signs. As the dependence is introduced by the copula family, the type and the degree of the dependence between the signs depends on the choice of the copula. The literature surrounding the goodness-of-fit of copulas is extensive and has been analyzed by \citet{genest2006goodness}, \citet{genest2009goodness}, and \citet{berg2009copula}, among many others. \citet{genest2009goodness} categorize goodness-of-fit tests into three broad categories: procedures for testing particular dependence structures such as Gaussian or Clayton family; procedures that may be used for any classes of copulas, but require a strategic choice for their implementation; and finally, the so-called \textquotedblleft blanket tests\textquotedblright that apply to all classes of copulas and require no strategic choice for their use. A simple procedure proposed by \citet{joe1997multivariate} involves specifying the Akaike information critetion (AIC) to different copulas and using it as a copula selection criterion, which is particularly attractive as it allows for the automation of the copula selection process [see. \citet{czado2012maximum}]. The AIC specified to the copulas of, say, the first tree of the D-vine, can be expressed as follows
\[
AIC=-2\sum\limits_{i=1}^{t}\log c_{t,t+1}(\hat{u}_{i,t},\hat{u}_{i,t+1};\hat{\theta}_{1,k})+2l
\] 
for $t=1,.\cdots,T-1$ and $k=1,\cdots,T-1$, and where $l$ is the number of parameters $\theta_{1,k}$. \citet{panagiotelis2012pair} suggest that while dependence structures such as tail dependence are weak in discrete data, the choice of the copulas could still have a significant effect on the joint pmf of the signs. They considered Gaussian, Clayton and Gumbel copulas in constructing the D-vines for Bernoulli margins by keeping the marginal probabilities and dependence constant, and have found that in the case where the probabilities of zero marginals and joint probabilities in the data is high, preference goes to the use of the Gumbel copula over the other two alternatives. 

Within the context of our work, the mediangale assumption (\ref{eq: median}) implies that the signs $s(y_1),\cdots,s(y_T)$  conditional on $X$, and in turn $\hat{u}_1,\cdots,\hat{u}_T$ exhibit serial nonlinear dependence. Thus, the issue with specifying a copula-based model for the signs is that the distribution of the signs must imply an identity correlation matrix, where independence is only sufficient for uncorrelatedness. The literature generally deals with this issue by capturing the serial nonlinear dependence by imposing the identity correlation matrix on a multivariate Student's $t$ distribution. Henceforth, we consider the \textquotedblleft jointly symmetric\textquotedblright{ }copulas proposed by \citet{oh2016high}, where the latter can be constructed with any given (possibly asymmetric) copula family. In addition, when they are combined with symmetric marginals, they ensure an identity correlation matrix. A \textquotedblleft jointly symmetric\textquotedblright{ }copula is defined as follows
\begin{definition}
The $n$ dimensional copula $C^{JS}$, is jointly symmetric:
\[
C^{JS}\left(u_1,\cdots,u_n\right)=\frac{1}{2^n}\sum\limits_{k_1=0}^{2}\cdots\sum\limits_{k_n=0}^{2}\left(-1\right)^R C(\tilde{u}_1,\cdots,\tilde{u}_i,\cdots,\tilde{u}_n),
\] 
\[
\text{where}\quad R=\sum\limits_{i=1}^n\bm{1}\{k_i=2\},\quad\text{and}\quad\tilde{u}_i=
\begin{cases}
1,& k_i=0\\
u_i,&k_i=1\\
1-u_i,& k_i=2
\end{cases}.
\]
 \end{definition}
 The general idea is that the average of mirror image rotations of a possibly asymmetric copula along each axis generates a jointly symmetric copula [see \citet{oh2016high}]. For instance, the marginals can be assumed to possess standard normal distributions, while the nonlinear dependency is modeled using jointly symmetric copulas.  

Therefore, using the jointly symmetric copula family, the bivariate copulas can be evaluated by considering

\[
C^{JS}\left(\phi_t(s_t\mid X;\hat{\bm{\beta_1}}),\phi_{t+1}(s_{t+1}\mid X;\hat{\bm{\beta_1}})\right)=\frac{1}{4}\sum\limits_{k_1=0}^{2}\sum\limits_{k_2=0}^{2}\left(-1\right)^R C(\phi_t(s_t\mid X;\hat{\bm{\beta_1}}),\phi_{t+1}(s_{t+1}\mid X;\hat{\bm{\beta_1}}))
\]
\[
\text{where}\quad R=\sum\limits_{i=1}^n\bm{1}\{k_i=2\},\quad\text{and}\quad\tilde{u}_i=
\begin{cases}
1,& k_i=0\\
\phi_i(.),&k_i=1\\
1-\phi_i(.),& k_i=2
\end{cases},\quad \forall i\in\{t,t+1\}.
\]
for $t=1,\cdots,T-1$.

\subsection{Truncated D-vines \label{Truncated D-vines}}

{\hskip 1.5em}Following \citet{joe2014dependence}, we refer to a D-vine as a $p$-truncated D-vine, if the copulas in the trees $\mathcal{T}_{p+1},\cdots,\mathcal{T}_{T}$ are $C^{\perp}$, where by definition
\[
C^{\perp}\left(u_1,\cdots,u_n\right)=\prod\limits_{t=1}^{T}u_t,\quad\text{with}\quad (U_1,\cdots,U_T)\sim U(0,1), 
\]
implying $U_1\perp U_2 \perp \cdots \perp U_T$. The POS-based tests constructed using the Markov assumption of order one can be regarded as a special case of the PCC-POS based tests, whereby the former can be constructed by a $1$-truncated D-vine, which only depends on $C_{12},C_{23},\cdots,C_{T-1,T}$, or rather $C_{12},C_{\sigma_{13}3},\cdots,,C_{\sigma_{1T}T}$ using a vine array representation, given the following array
\begin{equation*}
\begin{bmatrix}
- & 12 & 23 &34 &\cdots & T-1,T\\
 &-&13\mid2^{\textcolor{red}{\perp}}&24\mid 3^{\textcolor{red}{\perp}}&\cdots&T-2,T\mid T-1^{\textcolor{red}{\perp}}\\
& & \ddots&\cdots&\cdots&\vdots\\
&&&-&1,T-1\mid \text{\b{S}}_{\backslash j-}^{\textcolor{red}{\perp}}&2,T \mid \text{\b{S}}_{\backslash j+}^{\textcolor{red}{\perp}}\\
&&&&-&1,T\mid \text{\b{S}}_{\backslash j}^{\textcolor{red}{\perp}}\\
&&&&&-
\end{bmatrix},\quad
A=
\begin{bmatrix}
1 & 1 & 2 &3 &\cdots & T-1\\
 &2&1^{\textcolor{red}{\perp}}&2^{\textcolor{red}{\perp}}&\cdots&T-2^{\textcolor{red}{\perp}}\\
& & \ddots&\cdots&\cdots&\vdots\\
&&&T-2&1^{\textcolor{red}{\perp}}&2^{\textcolor{red}{\perp}}\\
&&&&T-1&1^{\textcolor{red}{\perp}}\\
&&&&&T
\end{bmatrix}
\end{equation*}
Similarly, a $2$-truncated D-vine, depends on the copulas $C_{12},C_{23},\cdots,C_{T-1,T}$ and $C_{13\mid 2},C_{24\mid 3},\cdots,C_{T-2,T\mid T-1}$ or $C_{\sigma_{1t}}t$ for $t=2,\cdots,T$ and $C_{\sigma_{2t}t\mid\sigma_{1t}}$ for $t=3,\cdots,T$ using the vine array representation. Therefore, for a $p$-truncated D-vine, (\ref{eq: stober2}) is modified to 
\begingroup
\allowdisplaybreaks
\begin{align}\label{eq: stober3}
\begin{split}
P_{t\mid 1:{t-1}}[s(\tilde{y}_t)=\tilde{s}_t\mid s(\tilde{y}_1)=\tilde{s}_1,\cdots,s(\tilde{y}_{t-1})=\tilde{s}_{t-1},X]&=P_{t\mid 1:{p}}[s(\tilde{y}_t)=\tilde{s}_t\mid s(\tilde{y}_1)=\tilde{s}_1,\cdots,s(\tilde{y}_{p})=\tilde{s}_{p},X]\\
&=\left\{\prod\limits_{l=p\land (t-1)}^{2} c_{\tilde{\sigma}_{lt}t,\mid \tilde{\sigma}_{1t},\cdots,\tilde{\sigma}_{t-1,t}}\right\}\times c_{\tilde{\sigma}_{lt}t}\times P_t[s(\tilde{y}_t)=\tilde{s}_t\mid X],
\end{split}
\end{align}
\endgroup
for $t-1\geq p$.
\begin{corollary}\label{coroll3}
 Let $\tilde{A}=(\tilde{\sigma}_{lt})_{1\leq l\leq t\leq T}$ be a D-vine array for the signs $s(\tilde{y}_1),\cdots,s(\tilde{y}_T)$, where the signs $\left\{s(\tilde{y}_t)\right\}_{t=0}^{\infty}$ follow a Markov process of order $p$. Under assumptions (\ref{eq: modelnl}) and (\ref{eq: median}%
), let $H(\bm{\beta}_{0})$ and $H(\bm{\beta}_{1})$ be defined by (\ref{eq: nullnl}) - (\ref%
{eq: altnl})$,$ 
\begingroup
\allowdisplaybreaks
\begin{align*}
SN_{T}(\bm{\beta}_0\mid\bm{\beta}_{1})=\sum\limits_{t=2}^{T}\sum\limits_{l=p\land( t-1)}^{2}\ln c_{\tilde{\sigma}_{lt}t,\mid \tilde{\sigma}_{1t},\cdots,\tilde{\sigma}_{t-1,t}}+\sum\limits_{t=2}^{T}\ln c_{\tilde{\sigma}_{1t}t}+\sum\limits_{t=1}^{T}s(y_t-f(\bm{x}_{t-1},\bm{\beta}_0))\tilde{a}_t(\bm{\beta}_0\mid\bm{\beta}_1)>c_1(\bm{\beta}_0,\bm{\beta}_1),
\end{align*}%
\endgroup
where 
\[
\tilde{a}_t(\bm{\beta}_0\mid\bm{\beta}_1)=\ln\left\{\frac{1-p_t[\bm{x}_{t-1},\bm{\beta}_0,\bm{\beta}_1\mid X]}{p_t[\bm{x}_{t-1},\bm{\beta}_0,\bm{\beta}_1\mid X]}\right\},\quad p_t[\bm{x}_{t-1},\bm{\beta}_0,\bm{\beta}_1\mid X]=P_t[\varepsilon_t\leq f(\bm{x}_{t-1},\bm{\beta}_0)-f(\bm{x}_{t-1},\bm{\beta}_1)\mid X]
\]
and suppose the constant $c_{1}(\bm{\beta}_0,\bm{\beta}_{1})$ satisfies the constraint $P\left[SN_T(\bm{\beta}_0\mid\bm{\beta}_1)>c_{1}(\bm{\beta}_0,\bm{\beta}_1)\right] =\alpha $
under $H(\bm{\beta}_{0}),$ with $0<\alpha <1.$ Then the test that rejects $H(\bm{\beta}_{0})$ when 
\begin{equation}
SN_{T}(\bm{\beta}_0\mid\bm{\beta}_{1})>c_{1}(\bm{\beta}_0,\bm{\beta}_1) \label{eq: SLn critical region}
\end{equation}%
is most powerful for testing $H(\bm{\beta}_{0})$ against $H(\bm{\beta}_{1})$ among level-$\alpha $
tests based on the signs $\big(s(\tilde{y}_{1}),\cdots,s(\tilde{y}_{T})\big)%
'.$
\end{corollary}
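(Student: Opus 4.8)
The plan is to obtain Corollary \ref{coroll3} as a direct specialization of Corollary \ref{coroll2}, exploiting the fact that a Markov process of order $p$ on the signs forces the $p$-truncation of the D-vine. First I would recall the structure of the D-vine array $\tilde{A}$: the copula sitting in tree $\mathcal{T}_l$ and column $t$, namely $c_{\tilde{\sigma}_{lt}t\mid\tilde{\sigma}_{1t},\cdots,\tilde{\sigma}_{t-1,t}}$, encodes the conditional dependence between $s(\tilde{y}_t)$ and the sign lying $l$ positions earlier in the consecutive ordering, conditional on the $l-1$ intervening signs. The key observation is that the assumption $\{s(\tilde{y}_t)\}_{t=0}^{\infty}$ is Markov of order $p$ is precisely the statement that, for every $t$ and every $l>p$, the sign $s(\tilde{y}_t)$ is conditionally independent of the sign $l$ lags back, given the $l-1$ intervening signs. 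Hence the conditional bivariate distribution attached to each such edge is the product of its conditional marginals.

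Next I would translate this conditional independence into the copula language of the paper. By the Definition of the constant conditional copula, each edge of the vine carries a single copula valid on the Cartesian product of the ranges of the two conditional Bernoulli CDFs. Under the Markov-$p$ assumption, for $l>p$ that copula must coincide on this grid with the independence copula $C^{\perp}(u_1,u_2)=u_1u_2$. Substituting $C^{\perp}$ into the discrete copula-density formula
\[
c_{t,j\mid \mathbf{\backslash j}}=\frac{C_{t,j\mid \mathbf{\backslash j}}^{++}-C_{t,j\mid \mathbf{\backslash j}}^{-+}-C_{t,j\mid \mathbf{\backslash j}}^{+-}+C_{t,j\mid \mathbf{\backslash j}}^{--}}{P_{t\mid \mathbf{\backslash j}}\,P_{j\mid \mathbf{\backslash j}}},
\]
where $P_{t\mid \mathbf{\backslash j}}$ and $P_{j\mid \mathbf{\backslash j}}$ are the conditional marginal p.m.f.s, the numerator factorizes as the product of the two rectangle probabilities, which is exactly the denominator, so that $c_{\tilde{\sigma}_{lt}t\mid\cdots}=1$ and $\ln c_{\tilde{\sigma}_{lt}t\mid\cdots}=0$ for every $l>p$. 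This is what collapses the conditional p.m.f.\ (\ref{eq: stober2}) into the truncated form (\ref{eq: stober3}), with the inner product, and hence the inner log-sum, running only from $l=p\wedge(t-1)$ down to $2$.

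With this reduction in hand, I would repeat the argument of Proposition \ref{coroll1} and Corollary \ref{coroll2} verbatim. Taking logs of the truncated joint p.m.f.\ under $H(\bm{\beta}_1)$, subtracting the log-likelihood under $H(\bm{\beta}_0)$, and collecting the marginal contributions into the weights $\tilde{a}_t(\bm{\beta}_0\mid\bm{\beta}_1)$ exactly as in Corollary \ref{coroll2} yields the stated statistic $SN_T(\bm{\beta}_0\mid\bm{\beta}_1)$ with the truncated double sum. Because the statistic is still a monotone transform of a likelihood ratio of the signs, and the null distribution is free of nuisance parameters (the signs being i.i.d.\ $Bi(1,1/2)$ under the mediangale assumption (\ref{eq: median})), the Neyman--Pearson lemma delivers most-powerfulness among level-$\alpha$ sign-based tests, with $c_1(\bm{\beta}_0,\bm{\beta}_1)$ chosen to attain size $\alpha$.

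The main obstacle I anticipate is the rigorous justification of the middle step: that Markov dependence of order $p$ for the \emph{discrete} sign process is genuinely equivalent to the vanishing of the tree-$\mathcal{T}_l$ copulas for $l>p$, not merely sufficient for it. One must verify that the conditional independence holds uniformly over all configurations of the conditioning signs, so that the constant-conditional-copula condition of the Definition is met with $C^{\perp}$, and confirm that the non-uniqueness of the discrete copula off the grid $\mathrm{Range}(F_t)\times\mathrm{Range}(F_j)$ does not interfere. Since only the on-grid values enter the rectangle probabilities, the latter causes no genuine difficulty, but I would state it explicitly to keep the argument airtight.
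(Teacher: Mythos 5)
Your proposal is correct and follows essentially the same route as the paper: the Markov-$p$ property makes the copulas in trees $\mathcal{T}_{p+1},\cdots,\mathcal{T}_{T}$ the independence copula $C^{\perp}$, whose discrete density is identically $1$ (your rectangle-probability factorization is exactly right), so the log terms for $l>p$ vanish, (\ref{eq: stober2}) collapses to (\ref{eq: stober3}), and the Neyman--Pearson argument from the Proof of Proposition \ref{coroll1} goes through unchanged. In fact, the paper never writes out a separate proof of Corollary \ref{coroll3} --- it relies on the truncation discussion in Section \ref{Truncated D-vines} plus Proposition \ref{coroll1} --- so your middle step (Markov order $p$ $\Rightarrow$ conditional independence over every configuration of the conditioning signs $\Rightarrow$ $C^{\perp}$ on the relevant grid, with the off-grid non-uniqueness harmless) supplies precisely the justification the paper leaves implicit.
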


\section{Choice of the optimal alternative hypothesis \label{optimal
alternative hypothesis}}

In this Section, we follow \citet{dufour2010exact} by first showing the analytical derivation of the power envelope function of the PCC-POS-based tests. We then suggest using simulations as means of approximating the said function, by showing the difficulty of inverting the latter to find the optimal alternative. Thereafter, we propose an adaptive approach based on the split-sample technique to choose an alternative which has a power function close to that of the power envelope.

\subsection{Power envelope of PCC-POS tests \label{Power envelope of PCC-POS tests}}
{\hskip 1.5em}Point-optimal tests trace out the power envelope (i.e. the maximum attainable power) for any given testing problem [see \citet{king1987towards}]. However, in practice the alternative hypothesis $\bm{\beta}_1$ is unknown and a problem consists of finding an approximation for it, such that the power function is maximized and is close to that of the power envelope. Following \citet{dufour2010exact} and \citet{dufour2001finite}, we propose an adaptive approach based on the split-sample technique to choose an alternative $\bm{\beta}_1$ that yields the greatest power function and makes size control easier [see \citet{dufour2001finite} and \citet{dufour2010exact} for an overview]. We follow \citet{dufour2010exact} by presenting the analytical derivation of the power envelope of the PCC-POS tests for predictive regressions, which can be purposed as a benchmark for comparing the power functions of the PCC-POS tests for different sample splits. 

We have shown in Section \ref{nonlinear} that the PCC-POS tests are a function of $\bm{\beta}_1$. In other words,
\[
SN_{T}(\bm{\beta}_0\mid\bm{\beta}_{1})=\sum\limits_{t=2}^{T}\sum\limits_{l=t-1}^{2}\ln c_{\tilde{\sigma}_{lt}t\mid \tilde{\sigma}_{1t},\cdots,\tilde{\sigma}_{t-1,t}}+\sum\limits_{t=2}^{T}\ln c_{\tilde{\sigma}_{1t}t}+\sum\limits_{t=1}^{T}\ln\left\{\frac{1-p_t[\bm{x}_{t-1},\bm{\beta}_0,\bm{\beta}_1\mid X]}{p_t[\bm{x}_{t-1},\bm{\beta}_0,\bm{\beta}_1\mid X]}\right\}s(y_t-f(\bm{x}_{t-1},\bm{\beta}_0)).
\]
which in turn implies that its power function, say $\Pi(\bm{\beta}_0,\bm{\beta}_1)$, is also a function of $\bm{\beta}_1$
\[
\Pi(\bm{\beta}_0,\bm{\beta}_1)=P[SN_{T}(\bm{\beta}_0\mid\bm{\beta} _{1})>c_1(\bm{\beta}_0,\bm{\beta}_1)\mid H(\bm{\beta}_1)]
\]
where $c_1(\bm{\beta}_0,\bm{\beta}_1)$ is the smallest constant that satisfies $P[SN_T(\bm{\beta}_0\mid\bm{\beta}_1)>c_1(\bm{\beta}_0,\bm{\beta}_1)\mid H(\bm{\beta}_0)]\leq \alpha$, and where $\alpha$ is an arbitrary significance level.  Theorem \ref{Theorem1} provides the theoretical results for the power function of the PCC-POS tests.
\begin{theorem}\label{Theorem1} 
Under assumption (\ref{eq: median}) and and given model (\ref{eq: modelnl}), and further under the condition that $s(\tilde{y}_1),\cdots,s(\tilde{y}_T)$ conditional on $X$ follow a Regularity Markov Type process (RMT hereafter), the power function of $SN_{T}(\bm{\beta}_0\mid\bm{\beta}_{1})$ is given by
\[
\Pi(\bm{\beta}_0,\bm{\beta}_1)=P[SN_T(\bm{\beta}_0\mid \bm{\beta}_1)> c_1(\bm{\beta}_0,\bm{\beta}_1)\mid X]=\frac{1}{2}+\frac{1}{\pi}\int_{0}^{\infty}\frac{\Im\{\exp(iuc_1(\bm{\beta}_0,\bm{\beta}_1))\phi_{SN_T}(u)\}}{u}du
\] 
$\forall u \in \mathbb{R}$, $i=\sqrt{-1}$, and with $\Im{z}$ denoting the imaginary part of the complex number $z$. $\phi_{SN_T}(u)$ is given by
\[
\phi_{SN_T}(u)=\prod\limits_{t=1}^{T}\left(\E_X\left[\exp\left(iu\left\{R_{t,t-1}+\ln\left\{\frac{1-p_t[\bm{x}_{t-1},\bm{\beta}_0,\bm{\beta}_1\mid X]}{p_t[\bm{x}_{t-1},\bm{\beta}_0,\bm{\beta}_1\mid X]}\right\}s(\tilde{y}_t)\right\}\right)\right]+\rho_t(u)\right),
\]
where $R_{1,0}=0$, $R_{t,t-1}=\sum\limits_{l=t-1}^{2}\ln c_{\tilde{\sigma}_{lt}t\mid \tilde{\sigma}_{1t},\cdots,\tilde{\sigma}_{t-1,t}}+\ln c_{\tilde{\sigma}_{1t}t}$ for $t=2,\cdots,T$, such that for D-vine-array $\tilde{A}=(\tilde{\sigma}_{lt})_{1\leq l\leq t\leq T}$, $l=2,\cdots,T-1$ is the row with tree $\mathcal{T}_l$, and column $t$ has the permutation $\tilde{\underline{\sigma}}_{t-1}=(\tilde{\sigma}_{1t},\cdots,\tilde{\sigma}_{t-1,t})$ of the previously added variables, $p_t[\bm{x}_{t-1},\bm{\beta}_0,\bm{\beta}_1\mid X]=P_t[\varepsilon_t\leq f(\bm{x}_{t-1},\bm{\beta}_0)-f(\bm{x}_{t-1},\bm{\beta}_1)\mid X]$, $\tilde{\text{\b{S}}}_{t-1}=s(y_{t-1}-f(\bm{x}_{t-2},\bm{\beta}_0)),\cdots,s(y_{1}-f(\bm{x}_{0},\bm{\beta}_0))$.
Finally, $c_1(\bm{\beta}_0,\bm{\beta}_1)$ is the smallest constant that satisfies $P[SN_T(\bm{\beta}_0\mid\bm{\beta}_1)>c_1(\bm{\beta}_0,\bm{\beta}_1)\mid H(\bm{\beta}_0)]\leq \alpha$, where $\alpha$ is an arbitrary significance level. 
\end{theorem}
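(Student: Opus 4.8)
The plan is to read $\Pi(\bm{\beta}_0,\bm{\beta}_1)$ as the survival probability of the scalar statistic $SN_T(\bm{\beta}_0\mid\bm{\beta}_1)$ evaluated at the critical value $c_1(\bm{\beta}_0,\bm{\beta}_1)$, to recover that probability from the characteristic function by a Fourier (Gil--Pelaez) inversion, and then to establish the product representation of $\phi_{SN_T}(u)$ by exploiting the RMT structure. First I would regroup the statistic into the additive form $SN_T=\sum_{t=1}^{T}W_t$, where $W_t=R_{t,t-1}+\tilde{a}_t(\bm{\beta}_0\mid\bm{\beta}_1)\,s(\tilde{y}_t)$ with $R_{1,0}=0$ and $\tilde{a}_t(\bm{\beta}_0\mid\bm{\beta}_1)=\ln\{(1-p_t)/p_t\}$. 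This is the purely algebraic step: it merges the double sum of conditional log copula densities with the sign-weighted term and makes the exponent $iu\{R_{t,t-1}+\tilde{a}_t s(\tilde{y}_t)\}$ appearing inside each factor of $\phi_{SN_T}$ explicit.

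Second, conditional on $X$ the quantity $SN_T$ is a real-valued random variable, so I would invoke the Gil--Pelaez inversion theorem applied to $\phi_{SN_T}(u)=\E[\exp(iu\,SN_T)\mid X]$, which expresses the survival function as $\tfrac12+\tfrac1{\pi}\int_0^{\infty}\Im\{\exp(iuc_1(\bm{\beta}_0,\bm{\beta}_1))\phi_{SN_T}(u)\}/u\,du$ with the exponent carrying the sign convention dictated by the $\E[e^{iu\cdot}]$ definition of $\phi_{SN_T}$. The regularity requires care: because the signs are binary, the conditional law of $SN_T$ is \emph{atomic}, so the principal-value integral and the clean $\tfrac12$ leading constant are legitimate only if $c_1(\bm{\beta}_0,\bm{\beta}_1)$ is a continuity point of the conditional distribution. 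I would lean on the RMT regularity to guarantee that the weights $\tilde{a}_t$ are in general position, so that $SN_T$ places no mass exactly at $c_1$, which is precisely what the ``Regularity'' qualifier of the process is there to supply.

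Third, and this carries the real content, I would derive the product form of $\phi_{SN_T}(u)$. Writing $\phi_{SN_T}(u)=\E[\exp(iu\sum_{t}W_t)\mid X]$ and peeling off factors by the tower property, conditioning successively on the realized past signs $\tilde{\text{\b{S}}}_{t-1}$, the Markov-type dependence of the RMT process lets each conditional expectation of $\exp(iuW_t)$ be replaced by its $X$-marginal counterpart $\E_X[\exp(iuW_t)]$ up to a correction $\rho_t(u)$ that absorbs the residual serial dependence not captured by the conditional factorization of the D-vine. Iterating this telescoping from $t=T$ down to $t=1$ then produces $\prod_{t=1}^{T}\bigl(\E_X[\exp(iuW_t)]+\rho_t(u)\bigr)$, matching the stated form.

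The main obstacle will be this third step. The $W_t$ are genuinely serially dependent, both through the conditional copula densities collected in $R_{t,t-1}$ and through the conditioning of $s(\tilde{y}_t)$ on its own past, so the factorization is not exact and the entire weight of the theorem rests on the RMT assumption being strong enough to force the cross-dependence among the $W_t$ to collapse into the per-factor remainders $\rho_t(u)$. Making the peeling rigorous---pinning down exactly what $\rho_t(u)$ is at each conditioning stage and verifying that the conditional characteristic functions telescope into a genuine product rather than an intractable nested expectation---is the delicate part; by comparison the Gil--Pelaez inversion and the regrouping of step one are essentially routine.
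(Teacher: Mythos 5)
Your overall outline (additive regrouping $SN_T=\sum_t W_t$, Gil--Pelaez inversion, product factorization with remainders $\rho_t(u)$) matches the paper's, but there is a genuine gap in your third step, and it is exactly the step you flag as the ``main obstacle.'' You propose to obtain the product form by tower-property peeling, conditioning on past signs and replacing each conditional expectation of $\exp(iuW_t)$ by $\E_X[\exp(iuW_t)]$ plus a correction; as you yourself note, the correction at each stage is random (it depends on the realized past), so it cannot be pulled outside the outer expectation, and the peeling collapses into the nested expectation you worry about. The paper avoids this entirely by following Heinrich (1982): define $f_t(u)=\E_X[\exp(iu\,SN_t(\bm{\beta}_0\mid\bm{\beta}_1))]$, the characteristic function of the \emph{partial sum}, set $\varphi_1(u)=\E_X[\exp(iuW_1)]$ and $\varphi_t(u)=f_t(u)/f_{t-1}(u)$ for $t\geq 2$, so that $\phi_{SN_T}(u)=\prod_{t=1}^{T}\varphi_t(u)$ holds \emph{exactly} by telescoping, with no dependence assumption whatsoever. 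The remainder is then simply \emph{defined} as
\begin{equation*}
\rho_t(u)\;=\;\frac{f_t(u)-\E_X\!\left[\exp(iuW_t)\right]f_{t-1}(u)}{f_{t-1}(u)},
\end{equation*}
which makes $\varphi_t(u)=\E_X[\exp(iuW_t)]+\rho_t(u)$ an identity. The RMT (uniform-mixing) assumption is \emph{not} what makes the product representation valid; it is invoked only afterwards, via Theorem 2 of Heinrich (1982), to factorize and control the $\rho_t(u)$ so the formula is usable for approximating the power envelope. So the ``entire weight of the theorem'' does not rest where you place it, and without Heinrich's ratio construction your derivation of the product form does not go through.

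Two smaller points. First, your observation that $SN_T$ has an atomic conditional law, so Gil--Pelaez at $c_1$ requires $c_1$ to be a continuity point, is a legitimate concern that the paper glosses over; but attributing its resolution to the ``Regularity'' in RMT is your own invention --- in the paper RMT refers to the uniform mixing coefficient $\phi(m)$ of the process $\{Z_t\}$, not to any atomlessness or general-position property of the weights. Second, the sign in the inversion kernel: the standard Gil--Pelaez formula, and the paper's own proof, carry $\exp(-iuc_1(\bm{\beta}_0,\bm{\beta}_1))$; the $\exp(+iuc_1)$ in the theorem statement (which you adopted) appears to be a typo in the paper rather than a convention you can recover by redefining $\phi_{SN_T}$, since the paper fixes $\phi_{SN_T}(u)=\E_X[\exp(iu\,SN_T)]$.
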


Under the assumption that the signs follow an RMT-process, $\rho_t(u)$ can be estimated using the results from Theorem 2 of \citet{heinrich1982factorization}.  Given that point-optimal tests are optimal at a specific point in the alternative parameter space, the power envelope of the PCC-POS tests, say $\bar{\Pi}(\bm{\beta}_1)$, is obtained for values of $\bm{\beta}$, such that $\{\bm{\beta}: \bm{\beta}=\bm{\beta}_1, \forall \bm{\beta}_1\in \mathbb{R}^{(k+1)}\}$. Finding values of $\bm{\beta}_1$ for a PCC-POS test at level $\alpha$, with a power function that is close to the power envelope can be achieved by inverting the power envelope function. However, in a much simpler case of POS tests for i.n.i.d data, \citet{dufour2010exact} show that the inversion of the power function is not a straightforward task and obtaining an exact solution is not feasible. Therefore, simulations are used as means of approximating the power envelope function and finding the optimal alternative for the PCC-POS test.
\subsection{Split-sample technique for choosing the optimal alternative \label{Choice of the optimal alternative hypothesis}}
{\hskip 1.5em}As noted earlier, the power function of the PCC-POS test statistic depends on the alternative $\bm{\beta}_1$, which in practice is unknown and needs to be approximated. To make size control easier and to choose an approximation to $\bm{\beta}_1$ such that the power function of the test statistic is close to that of the power envelope, we follow \citet{dufour2010exact} by proposing an adaptive approach based on the split-sample technique for choosing the alternative. For an extensive review of adaptive statistical methods, we refer the reader to \citet{o2004applied}. Furthermore, the application of the split-sample technique in parametric settings can be studied by consulting \citet{dufour2003point} and \citet{dufour2008finite}.  
\begin{figure}[tbph]
\caption{Power comparisons: different split-samples. Normal error distributions with
different values of $\rho $ in (\ref{eq: errorsim}) and $\theta =0.9$ in (\ref{eq: theta})}
\begin{center}
\subfigure{\includegraphics[scale=0.58]{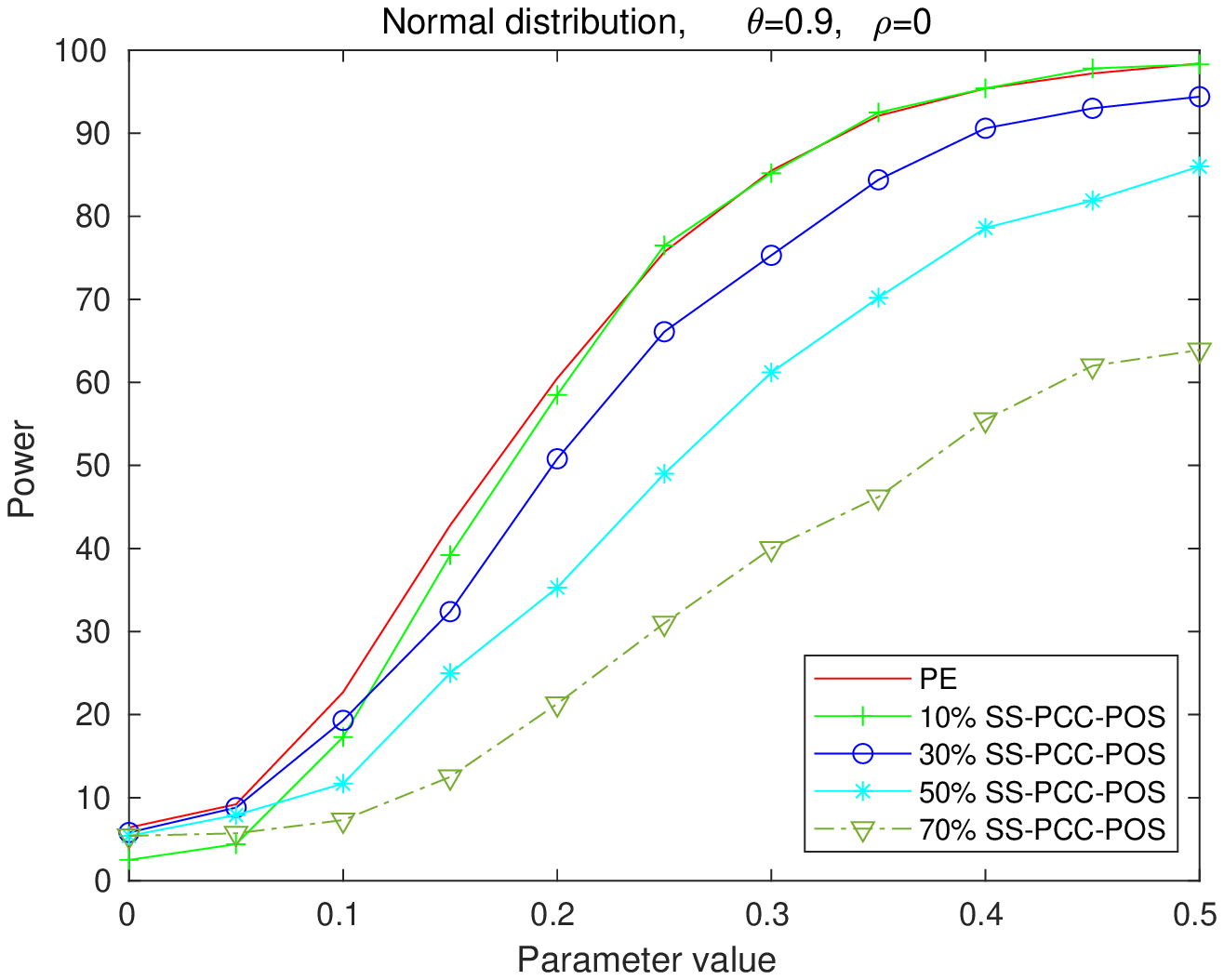}} %
\subfigure{\includegraphics[scale=0.58]{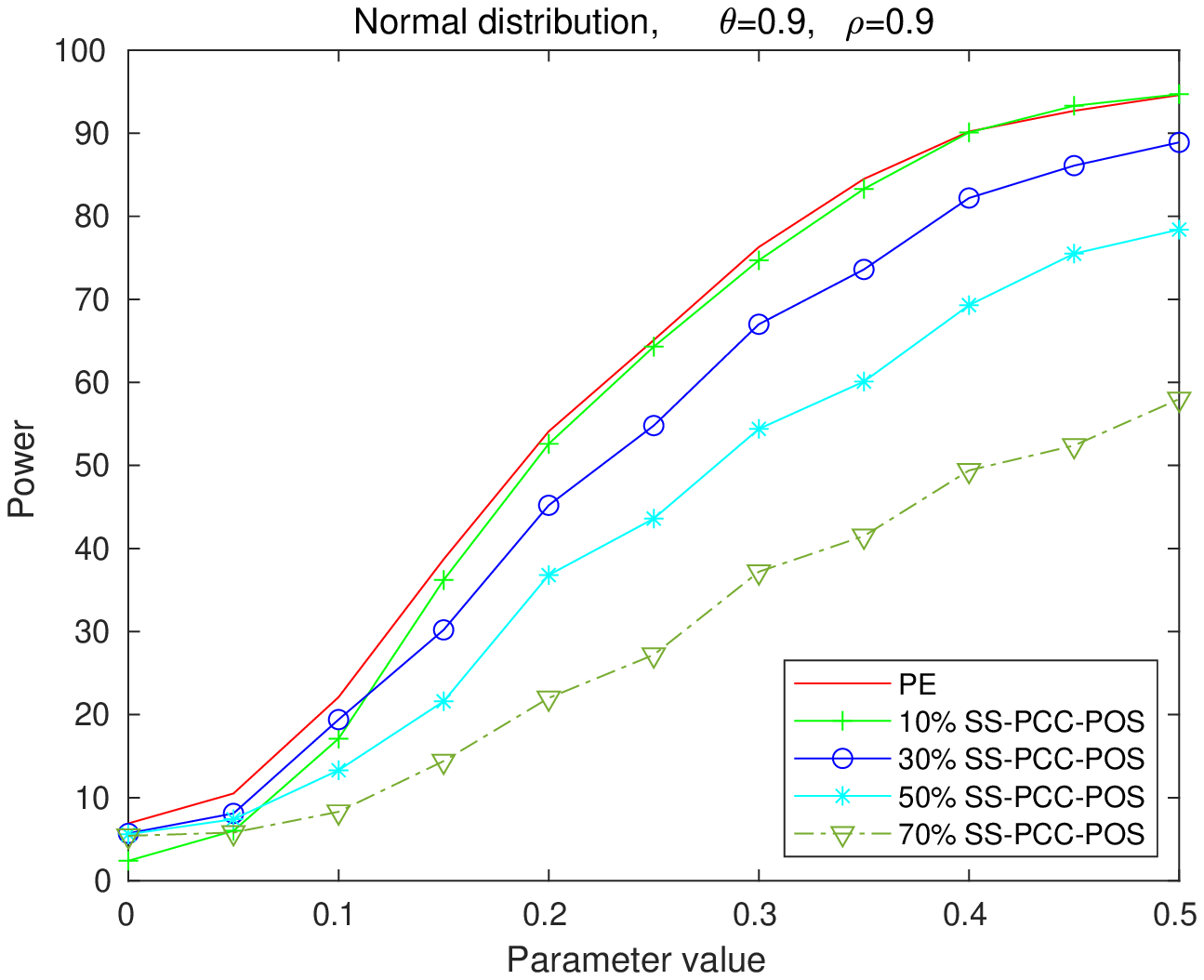}} \\[0pt]
\vspace{1pt}
\end{center}
\doublespacing
Note: These figures compare the power
envelope the PCC-POS test statistic using different split-samples: 10\%, 30\%, 50\%, 70\%. \textquotedblleft PE\textquotedblright refers to the power envelope of the PCC-POS test.
\label{fig: SS17}
\end{figure}

The split-sample technique involves splitting a sample of size $T$ into two independent subsamples, say $T_1$ and $T_2$, such that $T=T_1+T_2$. The first subsample is then used to estimate the alternative $\bm{\beta}_1$, while the other is purposed for computing the PCC-POS test statistic. Assuming that $f(\bm{x}_{t-1},\bm{\beta})=\bm{x}_{t-1}'\bm{\beta}$, the alternative $\bm{\beta}_1$ can be estimated using OLS
\[
\hat{\bm{\beta}}_{(1)}=(X_{(1)}'X_{(1)})^{-1}X_{(1)}'y_{(1)}.
\]
We provide a caveat that the OLS estimator is sensitive to extreme outliers, which motivates the use of robust estimators [see. \citet{maronna2019robust} for a review of robust estimators]. Using $\hat{\bm{\beta}}_{(1)}$ and the observations in the second independent subsample, we compute the test-statistic as follows
\begingroup
\allowdisplaybreaks
\begin{align*}
SN_{T}(\bm{\beta}_0\mid\bm{\beta}_{(1)})&=\sum\limits_{t=T_1+2}^{T}\sum\limits_{l=t-1}^{2}\ln c_{\tilde{\sigma}_{lt}t\mid \tilde{\sigma}_{(T_1+1)t},\cdots,\tilde{\sigma}_{t-1,t}}+\sum\limits_{t=T_1+2}^{T}\ln c_{\tilde{\sigma}_{(T_1+1)t}t}+\\
&\textcolor{white}{=}\sum\limits_{t=T_1+1}^{T}\ln\left\{\frac{1-p_t[\bm{x}_{t-1},\bm{\beta}_0,\bm{\beta}_{(1)}\mid X]}{p_t[\bm{x}_{t-1},\bm{\beta}_0,\bm{\beta}_{(1)}\mid X]}\right\}s(y_t-\bm{x}_{t-1}'\bm{\beta}_0).
\end{align*}
\endgroup
where for $t=T_1+2,\cdots,T$ and D-vine-array $\tilde{A}=(\tilde{\sigma}_{lt})_{1\leq l\leq t\leq n}$, $l=T_1+1,\cdots,T-1$ is the row with tree $\mathcal{T}_l$, and column $t$ has the permutation $\tilde{\underline{\sigma}}_{t-1}=(\tilde{\sigma}_{(T_1+1)t},\cdots,\tilde{\sigma}_{t-1,t})$ of the previously added variables, $p_t[\bm{x}_{t-1},\bm{\beta}_0,\bm{\beta}_{(1)}\mid X]=P_t[\varepsilon_t\leq \bm{x}_{t-1}'(\bm{\beta}_0-\bm{\beta}_1)\mid X]$, and $\tilde{\text{\b{S}}}_{t-1}=s(y_{t-1}-\bm{x}_{t-2}'\bm{\beta}_0),\cdots,s(y_{T_1+2}-\bm{x}_{T_1+1}'\bm{\beta}_0)$.
\begin{figure}[tbph]
\caption{Power comparisons: different split-samples. Cauchy error distributions with
different values of $\rho $ in (\ref{eq: errorsim}) and $\theta =0.9$ in (\ref{eq: theta})}
\begin{center}
\subfigure{\includegraphics[scale=0.57]{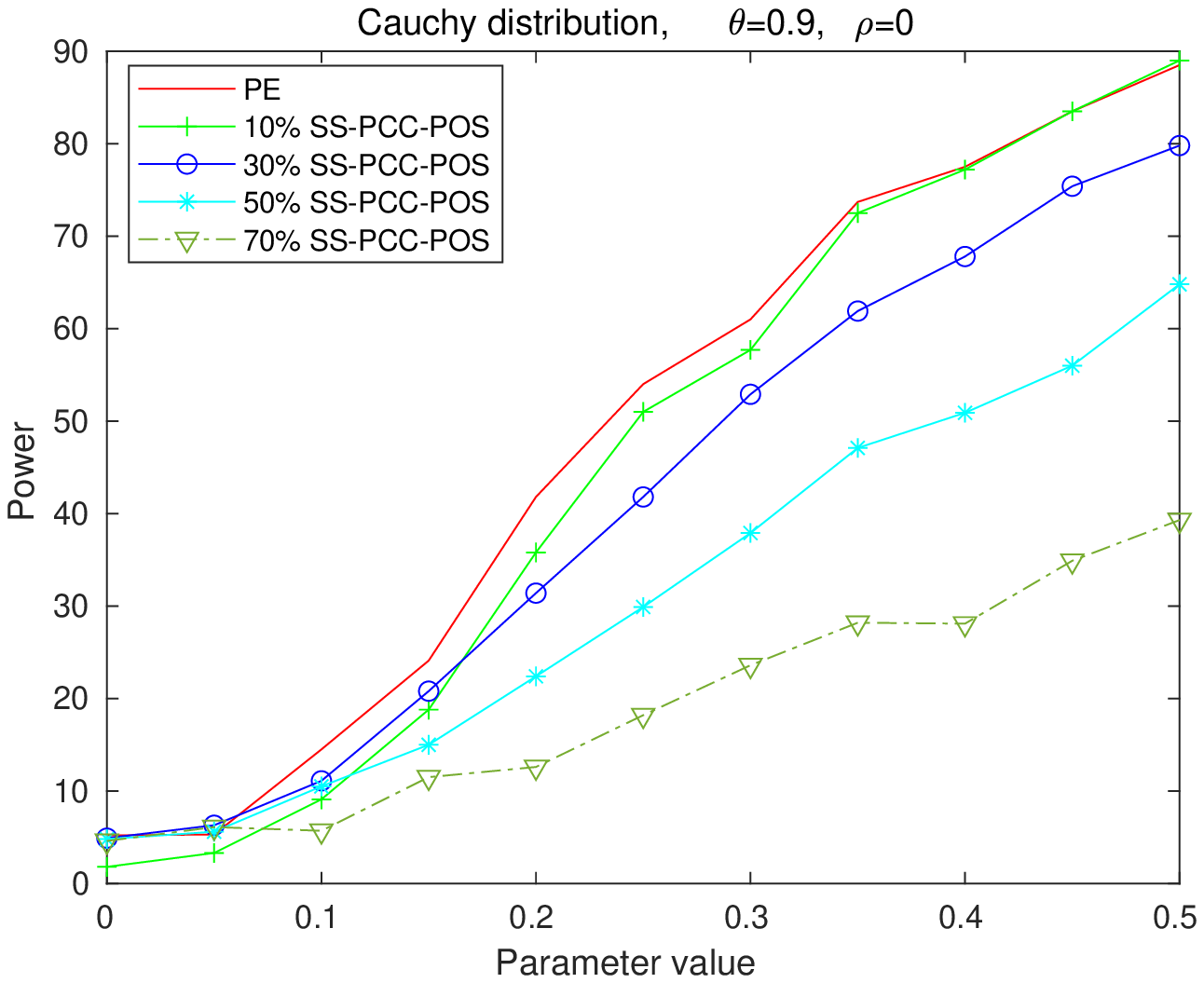}} %
\subfigure{\includegraphics[scale=0.57]{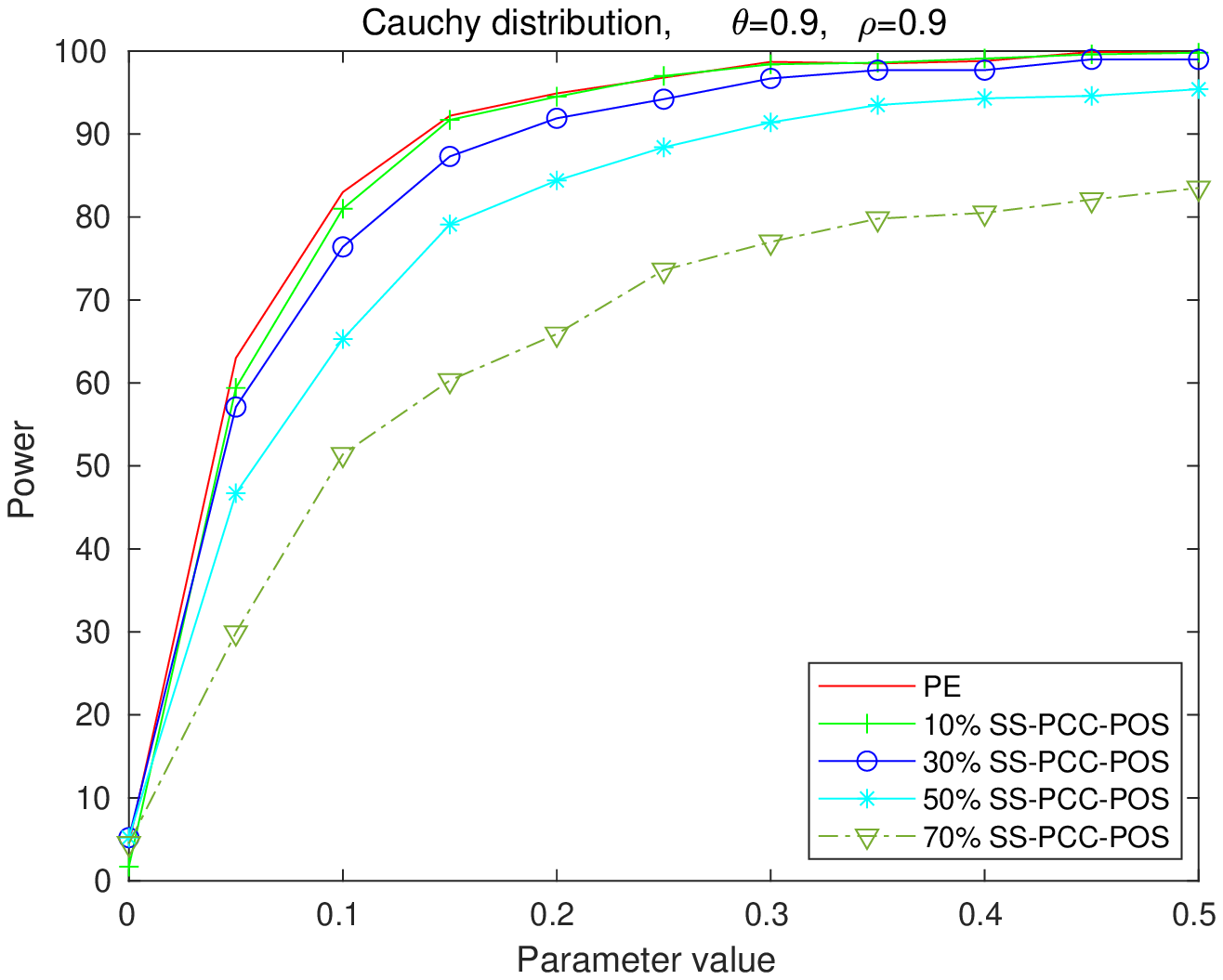}} \\[0pt]
\end{center}
\doublespacing
Note: These figures compare the power
envelope the PCC-POS test statistic using different split-samples: 10\%, 30\%, 50\%, 70\%. \textquotedblleft PE\textquotedblright refers to the power envelope of the PCC-POS test.
\label{fig: SS28}
\end{figure}
The choices for the subsamples $T_1$ and $T_2$ can be arbitrary. However, our simulations show that the proportion of the observations retained for estimating the alternative and in turn for computing the PCC-POS test statistic has an impact on the power of the test. We find that the power function of the split-sample PCC-POS test (SS-PCC-POS test hereafter) is closest to that of the power envelope, when a relatively small number of observations is retained for estimating the alternative, with the rest used for computing the test statistic - findings that are in line with \citet{dufour2010exact}. Specifically, by considering all the DGPs in our simulations study, we find that the subsamples $T_1$ and $T_2$ must in turn contain roughly $10\%$ and $90\%$ of the observations in the entire sample respectively.  
\begin{figure}[tbph]
\caption{Power comparisons: different split-samples. Student's $t$ error distributions with 2 degrees of freedom [i.e $t(2)$] with
different values of $\rho $ in (\ref{eq: errorsim}) and $\theta =0.9$ in (\ref{eq: theta})}
\begin{center}
\subfigure{\includegraphics[scale=0.58]{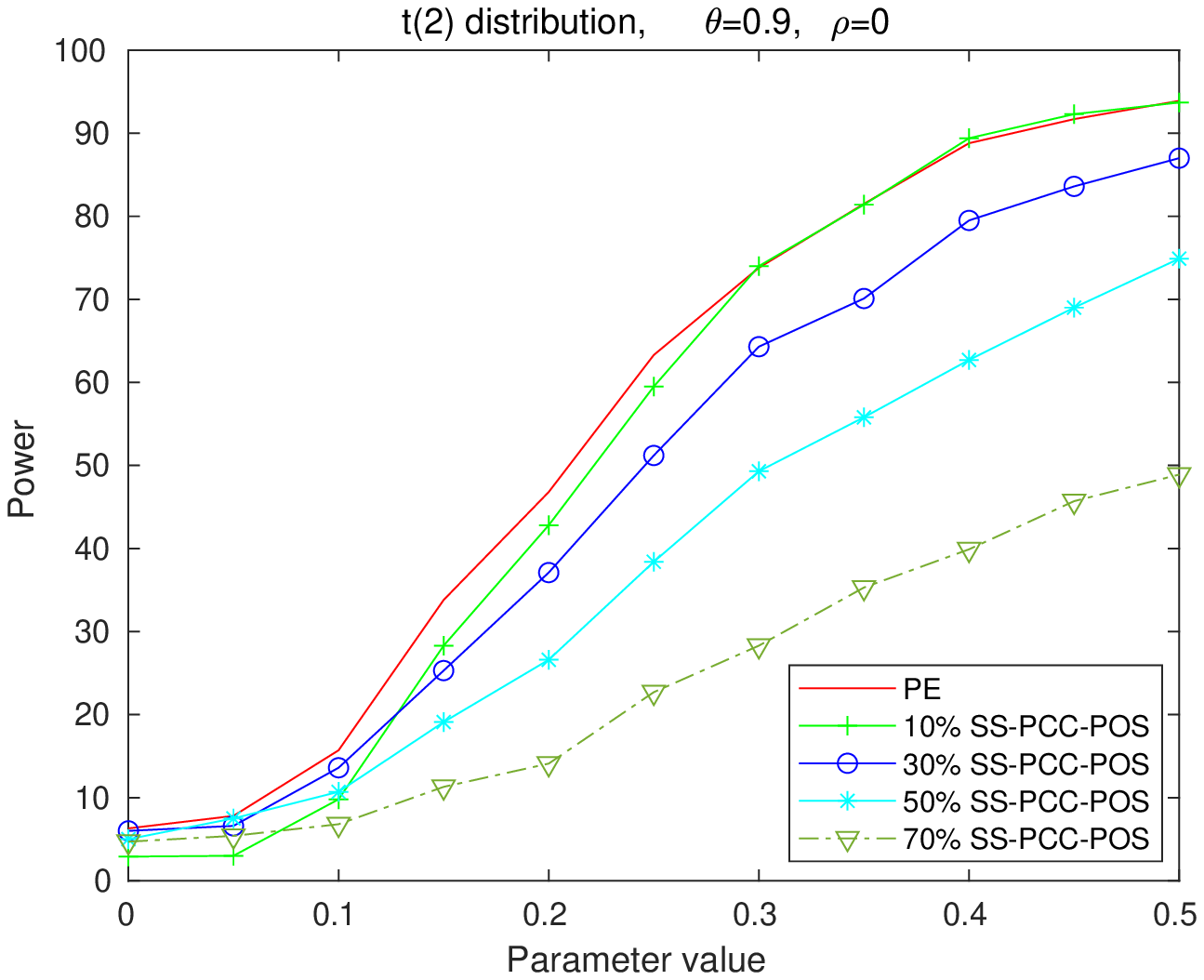}} %
\subfigure{\includegraphics[scale=0.58]{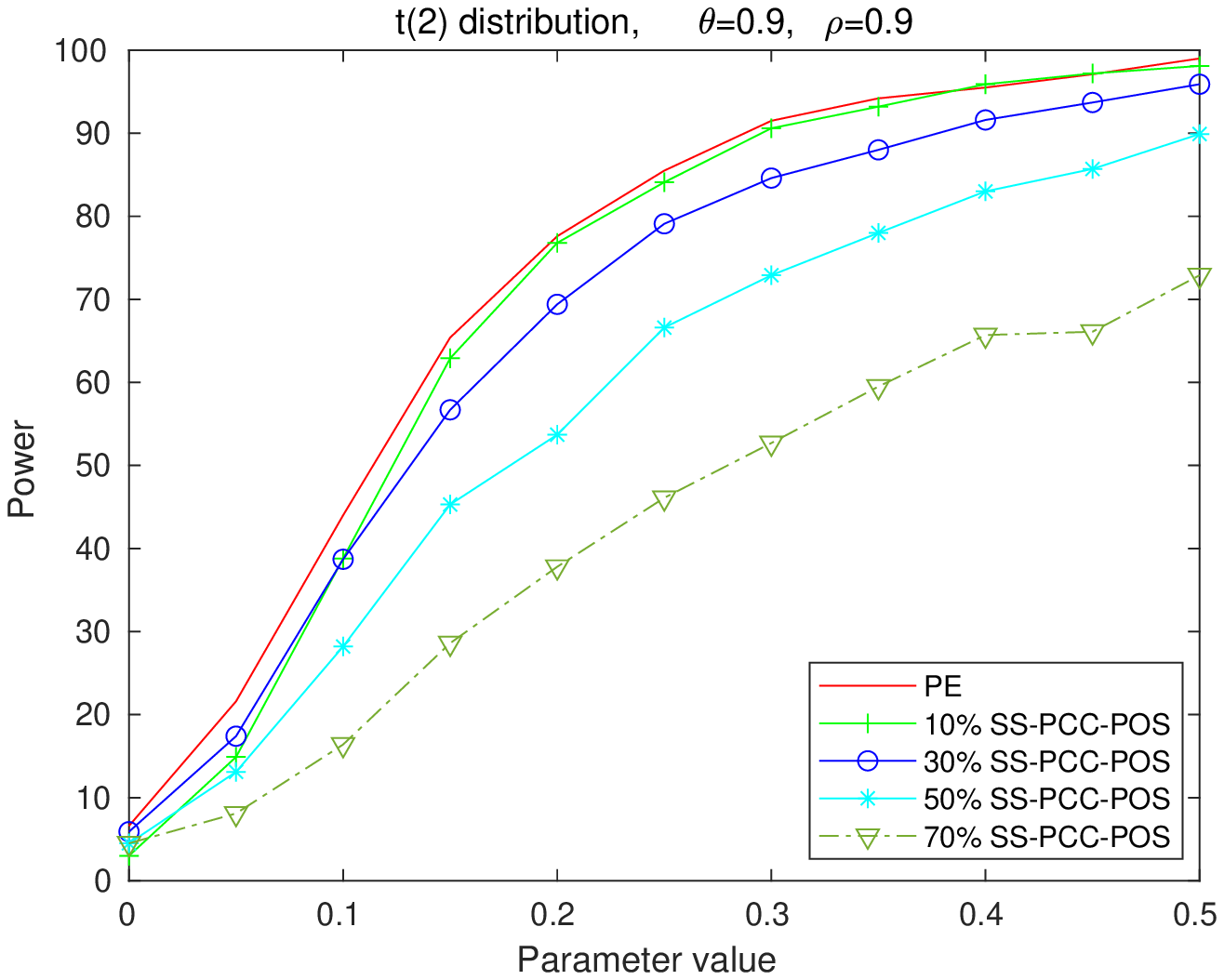}} \\[0pt]
\end{center}
\doublespacing
Note: These figures compare the power
envelope the PCC-POS test statistic using different split-samples: 10\%, 30\%, 50\%, 70\%. \textquotedblleft PE\textquotedblright refers to the power envelope of the PCC-POS test.
\label{fig: Power comparison using different tests Normal}
\end{figure}

\begin{figure}[tbph]
\caption{Power comparisons: different split-samples. Normal error distributions with break in variance, with
different values of $\rho $ in (\ref{eq: errorsim}) and $\theta =0.9$ in (\ref{eq: theta})}
\begin{center}
\subfigure{\includegraphics[scale=0.58]{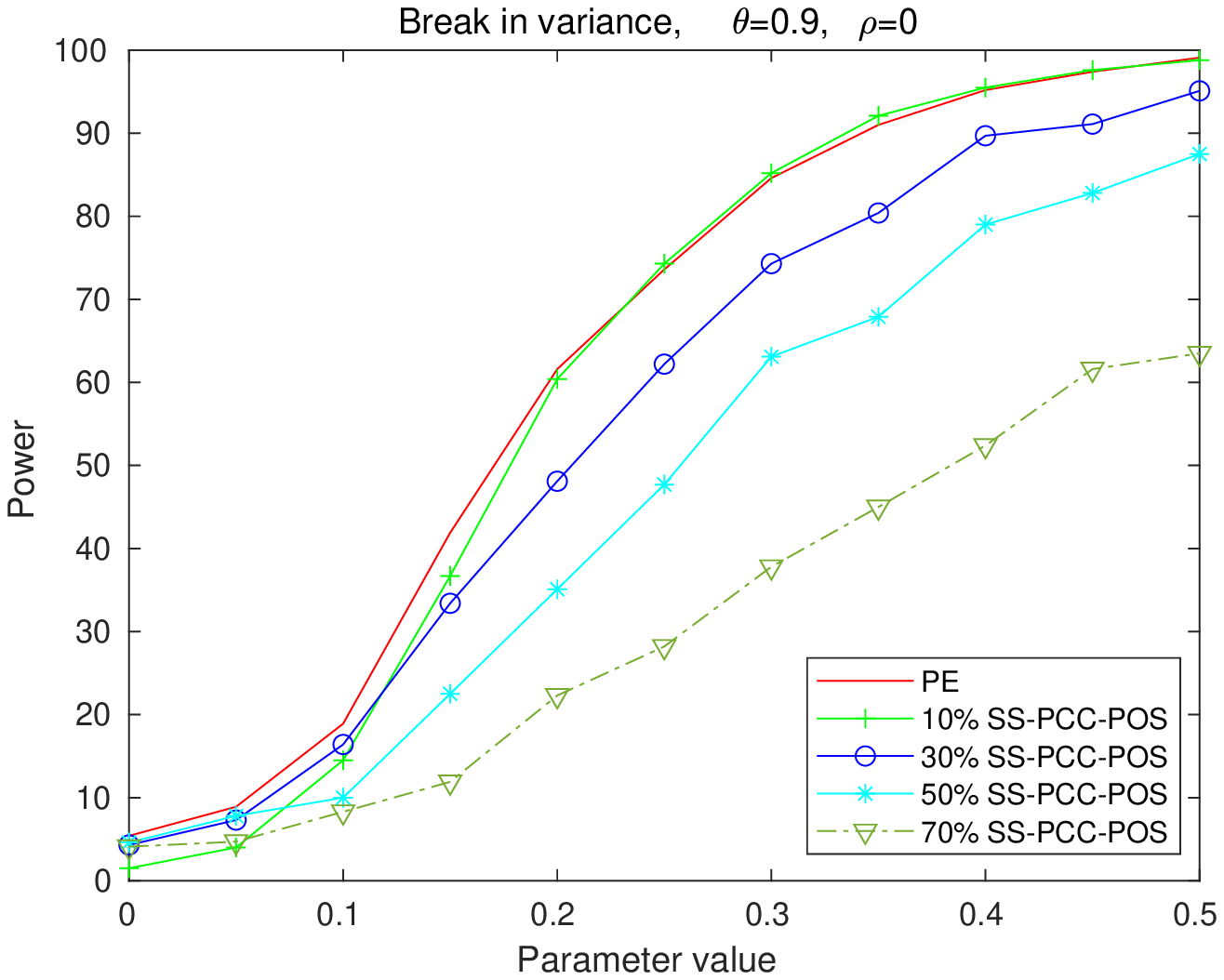}} %
\subfigure{\includegraphics[scale=0.58]{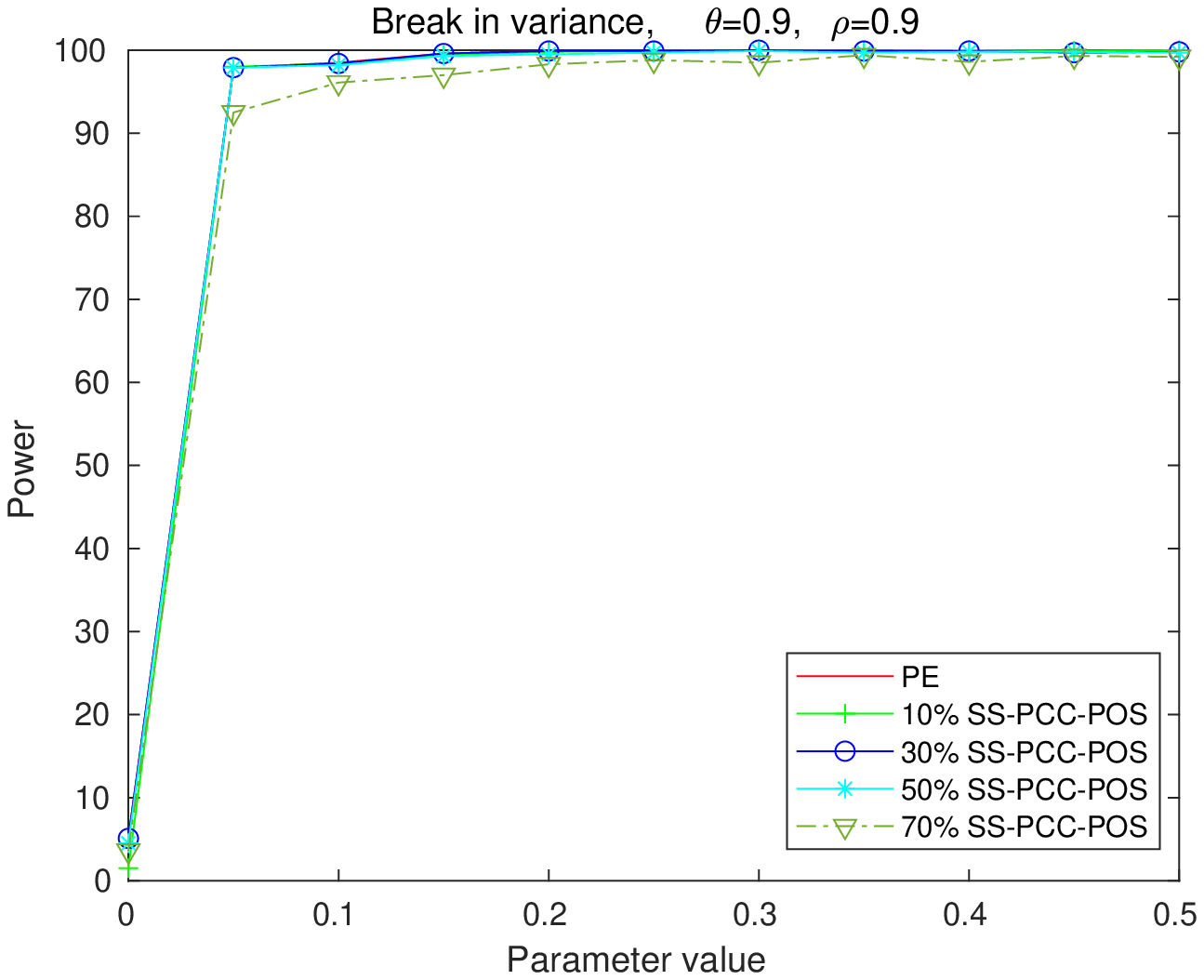}} \\[0pt]
\end{center}
\doublespacing
Note: These figures compare the power
envelope the PCC-POS test statistic using different split-samples: 10\%, 30\%, 50\%, 70\%. \textquotedblleft PE\textquotedblright refers to the power envelope of the PCC-POS test.
\label{fig: SS410}
\end{figure}
\section{PCC-POS confidence regions \label{projectiontechniqueC3}}

{\hskip 1.5em}In this Section, we lay out the theoretical framework for building confidence regions for a vector
(sub-vector) of the unknown parameters $\bm{\beta}$, say $C_{\bm{\beta}}(\bm{\alpha})$, at a given
significance level $\alpha $, using the proposed PCC-POS tests. Consider model (\ref{eq: modelnl}) such that $f(\bm{x}_{t-1},\bm{\beta})=\bm{x}_{t-1}'\bm{\beta}$.
Suppose we wish to test the null hypothesis (\ref{eq: nullnl})
against the alternative hypothesis (\ref{eq: altnl}). Formally, this implies finding all the values of $\bm{\beta}_{0}\in \mathbb{R}^{k}$ such that
\begin{equation}
SN_{T}(\bm{\beta}_0\mid\bm{\beta}_{1})=\sum\limits_{t=2}^{T}\sum\limits_{l=t-1}^{2}\ln c_{\tilde{\sigma}_{lt}t,\mid \tilde{\sigma}_{1t},\cdots,\tilde{\sigma}_{t-1,t}}+\sum\limits_{t=2}^{T}\ln c_{\tilde{\sigma}_{1t}t}+\sum\limits_{t=1}^{T}s(y_t-\bm{\beta}_0'-\bm{x}_{t-1})\tilde{a}_t(\bm{\beta}_0\mid\bm{\beta}_1)<c_1(\bm{\beta}_0,\bm{\beta}_1).
\end{equation}%
where the critical value is given by the smallest constant $c(\bm{\beta}
_{0},\bm{\beta}_{1})$ such that
\begin{equation*}
P\left[SN_{T}(\left. \bm{\beta}_{0}\right\vert \bm{\beta}
_{1})>c(\bm{\beta}
_{0},\bm{\beta}_{1})\mid\bm{\beta} =\bm{\beta}_{0}\right]\leq\alpha 
\end{equation*}%
Thus, the confidence region of the vector of parameters $\bm{\beta}$ can be
defined as follows:
\begin{equation*}
C_{\bm{\beta}}(\alpha )=\left\{ \bm{\beta} _{0}:SN_{T}(\left. \bm{\beta}_{0}\right\vert \bm{\beta}
_{1})<c(\bm{\beta} _{0},\bm{\beta}_{1})|P[SN_{T}(\left. \bm{\beta}_{0}\right\vert \bm{\beta}
_{1})>c(\bm{\beta}_{0},\bm{\beta}
_{1})\mid\bm{\beta} =\bm{\beta}_{0}]\leq \alpha \right\} .
\end{equation*}%
Given the confidence region $C_{\bm{\beta} }(\alpha )$, confidence intervals for the components and sub-vectors of vector $\bm{\beta}$ can be derived using the projection techniques [see \citet{dufour2010exact} and \citet{coudin2009finite}]. Confidence sets in the form of transformations $T$ of $\bm{\beta}\in%
\mathbb{R}^{m}$, for $m\leq (k+1)$, say $T(C_{\bm{\beta}}(\alpha ))$, can easily be found using these techniques. Since, for any set $C_{\bm{\beta} }(\alpha )$
\begin{equation}
\bm{\beta} \in C_{\bm{\beta} }(\alpha )\implies T(\bm{\beta} )\in T(C_{\bm{\beta} }(\alpha )),
\label{cr1}
\end{equation}%
we have
\begin{equation}
P[\bm{\beta} \in C_{\bm{\beta} }(\alpha )]\geq 1-\alpha \implies P%
[T(\bm{\beta} )\in T(C_{\bm{\beta}}(\alpha ))]\geq 1-\alpha   \label{cr2}
\end{equation}%
where
\begin{equation*}
T(C_{\bm{\beta}}(\alpha ))=\{\delta \in \mathbb{R}^{m}:\exists \bm{\beta} \in
C_{\bm{\beta} }(\alpha ),T(\bm{\beta} )=\delta \}.
\end{equation*}%
\newline
From (\ref{cr1}) and (\ref{cr2}) , the set $T(C_{\bm{\beta} }(\alpha ))$ is a
conservative confidence set for $T(\bm{\beta} )$ with level $1-\alpha $. If $%
T(\bm{\beta} )$ is a scalar, then we have
\begin{equation*}
P[\inf \{T(\bm{\beta} _{0}),\quad\text{for}\quad\bm{\beta}_{0}\in C_{\bm{\beta}
}(\alpha )\}\leq T(\bm{\beta} )\leq \sup \{T(\bm{\beta} _{0}),\quad\text{for}\quad\bm{\beta}
_{0}\in C_{\bm{\beta}}(\alpha )\}]>1-\alpha .
\end{equation*}

\section{Monte Carlo study \label{sec: Monte Carlo study}}
{\hskip 1.5em}In this Section, we assess the performance of the proposed 10\% SS-PCC-POS tests (in terms of size control and power) by comparing it to other tests that are intended to be robust against non-standard distributions and heteroskedasticity of unknown form. We consider DGPs under different distributional assumptions and heteroskedasticities. For each DGP, we further consider different correlation coefficients between the errors of the predictive regression and the disturbances of the regressors. In the first subsection, the DGPs are formally introduced, and in the following subsection, the performance of the proposed 10\% SS-PCC-POS tests are compared to that of the other tests considered in our study.  
\subsection{Simulation setup \label{Simulation setup}}
{\hskip 1.5em}We assess the performance of the proposed 10\% SS-PCC-POS tests in terms of size and power, by considering various DGPs with different symmetric and asymmetric distributions and forms of heteroskedasticity. The DGPs under consideration are supposed to mimic different scenarios that are often encountered in practical settings. The performance of the 10\% SS-PCC-POS tests is compared to that of a few other tests, by considering the following linear predictive regression model
\begin{equation}
 y_t=\beta x_{t-1}+\varepsilon_t
\end{equation}
where $\beta$ is an unknown parameter and
\begin{equation}\label{eq: theta}
x_{t}=\theta x_{t-1}+u_t
\end{equation}
where $\theta=0.9$ and
\begin{equation}\label{eq: errorsim}
u_t=\rho \varepsilon_t+w_t\sqrt{1-\rho^2}
\end{equation}
for $\rho=0,0.1,0.5,0.9$ and $\varepsilon_t$ and $w_t$ are assumed to be independent. The initial value of $x$ is given by: $x_0=\frac{w_0}{\sqrt{1-\theta^2}}$ and $w_t$ are generated from $N(0,1)$. 
The errors $\varepsilon_t$ are i.n.i.d and are categorized by two groups in our simulation study. In the first group, we consider DGPs where the errors $\varepsilon_t$ possess symmetric and asymmetric distributions:
\begin{enumerate}
\item[\textbf{1.}] normal distribution: $\varepsilon_t\sim N(0,1)$;
\item[\textbf{2.}] Cauchy distribution: $\varepsilon_t\sim Cauchy$;
\item[\textbf{3.}] Student's $t$ distribution with two degrees of freedom: $\varepsilon_t\sim t(2)$;
\item[\textbf{4.}] Mixture of Cauchy and normal distributions: $\varepsilon_t \sim \mid\varepsilon_t^C\mid-(1-s_t)\mid\varepsilon_t^N\mid$, where $\varepsilon_t^C$ follows Cauchy distribution, $\varepsilon_t^N$ follows $N(0,1)$ distribution, and 
\[
P(s_t=1)=P(s_t=0)=\frac{1}{2}
\] 
\end{enumerate} 
The second group of DGPs represents different forms of heteroskedasticity:
\begin{enumerate}
\item[\textbf{5.}] break in variance: 
\begin{equation*}
\varepsilon _{t}\sim \left\{ 
\begin{array}{cc}
N(0,1) & \text{for}\ t\neq 25 \\ 
\sqrt{1000}N(0,1) & \text{for}\ t=25%
\end{array}%
\,;\right.
\end{equation*}%
\item[\textbf{6.}] GARCH$(1,\,1)$ plus jump variance:%
\begin{equation*}
\sigma _{\varepsilon }^{2}(t)=0.00037+0.0888\varepsilon
_{t-1}^{2}+0.9024\sigma _{\varepsilon }^{2}(t-1)\,,
\end{equation*}%
\begin{equation*}
\varepsilon _{t}\sim \left\{ 
\begin{array}{cc}
N(0,\sigma _{\varepsilon }^{2}(t)) & \text{for}\ t\neq 25 \\ 
50N(0,\sigma _{\varepsilon }^{2}(t)) & \text{for}\ t=25%
\end{array}%
\right. \,;
\end{equation*}%
\end{enumerate}
We consider the problem of testing the null hypothesis $H_0: \beta=0$. Our Monte Carlo simulations compare the size and power of the 10\%-PCC-POS test to those of \textit{t}-test, \textit{t}-test based on \citet{white1980heteroskedasticity} variance-correction (WT-test hereafter), and the sign-based test proposed by \citet{dufour1995exact}. Due to computational constraints, we perform only $M_1=999$ simulations to evaluate the probability distribution of the 10\% SS-PCC-POS test statistic and $M_2=1,000$ iterations for approximating the power functions of the proposed PCC-POS test and other tests. In all simulations, we consider a sample size of $T=50$. As the sign-based statistic of \citet{dufour1995exact} has a discrete distribution, it is not possible to obtain test with a precise size of $5\%$; therefore, the size of the test is $5.95\%$ for $T=50$. 
\subsection{Simulation results \label{Simulation results}}

{\hskip 1.5em}The results of the Monte Carlo study corresponding to DGPs described in Section \ref{Simulation setup} are presented in figures \ref{fig: Sim17}-\ref{fig: Sim612}. These figures compare the performance of the 10\% SS-PCC-POS test in terms of size and power, to those of the \textit{t}-test, \textit{t}-test based on White's (1980) variance-correction, as well as the sign-based procedure proposed by \citet{dufour2010exact}. The results are described in detail below.

First, figure \ref{fig: Sim17} considers the case where the error terms $\varepsilon_t$ are normally distributed. At first glance, we note that all tests control size. Evidently, our test is outperformed by the \textit{t}-test, as well as the \textit{t}-test based on White's (1980) variance-correction. The former is expected, since for normally distributed error terms, the \textit{t}-test is the most powerful test. However, the 10\% SS-PCC-POS test outperforms the sign-based procedure proposed by \citet{dufour1995exact} [CD (1995) hereafter]. Furthermore, changing the correlation coefficient $\rho$ does not seem to lead to visually significant differences in the performance of the tests.
 
Second, figure \ref{fig: Sim28} presents the results of the performance of the aforementioned tests, when the errors $\varepsilon_t$ follows Cauchy distribution. It is evident that the 10\% SS-PCC-POS test outperforms all other tests. Moreover, the \textit{t}-test and WT-test do not possess much power for low correlation coefficient (0 and 0.1) values, $\rho$. However, as the correlation between $u_t$ and $w_t$ increases, the gap between the power functions narrows significantly.  

Third, in figures \ref{fig: Sim39} and \ref{fig: Sim410}, we have considered the cases where the errors in turn follow $t(2)$ and mixture distributions. In the former case of $t(2)$ distributed errors, the 10\% SS-PCC-POS test outperforms the rest; however, for almost all correlation coefficients $\rho$, the gap between the power functions is rather small, albeit it is narrowest for $\rho=0.9$. In the case of errors with mixture distribution, our 10\% SS-PCC-POS test is still the most powerful test. On other hand, it is evident that the \textit{t}-test and WT-test do not possess much power for small values (0 and 0.1) of correlation coefficient $\rho$. However, the power functions increase and converge to those of the other tests, as the correlation increases.

An interesting observation is the stark contrast between the power of the 10\% SS-PCC-POS test and the \textit{t}-test, when the errors follow the Cauchy, $t(2)$ and normal distributions respectively. The Cauchy and $t(2)$ distributions possess heavy tails, in the presence of which the standard error of the regression coefficients is inflated, which in turn leads to low power when the mean is used as a measure of central tendency. For instance, the Cauchy distribution has the heaviest tails among the considered  DGPs, as a result of which the \textit{t}-test and WT-test have very low power. By noting that a Student's $t$ distribution with $\nu$ degrees of freedom converges to the Cauchy distribution for $\nu=1$ and to the normal distribution as $\nu\rightarrow \infty$, it would be interesting to see at which degree of freedom the 10\% SS-PCC-POS test is outperformed by the \textit{t}-test and WT-tests.  Figures \ref{fig: c21}-\ref{fig: c24} suggest that for different values of $\rho$ in (\ref{eq: errorsim}) the \textit{t}-test and WT test outperform the 10\% SS-PCC-POS test for $\nu=4$. Interestingly, figure \ref{fig: c25} shows that the tails of the $t(2)$ distribution are substantially heavier than that of the $t(4)$, which may explain the transition.

Finally, in figures \ref{fig: Sim39} and \ref{fig: Sim410}, the errors are normally distributed with different forms of heteroskedasticity. In the first case [see figure \ref{fig: Sim39}], there is a break in variance, in the presence of which our test outperforms the other tests. Furthermore, the \textit{t}-test and WT-test do not possess any power for low correlation (0 and 0.1) values of $\rho$. However, with increasing values of the correlation coefficient the power curves of all test appear to converge. In the other case [see figure \ref{fig: Sim410}], the variance follows a GARCH(1,1) model with a jump in variance. In this case, our test is only outperformed by the CD (1995) test, which has the greatest power. Nevertheless, the 10\% SS-PCC-POS test still outperforms the \textit{t}-test and WT-test.  
\begin{figure}[tbph]
\caption{Power comparisons: different tests. Normal error distributions with
different values of $\protect\rho $ in (\protect\ref{eq: errorsim}) and $\protect%
\theta =0.9$ in (\protect\ref{eq: theta})}
\begin{center}
\subfigure{\includegraphics[scale=0.58]{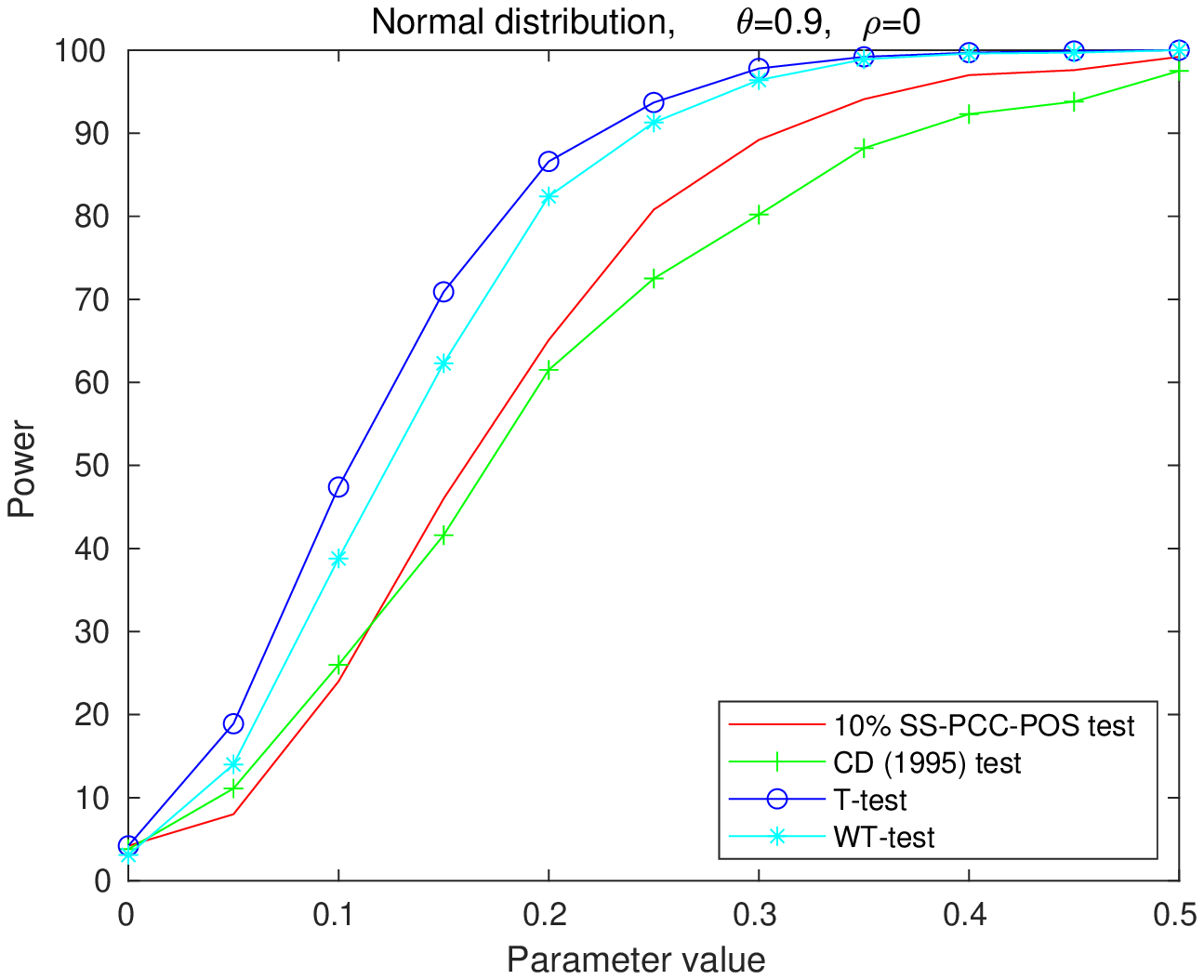}} %
\subfigure{\includegraphics[scale=0.58]{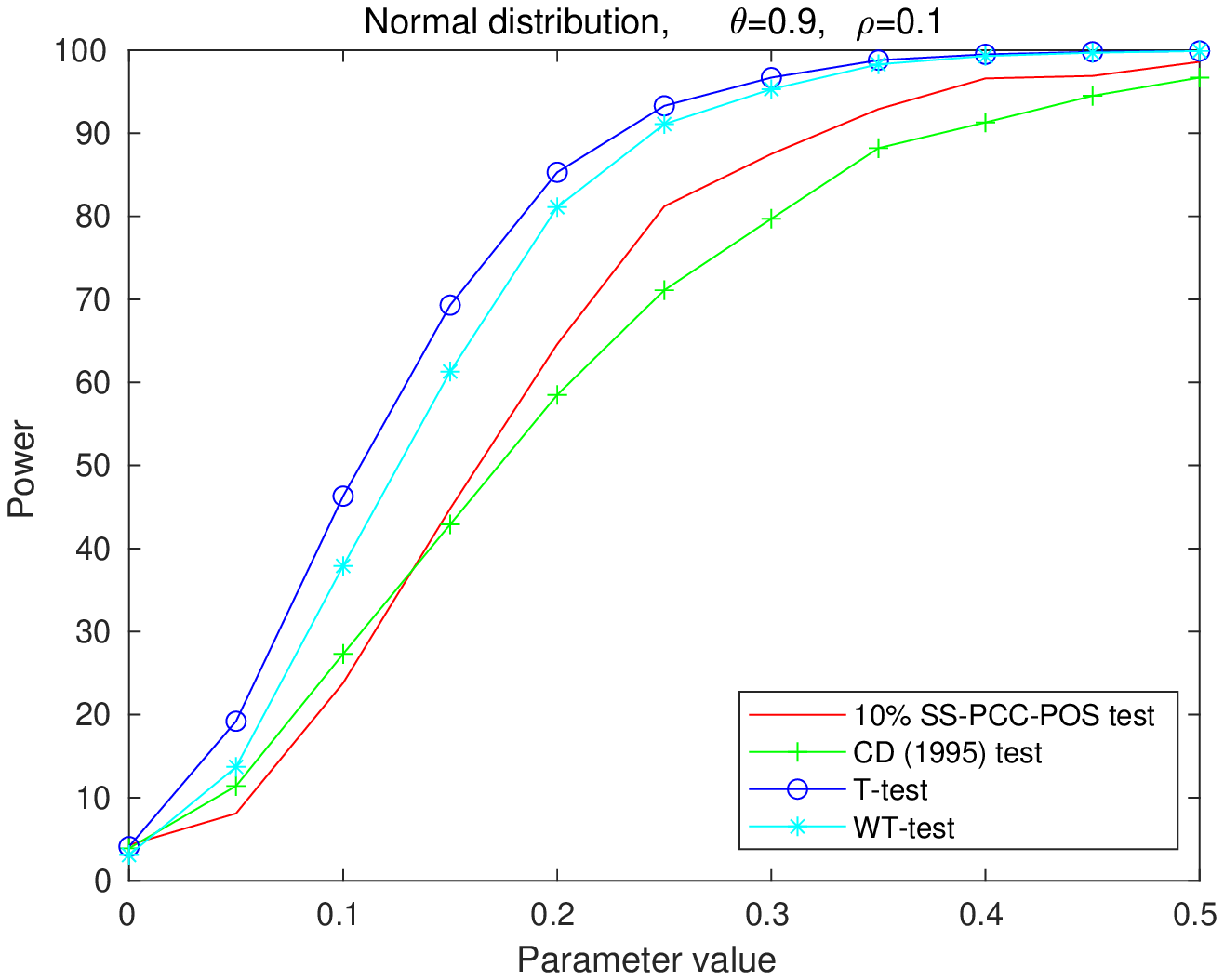}} \\[0pt]
\subfigure{\includegraphics[scale=0.58]{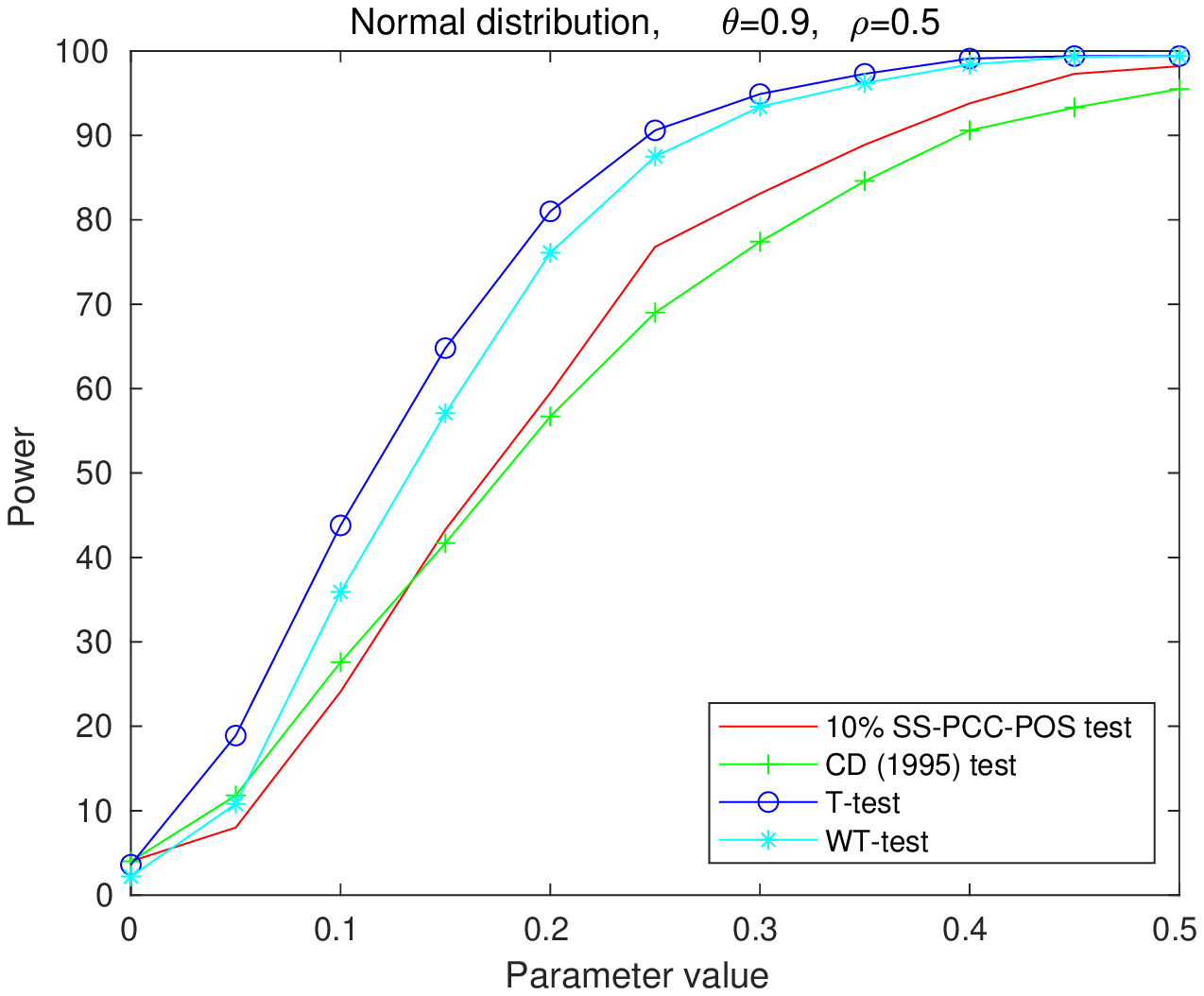}} %
\subfigure{\includegraphics[scale=0.58]{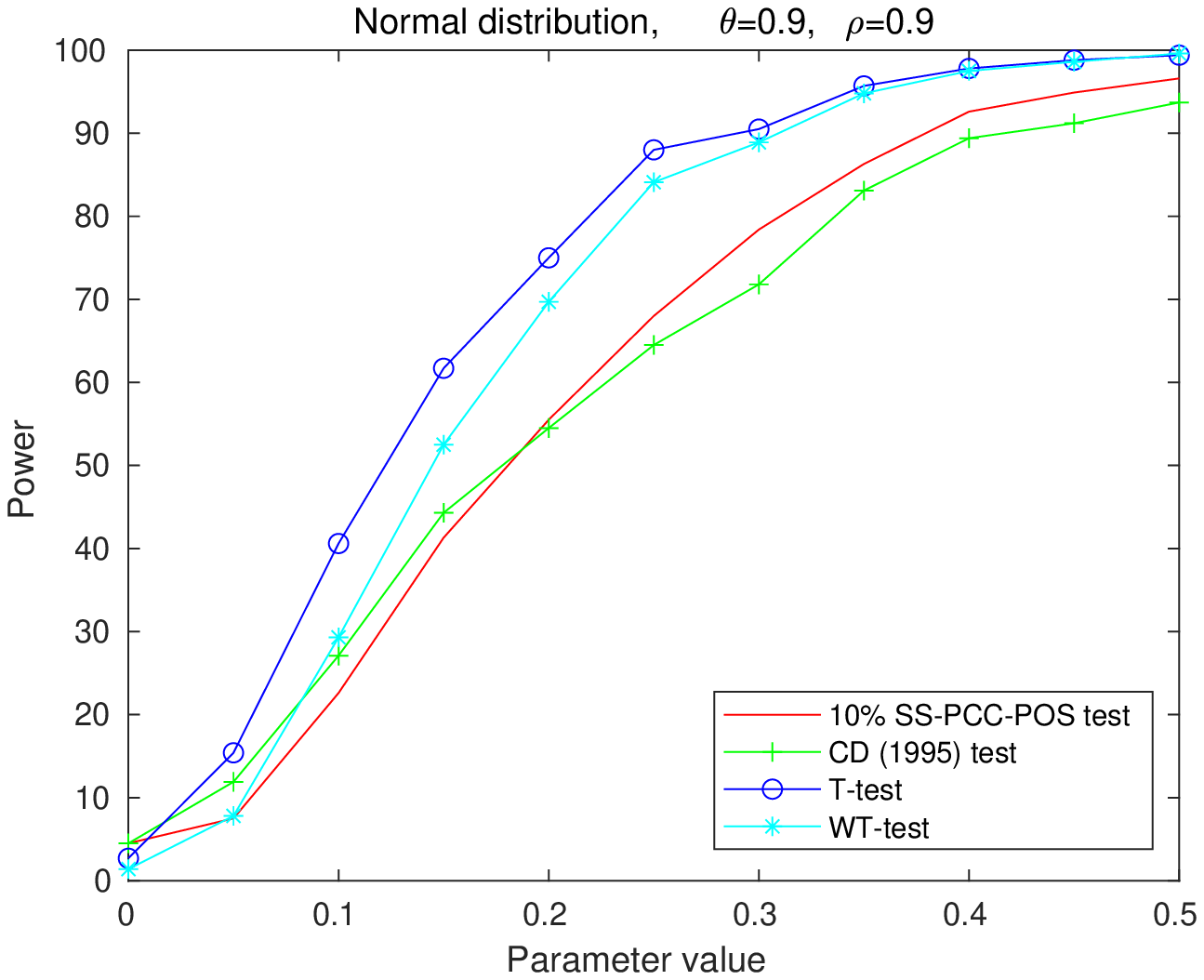}}\\[0pt]
\end{center}
\doublespacing
Note: These figures compare the power curves of the 10\% split-sample PCC-POS test
[10\% SS-PCC-POS test] with: (1) the \textit{t}-test; (2) the sign-based test
proposed by Campbell and Dufour (1995) [CD (1995) test]; and (3) the \textit{t}-test based
on White's (1980) variance correction [WT-test]. 
\label{fig: Sim17}
\end{figure}

\begin{figure}[tbph]
\caption{Power comparisons: different tests. Cauchy error distributions with
different values of $\protect\rho $ in (\protect\ref{eq: errorsim}) and $\protect%
\theta =0.9$ in (\protect\ref{eq: theta})}
\begin{center}
\subfigure{\includegraphics[scale=0.57]{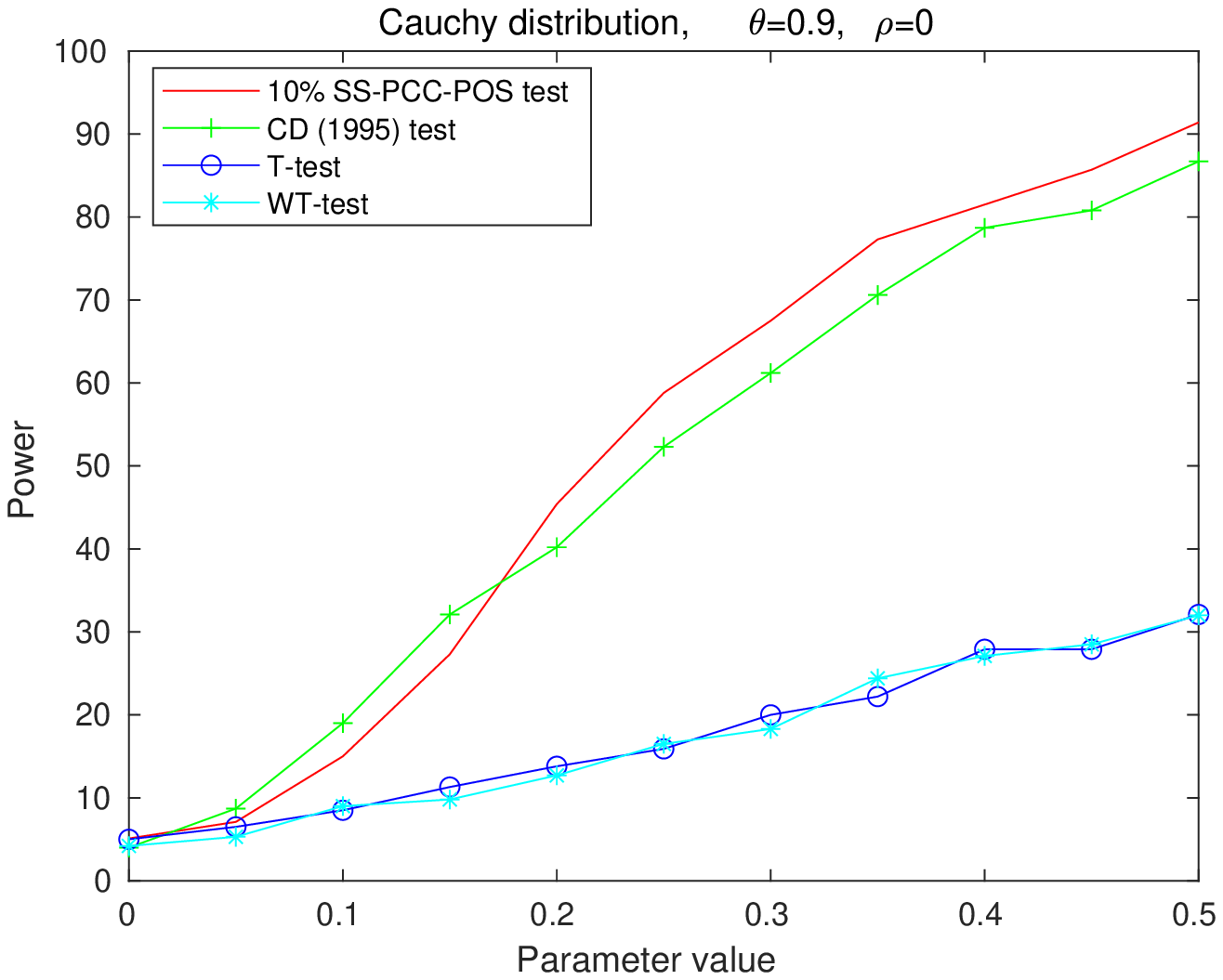}} %
\subfigure{\includegraphics[scale=0.57]{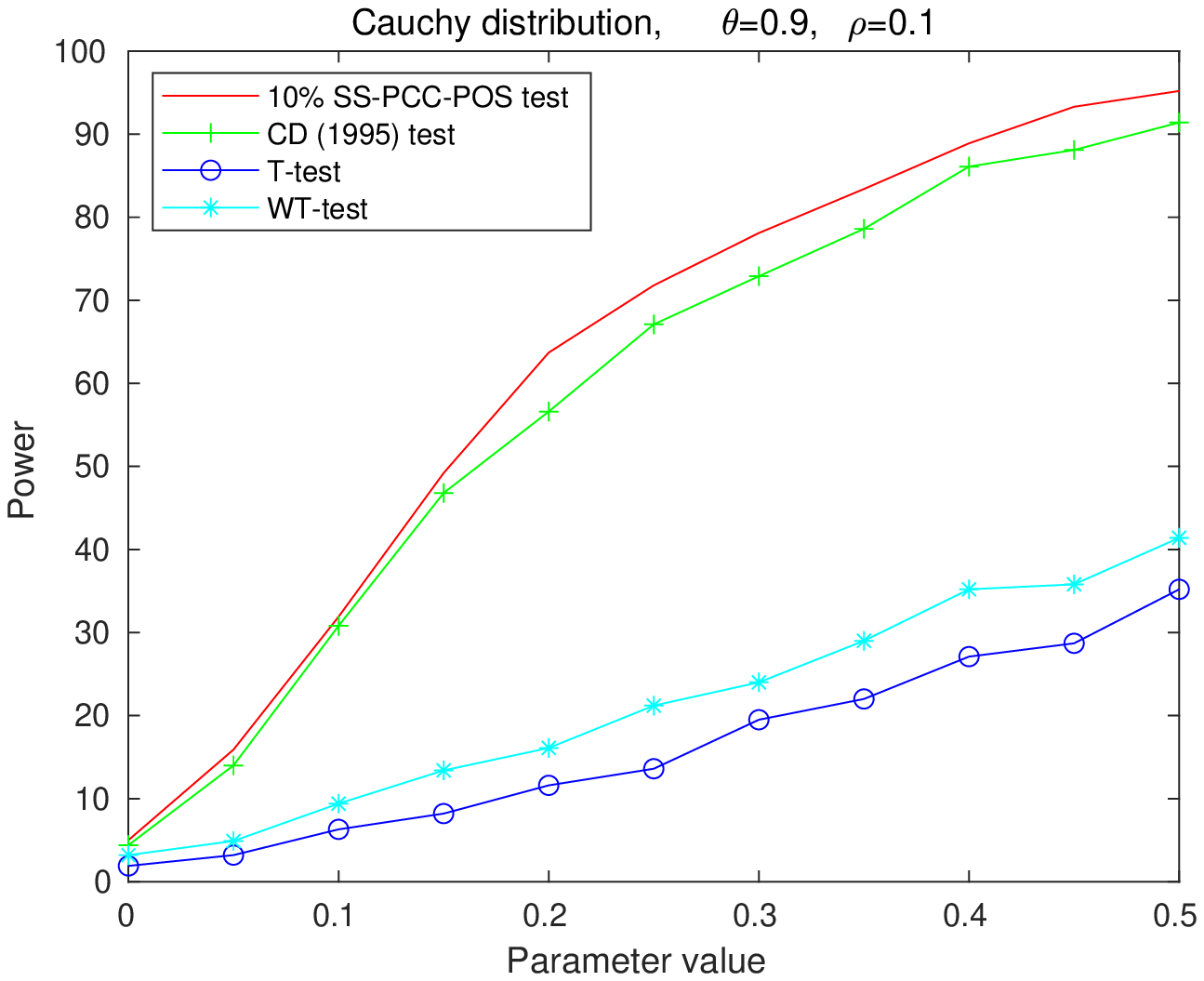}} \\[0pt]
\subfigure{\includegraphics[scale=0.57]{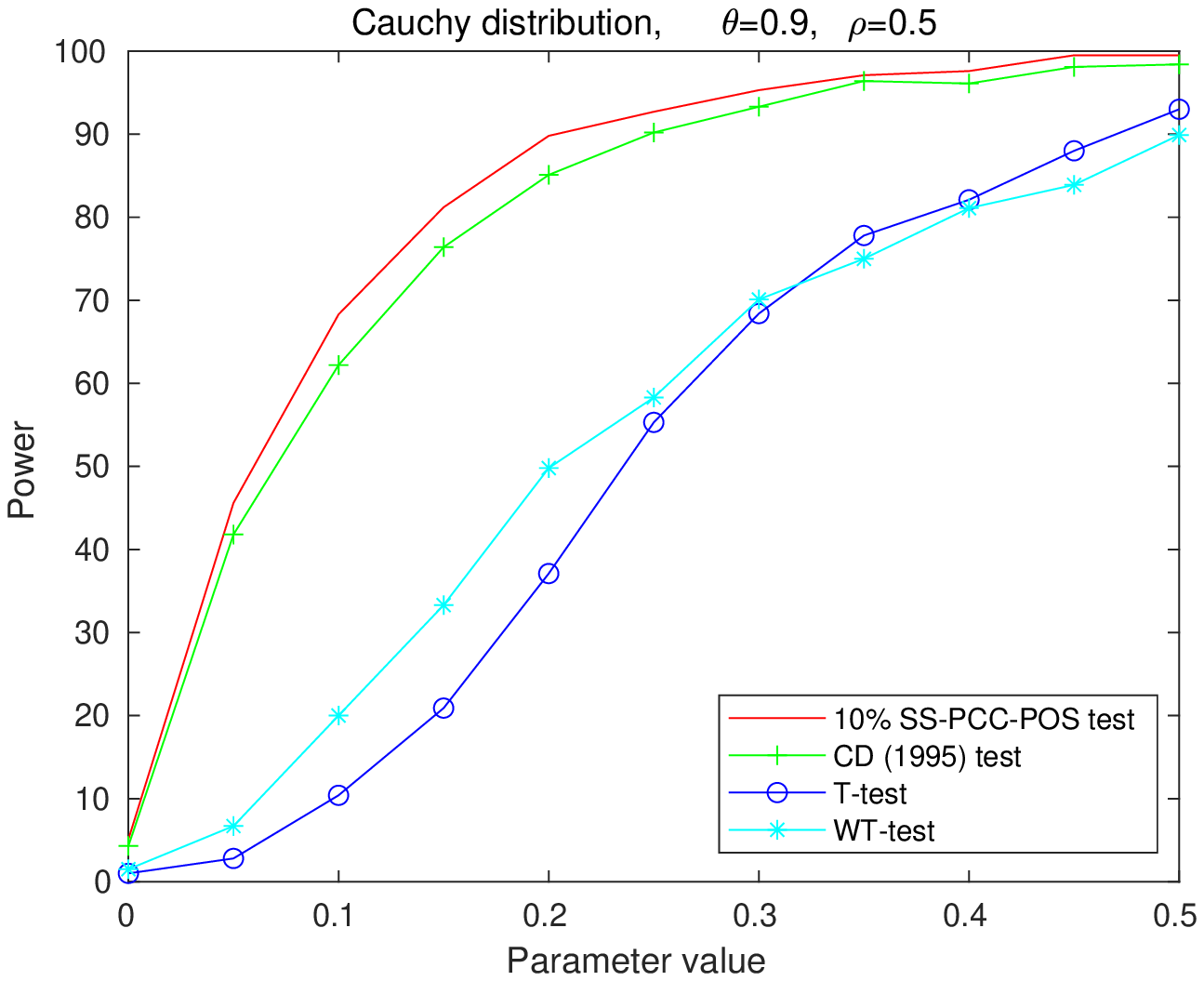}} %
\subfigure{\includegraphics[scale=0.57]{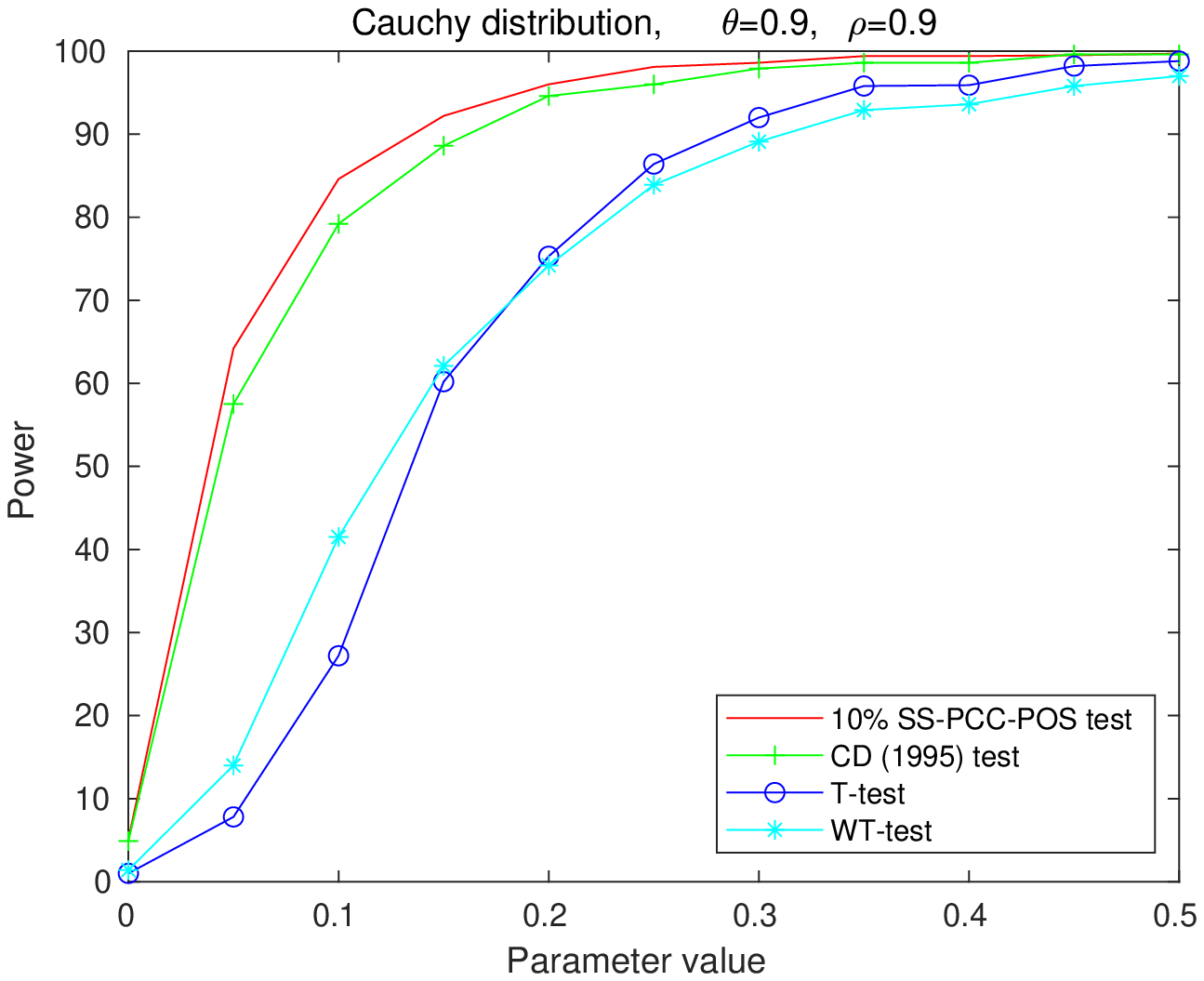}}\\[0pt]
\end{center}
\doublespacing
 Note: These figures compare the power curves of the 10\% split-sample PCC-POS test
[10\% SS-PCC-POS test] with: (1) the \textit{t}-test; (2) the sign-based test
proposed by Campbell and Dufour (1995) [CD (1995) test]; and (3) the \textit{t}-test based
on White's (1980) variance correction [WT-test]. 
\label{fig: Sim28}
\end{figure}

\begin{figure}[tbph]
\caption{Power comparisons: different tests. Student's $t$ error distributions with 2 degrees of freedom [i.e $t(2)$], with
different values of $\protect\rho $ in (\protect\ref{eq: errorsim}) and $\protect%
\theta =0.9$ in (\protect\ref{eq: theta})}
\begin{center}
\subfigure{\includegraphics[scale=0.58]{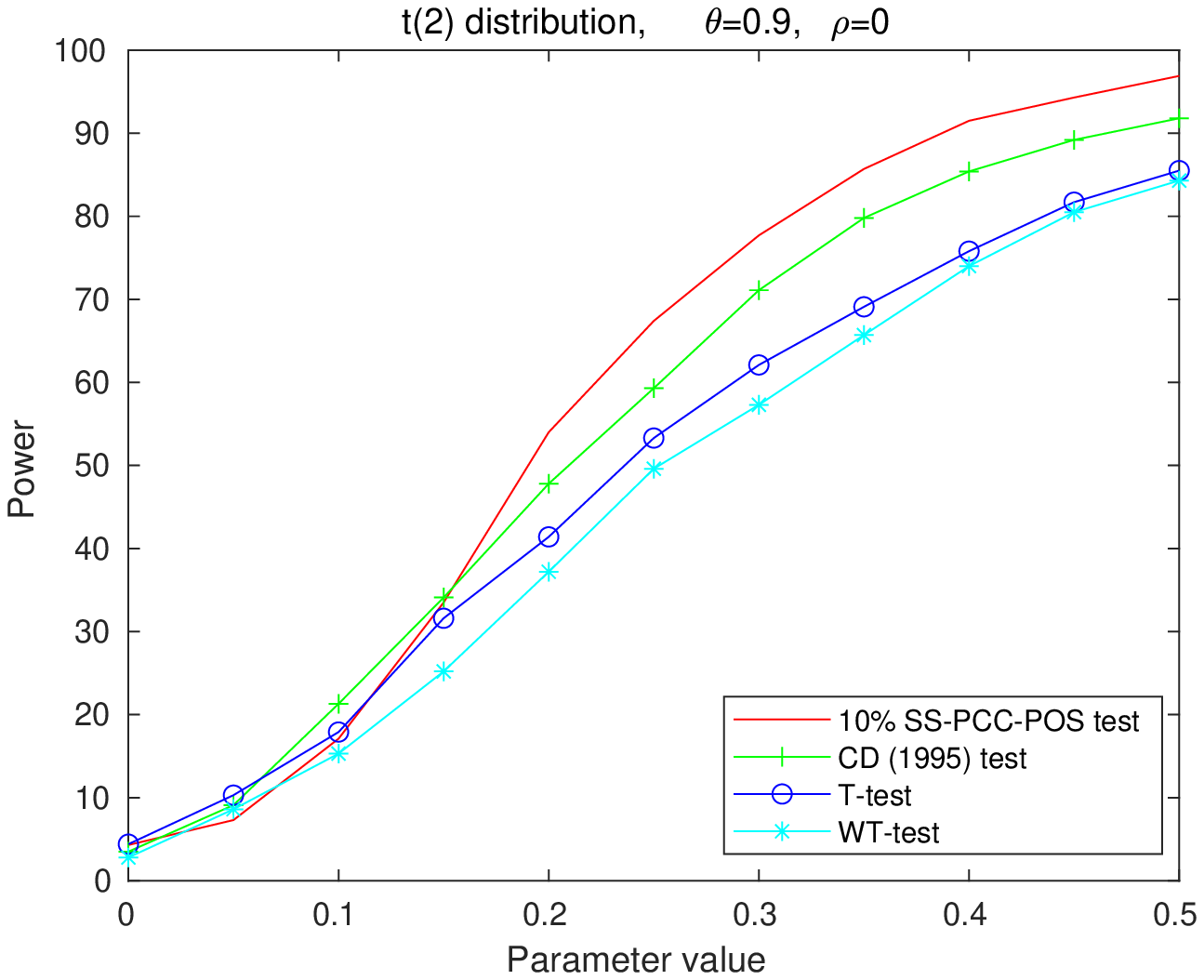}} %
\subfigure{\includegraphics[scale=0.58]{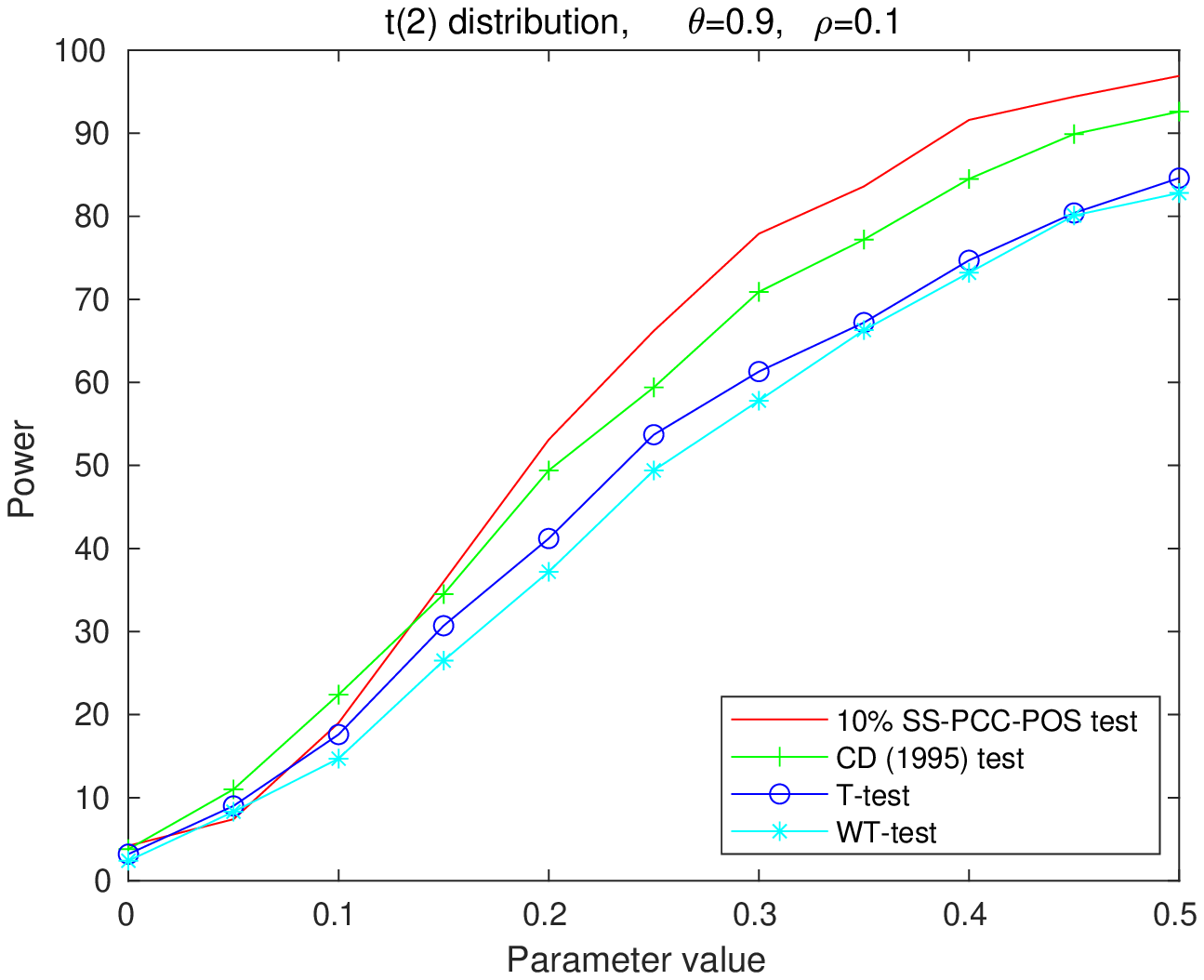}} \\[0pt]
\subfigure{\includegraphics[scale=0.58]{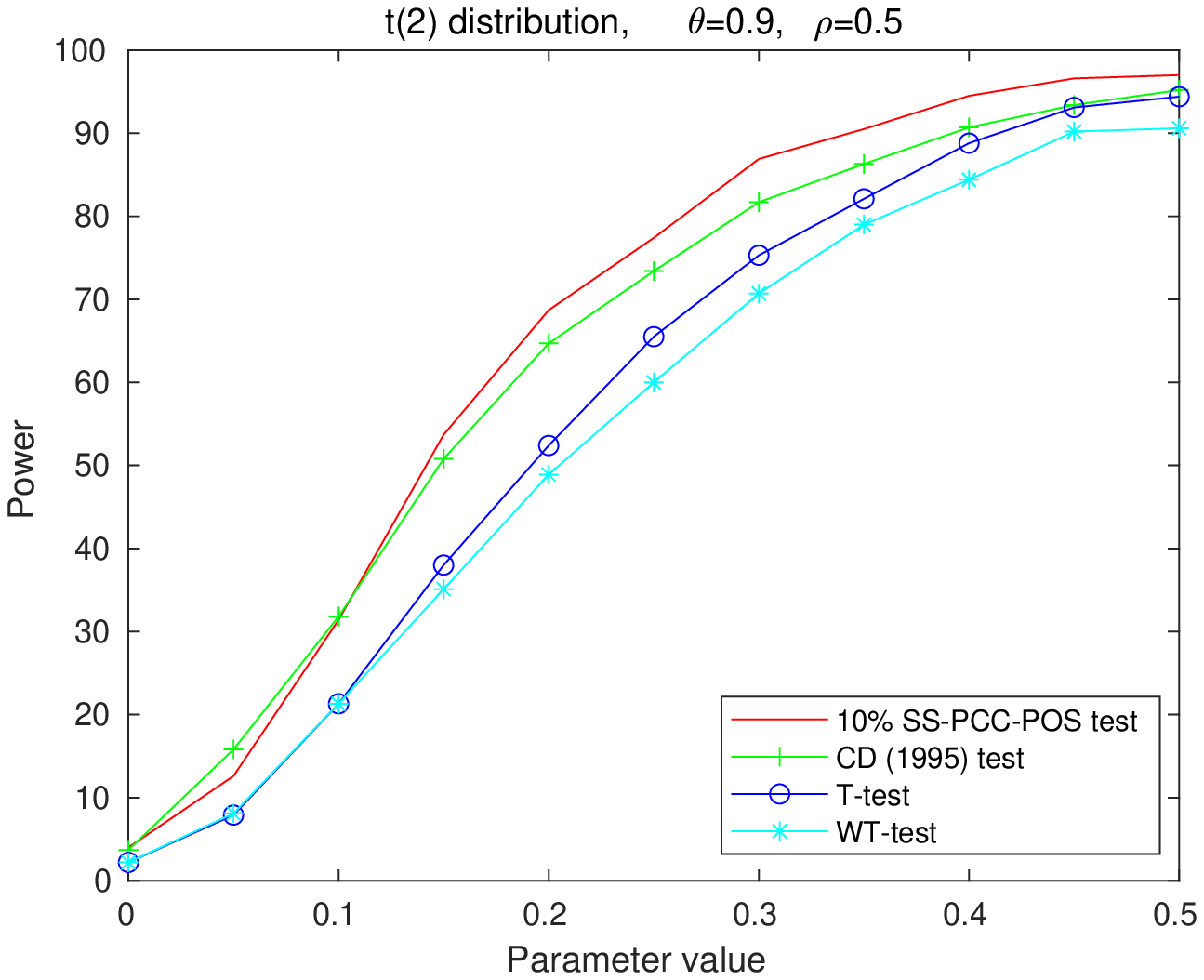}} %
\subfigure{\includegraphics[scale=0.58]{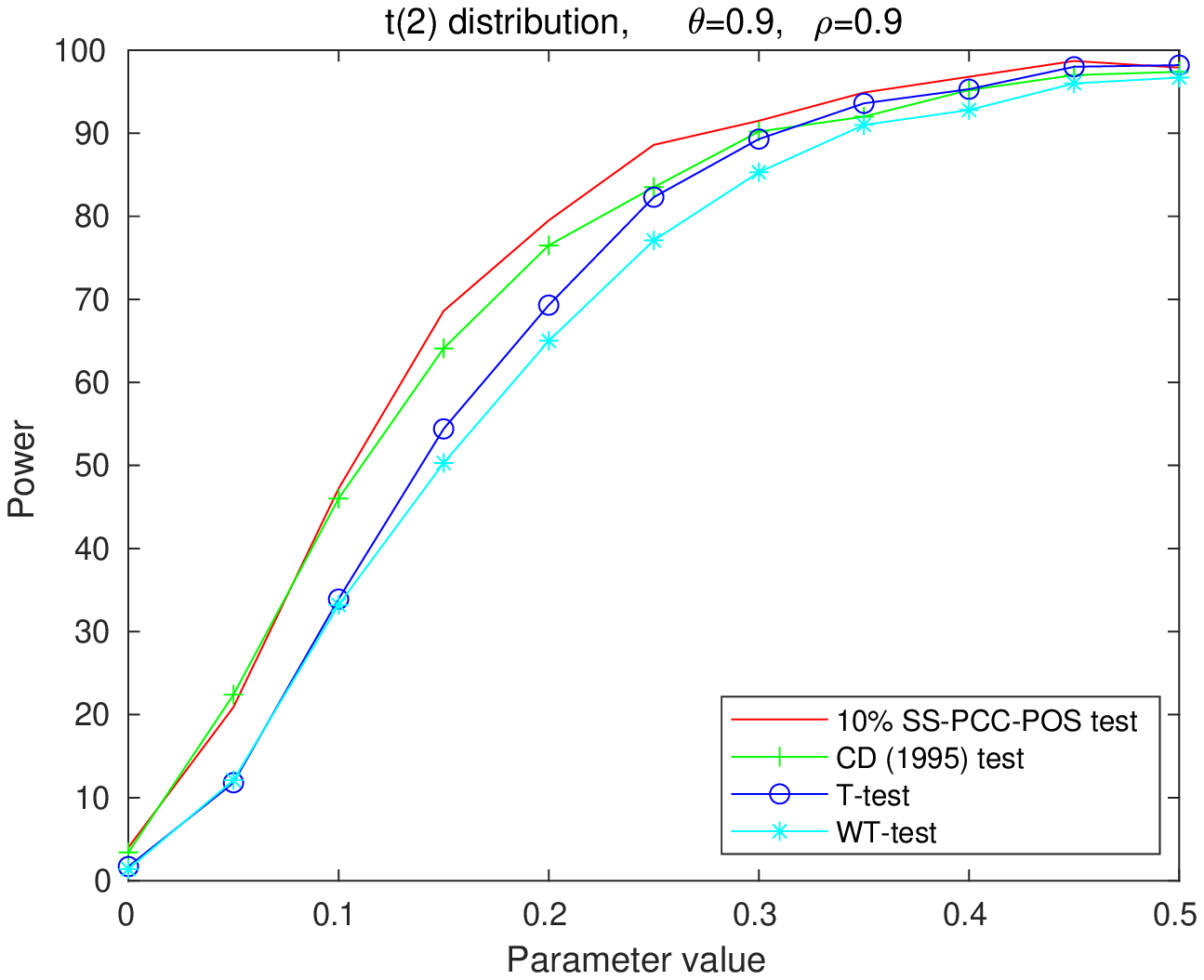}}\\[0pt]
\end{center}
\doublespacing
Note: These figures compare the power curves of the 10\% split-sample PCC-POS test
[10\% SS-PCC-POS test] with: (1) the \textit{t}-test; (2) the sign-based test
proposed by Campbell and Dufour (1995) [CD (1995) test]; and (3) the \textit{t}-test based
on White's (1980) variance correction [WT-test]. 
\label{fig: Sim39}
\end{figure}

\begin{figure}[tbph]
\caption{Power comparisons: different tests. Mixture error distributions with
different values of $\protect\rho $ in (\protect\ref{eq: errorsim}) and $\protect%
\theta =0.9$ in (\protect\ref{eq: theta})}
\begin{center}
\subfigure{\includegraphics[scale=0.58]{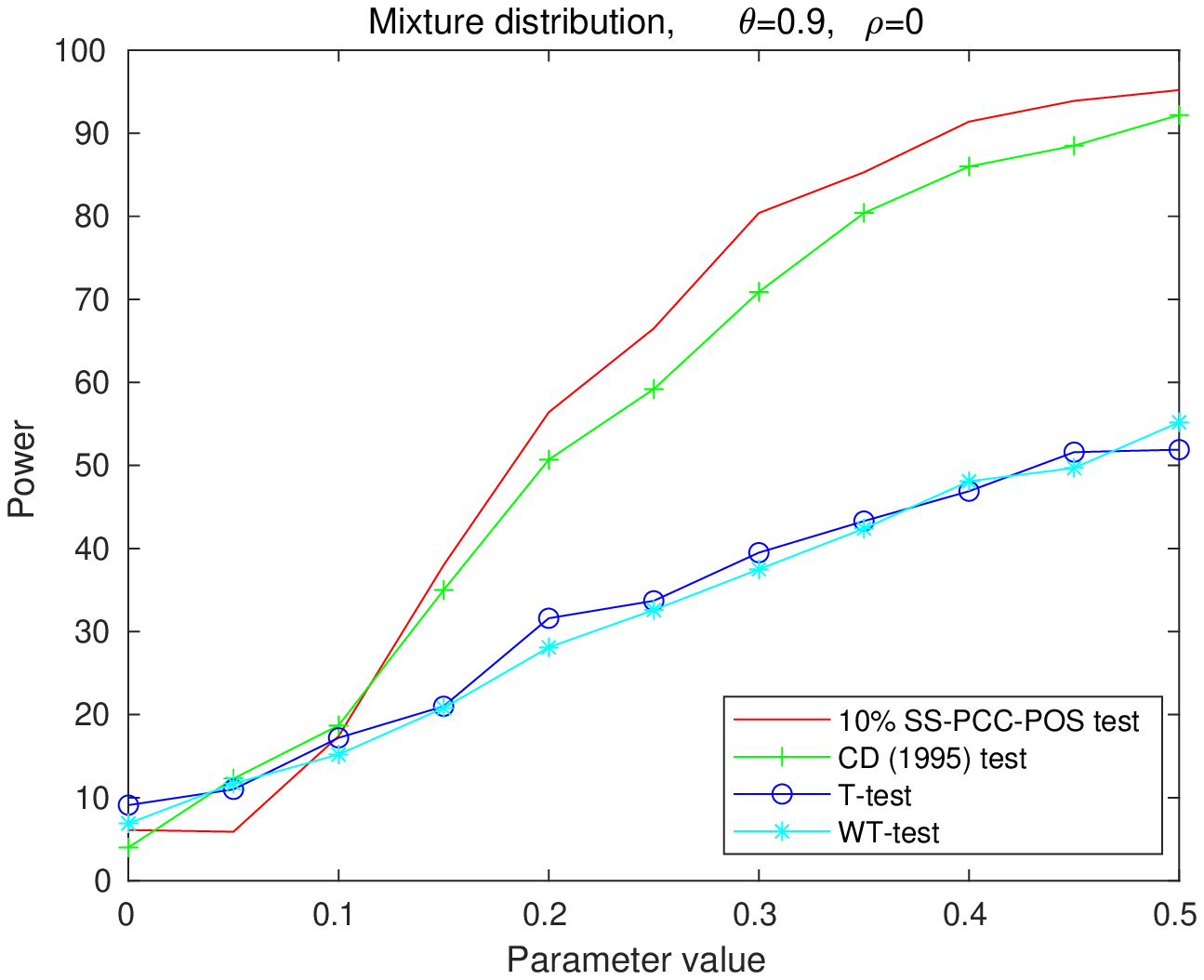}} %
\subfigure{\includegraphics[scale=0.58]{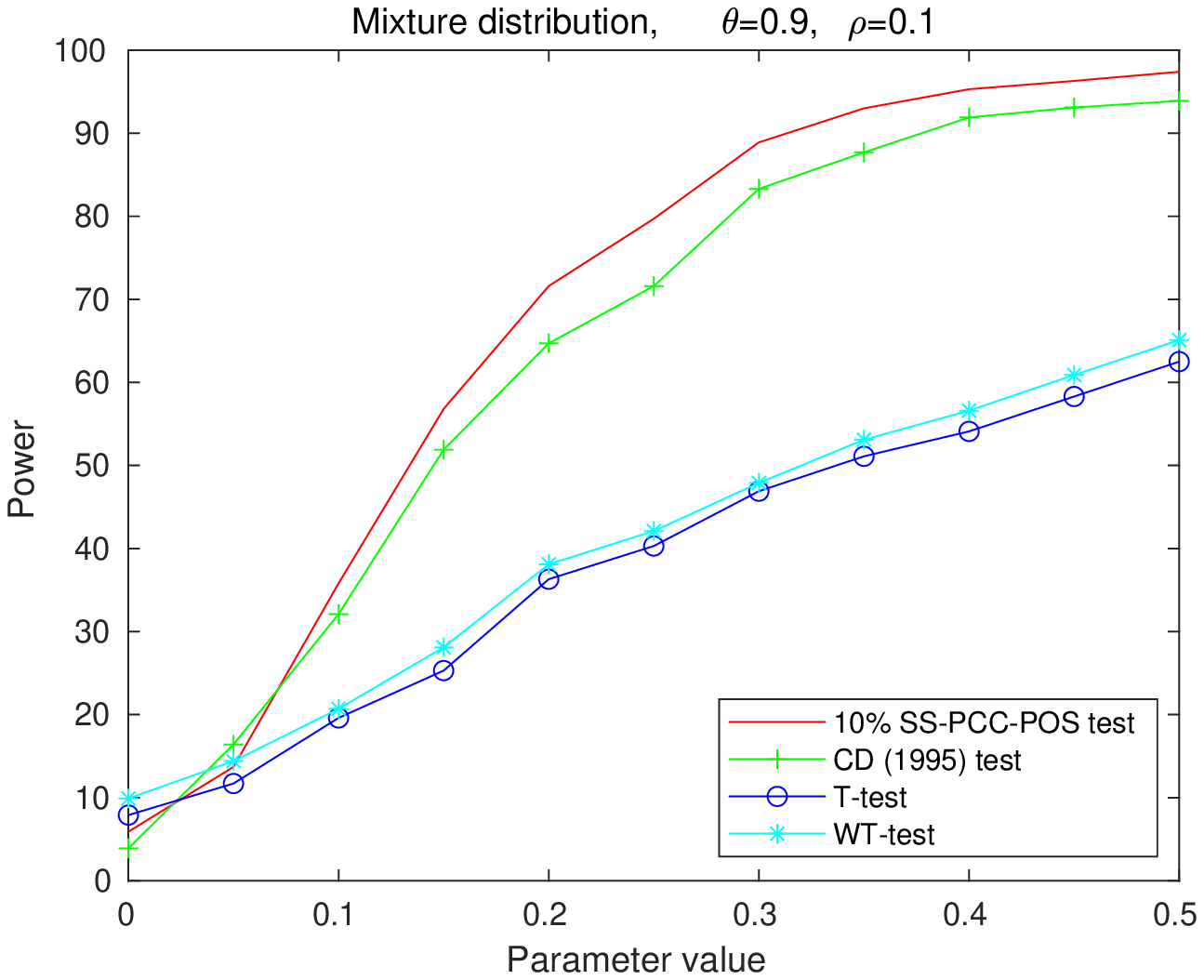}} \\[0pt]
\subfigure{\includegraphics[scale=0.58]{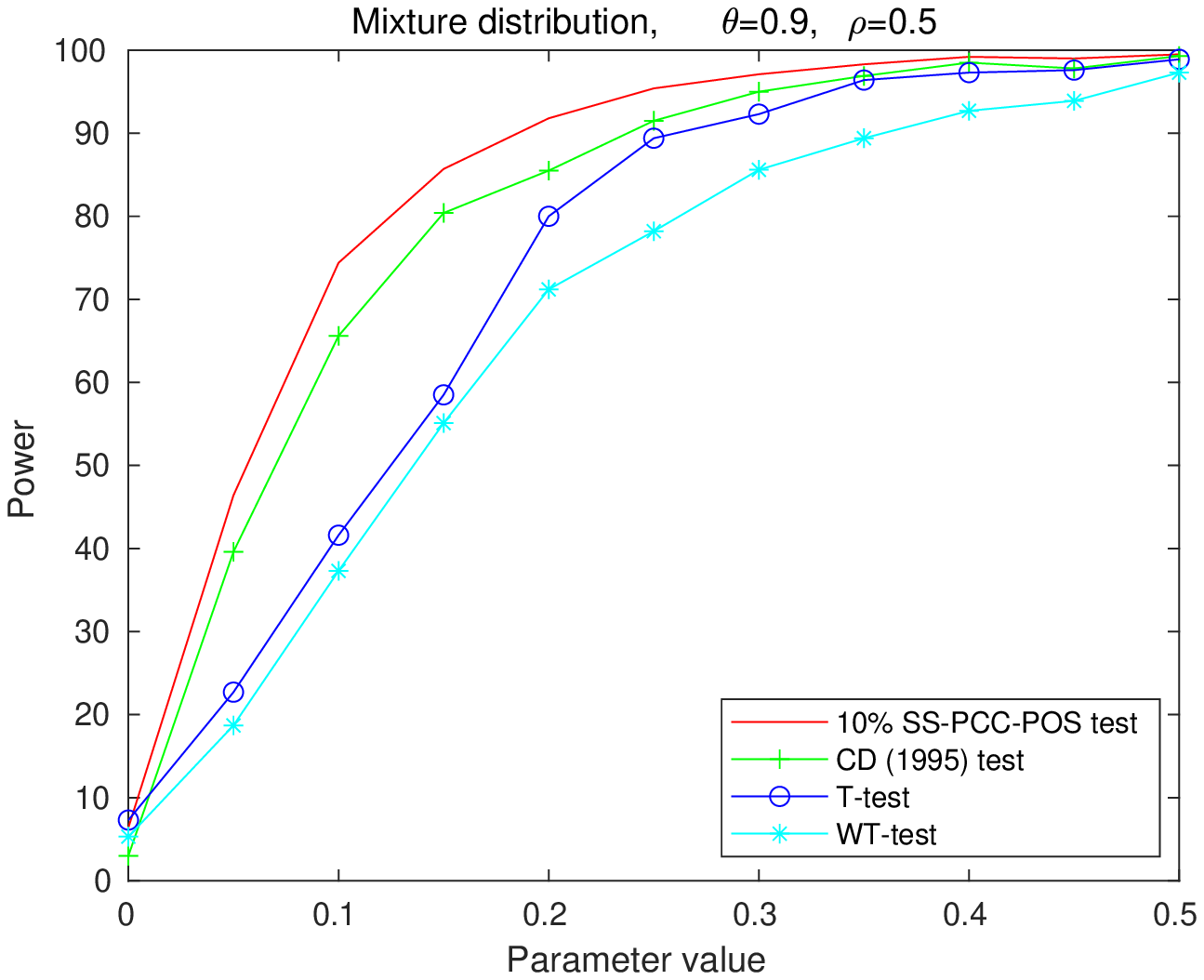}} %
\subfigure{\includegraphics[scale=0.58]{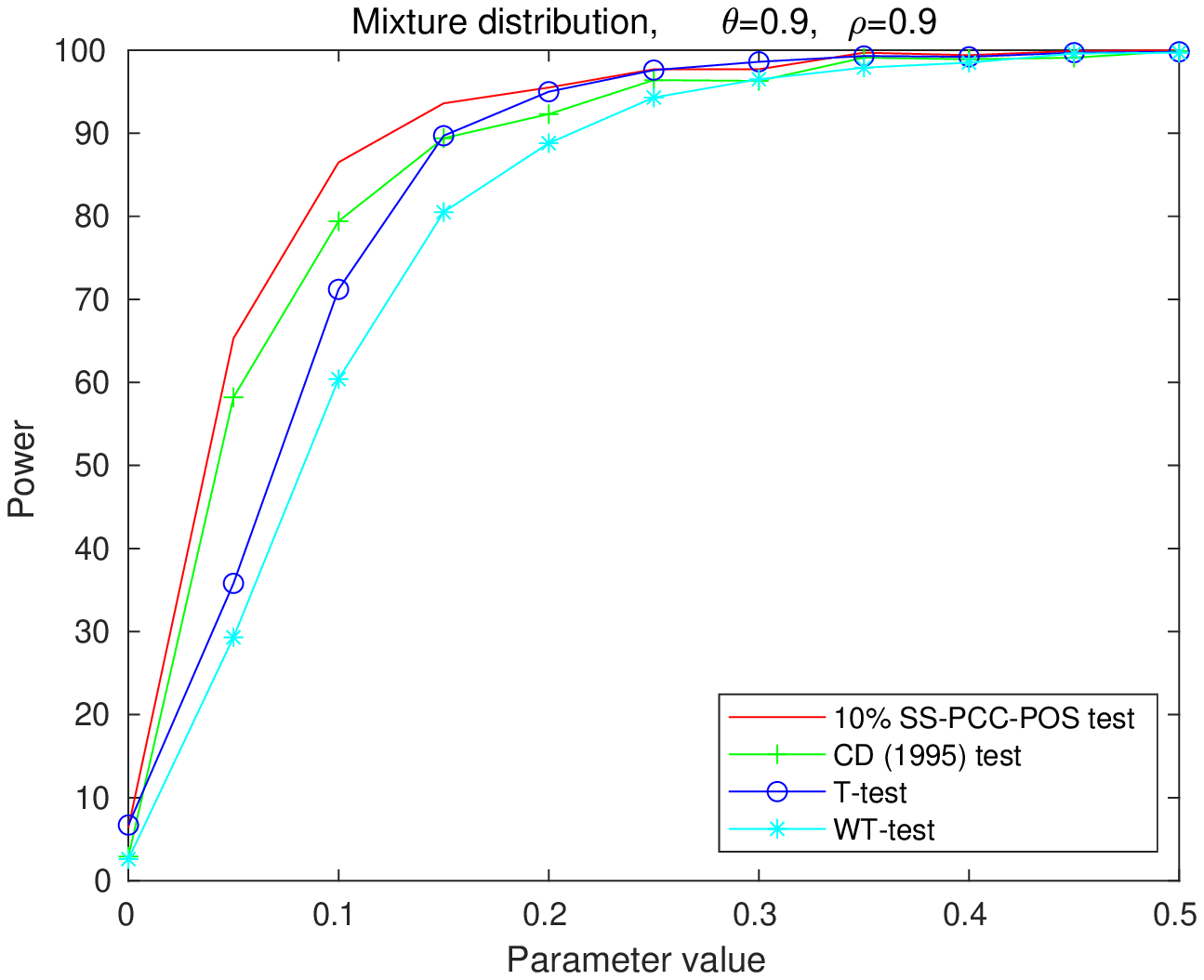}}\\[0pt]
\end{center}
\doublespacing
Note: These figures compare the power curves of the 10\% split-sample PCC-POS test
[10\% SS-PCC-POS test] with: (1) the \textit{t}-test; (2) the sign-based test
proposed by Campbell and Dufour (1995) [CD (1995) test]; and (3) the \textit{t}-test based
on White's (1980) variance correction [WT-test].
\label{fig: Sim410}
\end{figure}

\begin{figure}[tbph]
\caption{Power comparisons: different tests. Normal error distributions with break in variance,
with different values of $\protect\rho $ in (\protect\ref{eq: errorsim}) and $\protect%
\theta =0.9$ in (\protect\ref{eq: theta})}
\begin{center}
\subfigure{\includegraphics[scale=0.58]{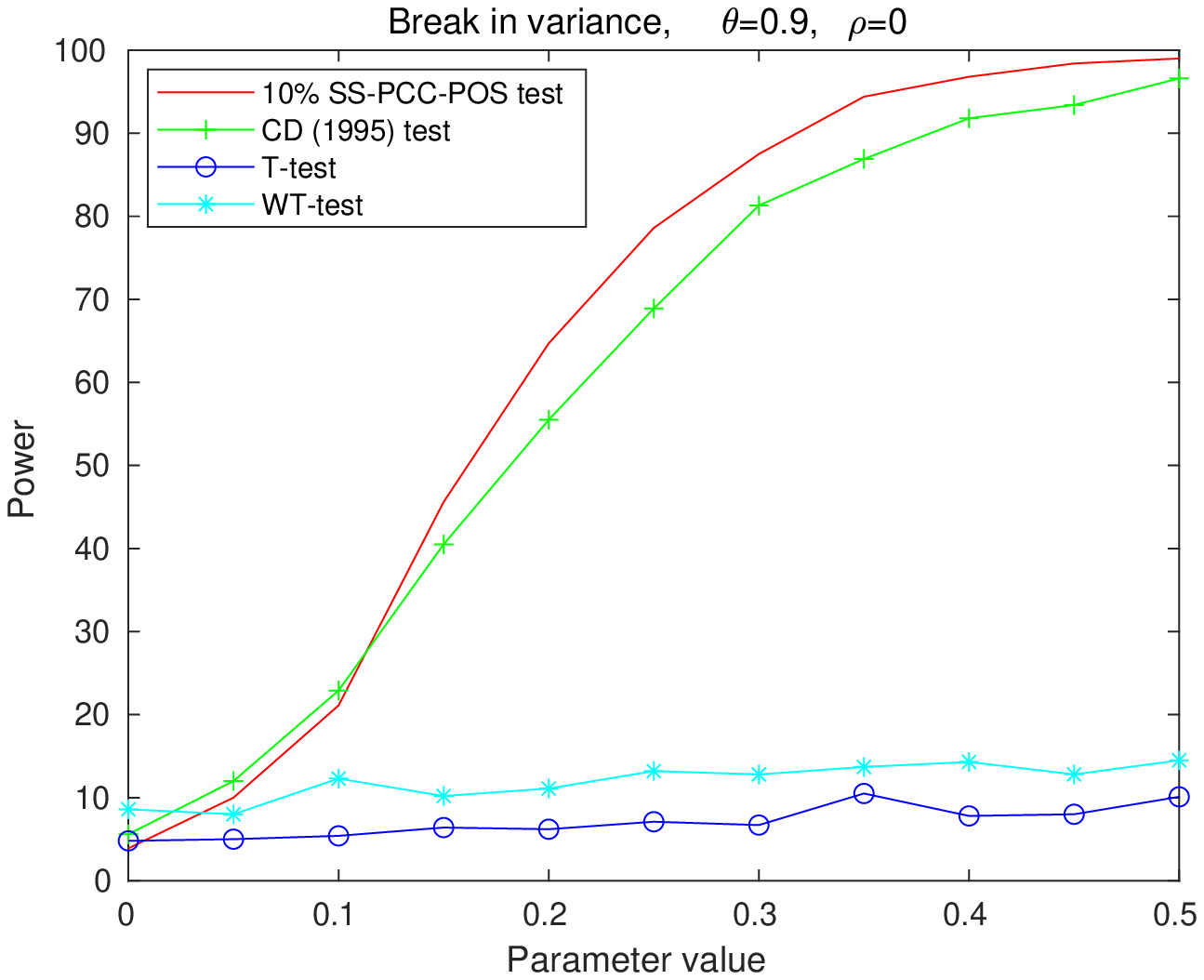}} %
\subfigure{\includegraphics[scale=0.58]{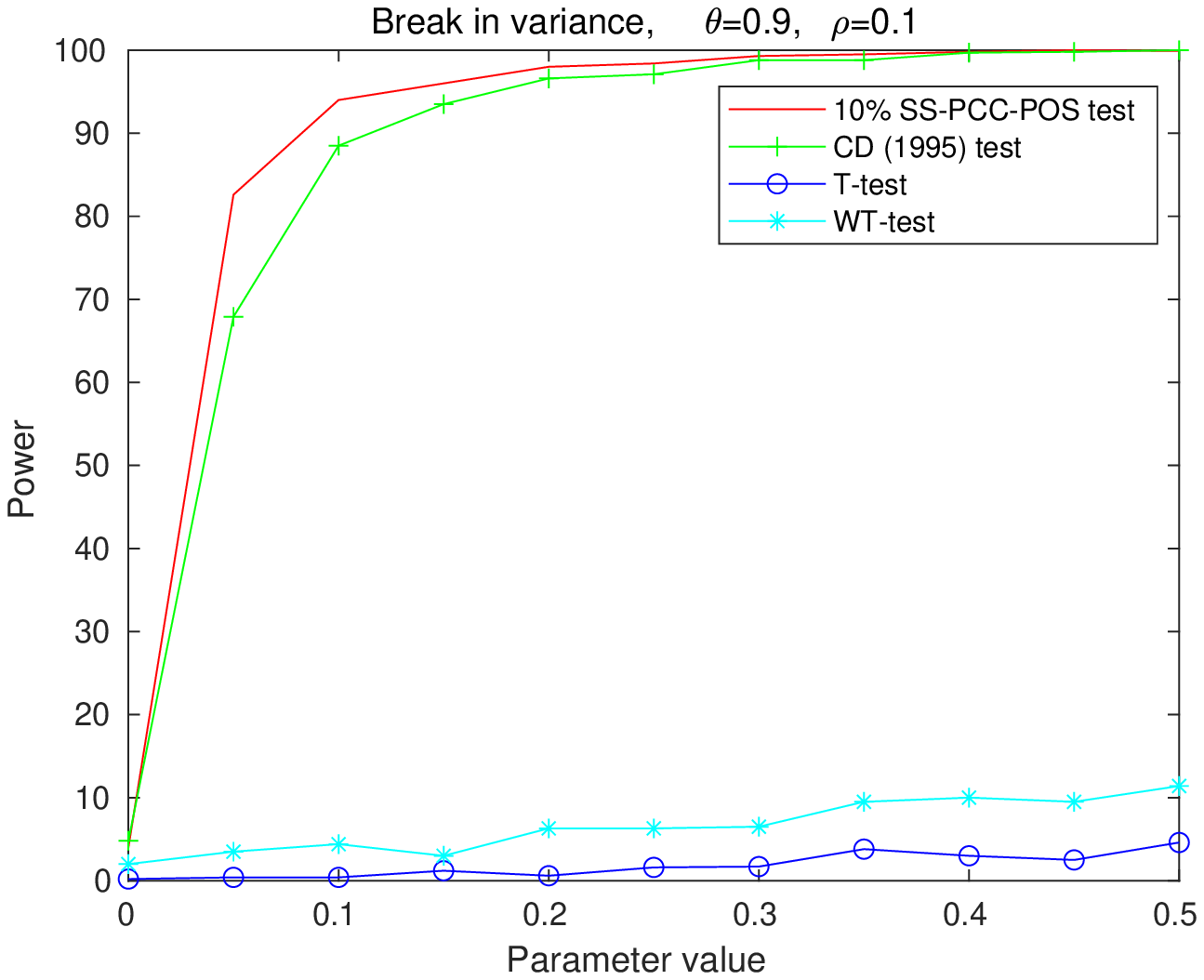}} \\[0pt]
\subfigure{\includegraphics[scale=0.58]{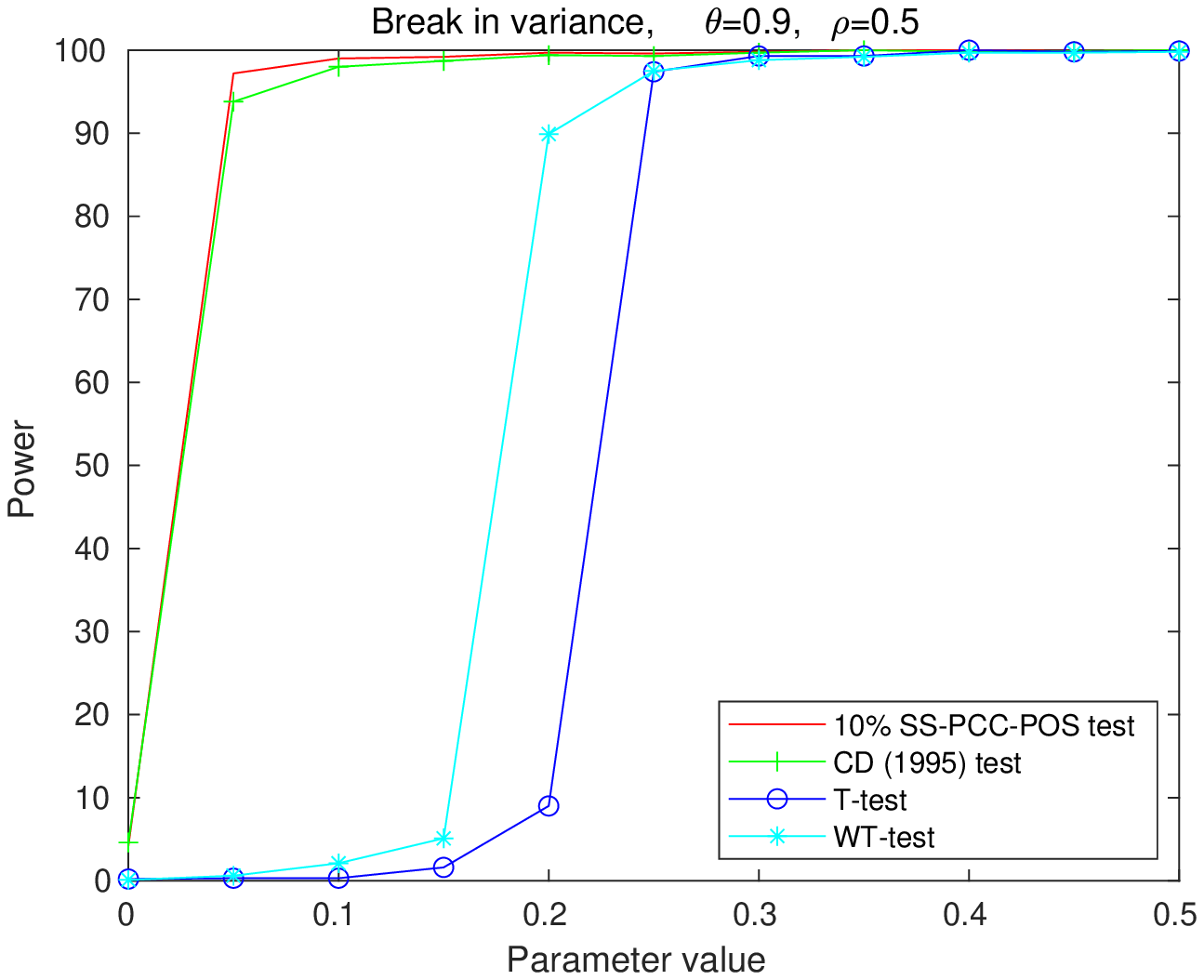}} %
\subfigure{\includegraphics[scale=0.58]{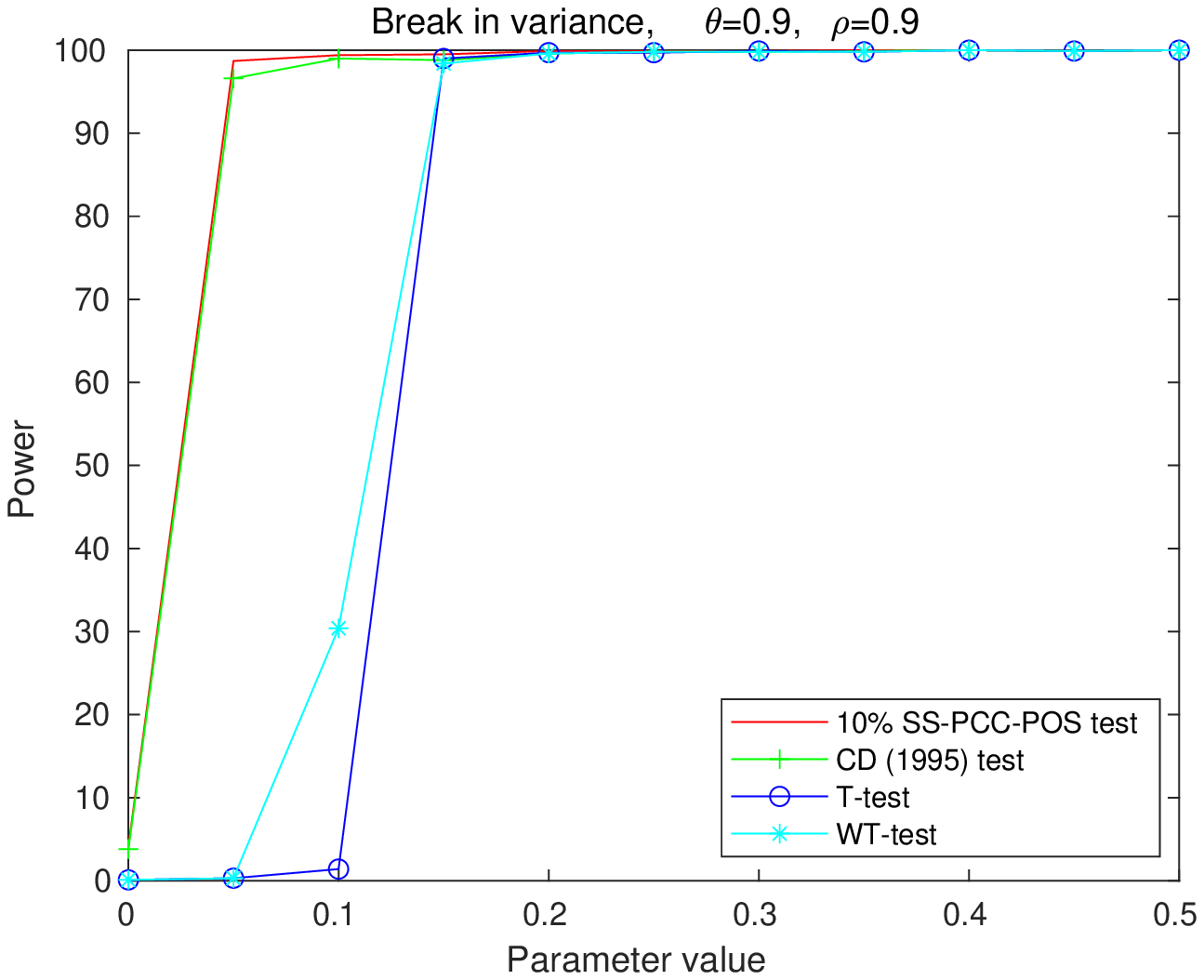}}\\[0pt]
\end{center}
\doublespacing
Note: These figures compare the power curves of the 10\% split-sample PCC-POS test
[10\% SS-PCC-POS test] with: (1) the \textit{t}-test; (2) the sign-based test
proposed by Campbell and Dufour (1995) [CD (1995) test]; and (3) the \textit{t}-test based
on White's (1980) variance correction [WT-test].
\label{fig: Sim511}
\end{figure}

\begin{figure}[tbph]
\caption{Power comparisons: different tests. Normal error distributions GARCH(1,1) plus jump invariance, with
different values of $\protect\rho $ in (\protect\ref{eq: errorsim}) and $\protect%
\theta =0.9$ in (\protect\ref{eq: theta})}
\begin{center}
\subfigure{\includegraphics[scale=0.58]{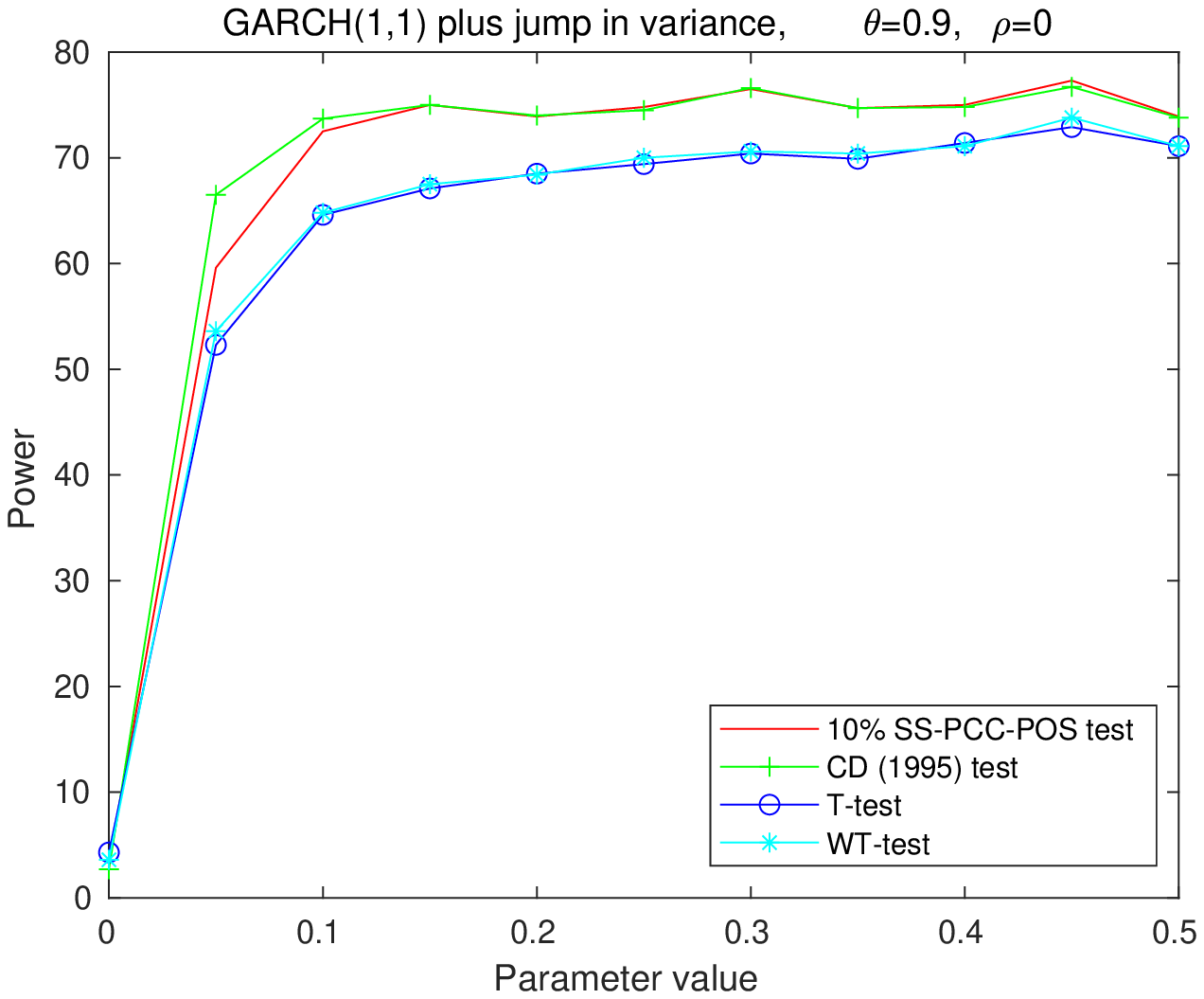}} %
\subfigure{\includegraphics[scale=0.58]{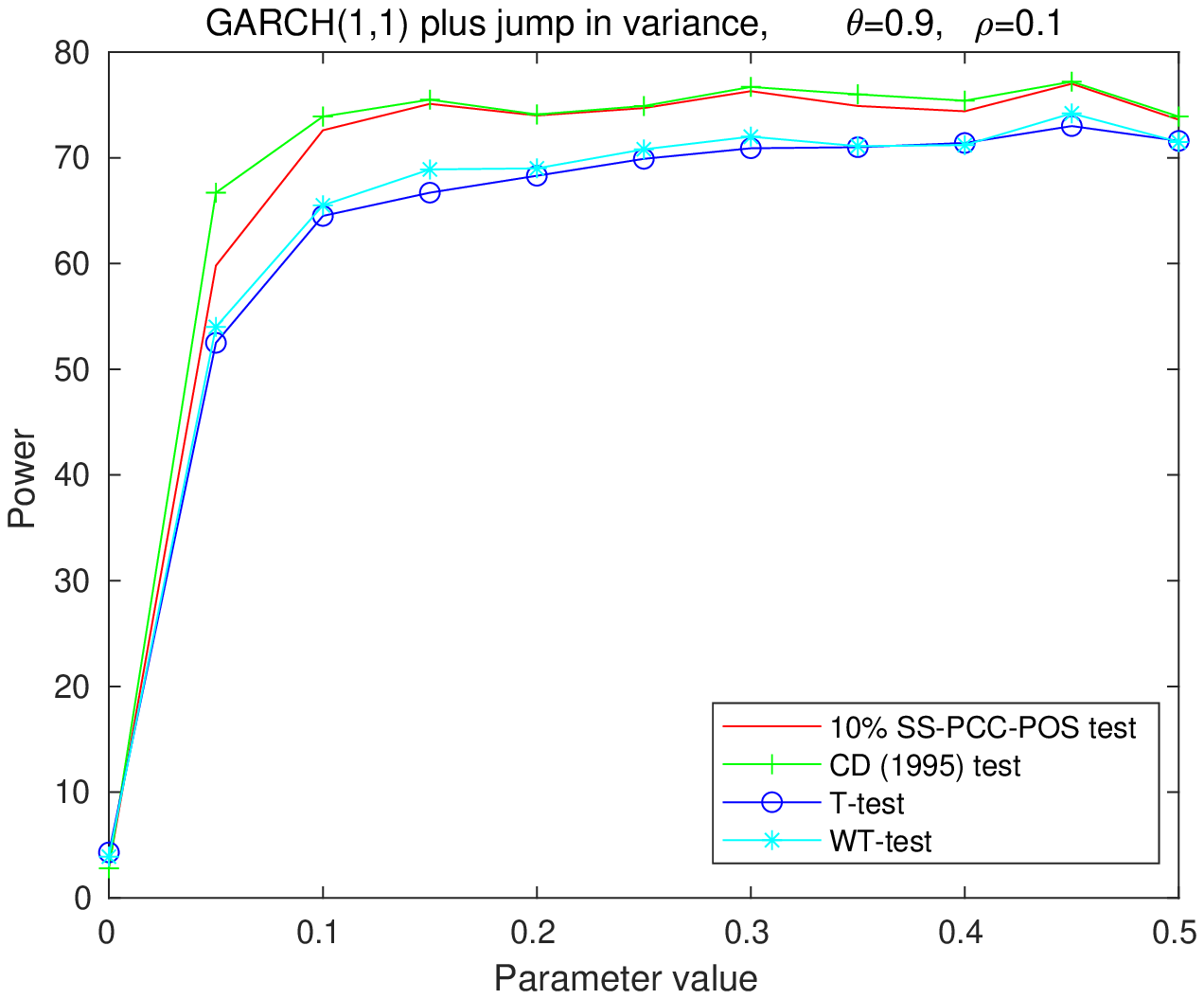}} \\[0pt]
\subfigure{\includegraphics[scale=0.58]{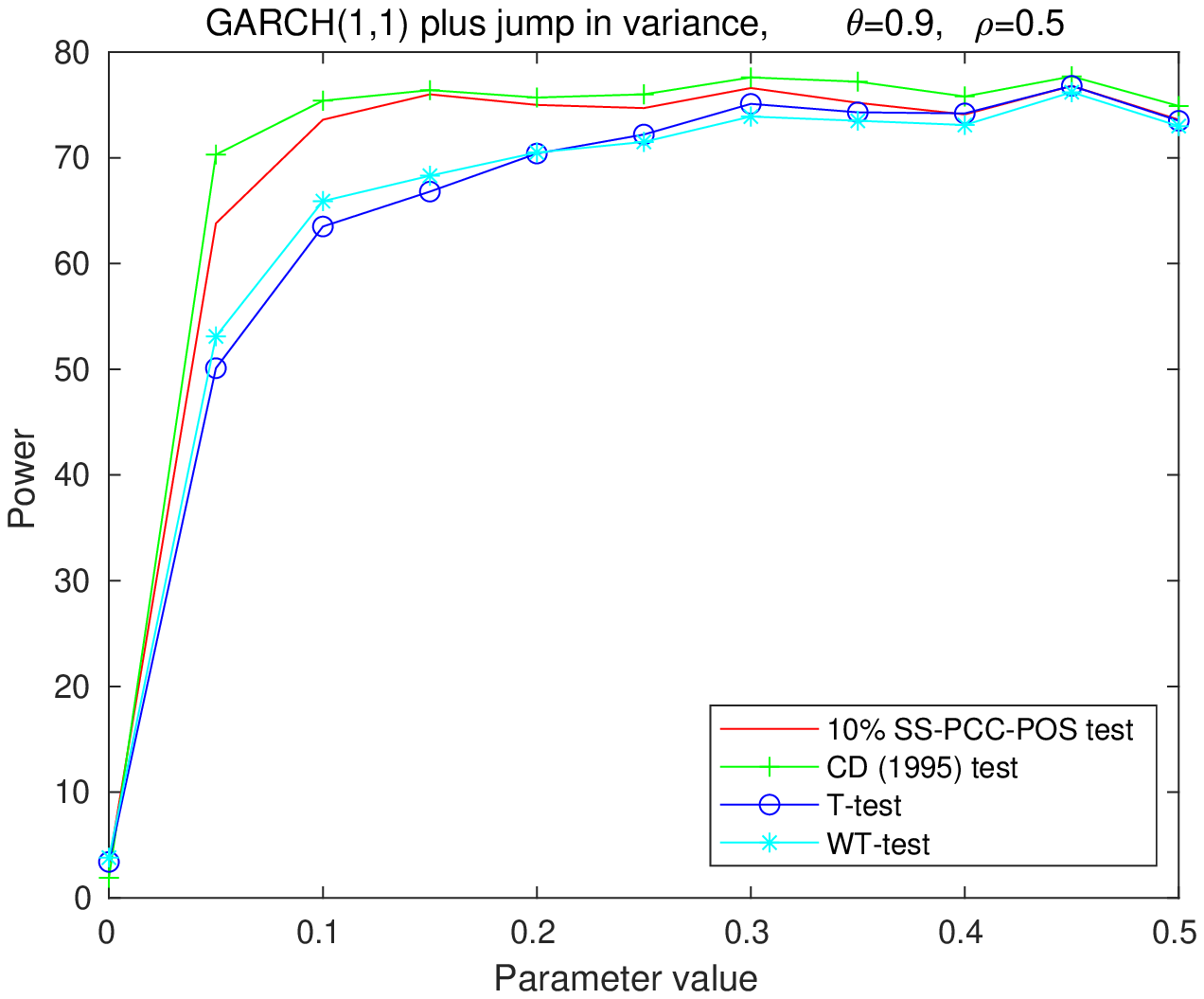}} %
\subfigure{\includegraphics[scale=0.58]{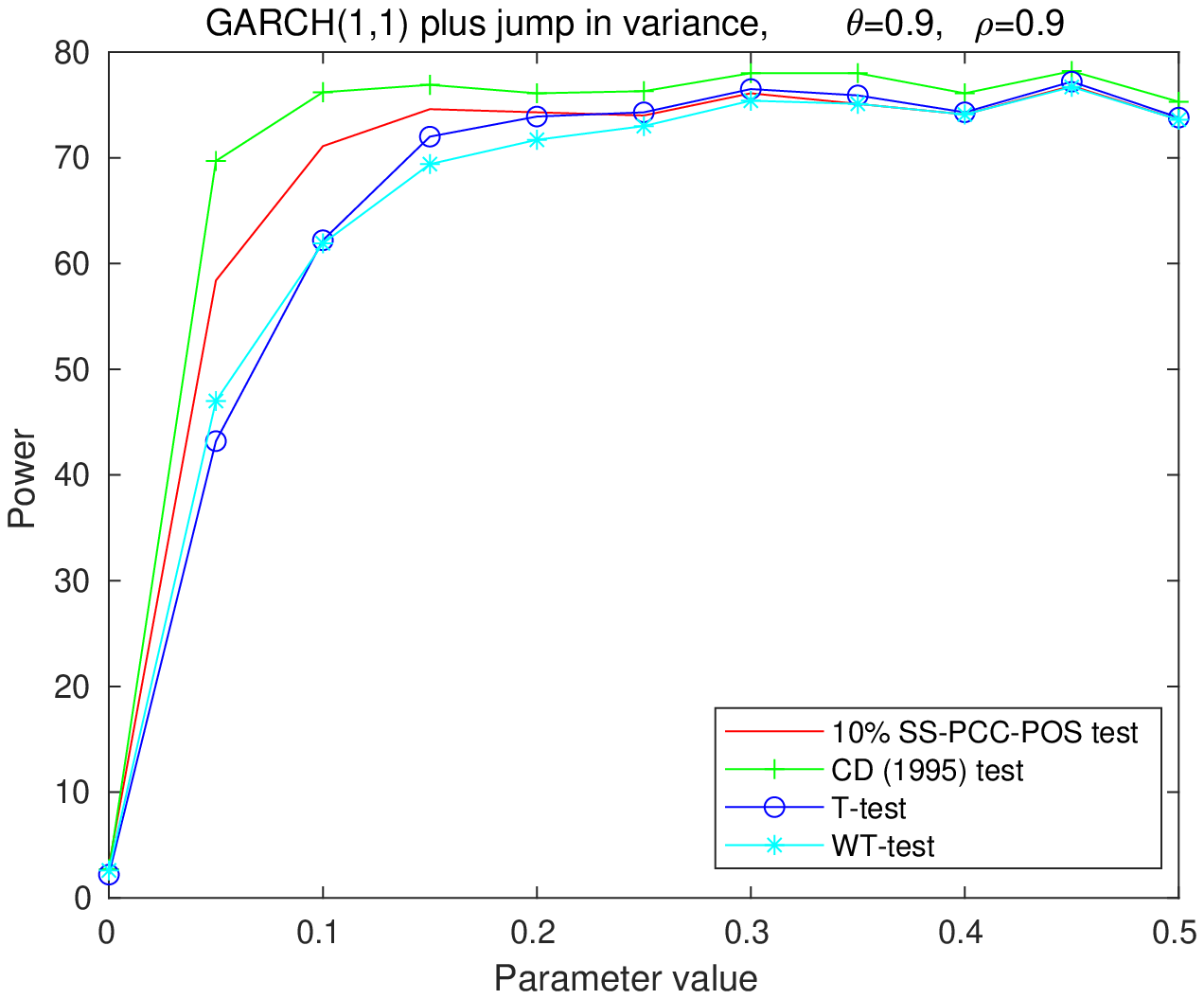}}\\[0pt]
\end{center}
\doublespacing
Note: These figures compare the power curves of the 10\% split-sample PCC-POS test
[10\% SS-PCC-POS test] with: (1) the \textit{t}-test; (2) the sign-based test
proposed by Campbell and Dufour (1995) [CD (1995) test]; and (3) the \textit{t}-test based
on White's (1980) variance correction [WT-test]. 
\label{fig: Sim612}
\end{figure}
\FloatBarrier
\section{Conclusion \label{ConclusionC3}}
{\hskip 1.5em}In this paper, we extend the exact point-optimal sign-based procedures proposed by \citet{dufour2010exact} to a predictive regression framework. We showed that by implementing the procedures for pair copula constructions of discrete data, we can derive exact and distribution-free sign-based statistics for dependent data in the context of linear and nonlinear predictive  regressions, without imposing any potentially restrictive assumptions. The proposed tests are valid, distribution-free and robust against heteroskedasticity of unknown form. Furthermore, they may be inverted to produce a confidence region for the vector (sub-vector) of parameters of the regression model. 

We further suggest a sequential estimation strategy for the D-vine PCC and discuss the choice of the copula family. As the proposed sign statistics depend on the alternative hypothesis, another problem consists of finding an alternative that controls size and maximizes the power. In line with \citet{dufour2010exact}, we find that when 10\% of sample is used to estimate the alternative and the rest to compute the test-statistic, our procedures have the optimal power and are closest to the power envelope.

Finally, we present a Monte Carlo study to assess the performance of the proposed tests in terms of size control and power, by comparing them to some other tests that are intended to be robust against heteroskedasticity. We consider a variety of different DGPs and we show that the 10\% split-sample point-optimal sign-test based on pair copula constructions is superior to the \textit{t}-test, \citet{dufour1995exact} sign-based test, and the \textit{t}-test based on \citet{white1980heteroskedasticity} variance correction in most cases.
\newpage

\section{Appendix}

\begin{proof}[Derivation of the Neyman-Pearson type sign-based statistic for testing the unpredictability hypothesis for $T\leq 3$]

The likelihood function of sample in terms of signs $s(y_{1}),\cdots,s(y_{T})$ conditional on $X$ is
\begin{equation*}
L\left( U(T),\bm{\beta},X \right) =P\left[
s(y_{1})=s_{1},\cdots,s(y_{T})=s_{T}\mid X\right] =\prod\limits_{t=1}^{T}%
P\left[ s(y_{t})=s_{t}\mid \text{\b{S}}_{t-1}=\text{\b{s}}_{t-1},X\right] ,
\end{equation*}%
for 
\begin{equation*}
\text{\b{S}}_{0}=\left\{ \emptyset \right\} ,\text{ \ \b{S}}_{t-1}=\left\{
s(y_{1}),\cdots,s(y_{t-1})\right\} ,\text{ for }t\geq 2,
\end{equation*}%
and%
\begin{equation*}
P\left[ s(y_{1})=s_{1}\mid \text{\b{S}}_{0}=\text{\b{s}}_{0},X\right] =P%
\left[s(y_{1})=s_{1}\mid X\right] ,
\end{equation*}%
where each $s_{t}$, for $1\leq t\leq T$, takes two possible values $0$ and $%
1 $. Given model (\ref{eq: DGP}) and assumption (\ref{eq: median}), under the null hypothesis of unpredictability, the signs $s(\varepsilon_{t})$, for $1\leq t\leq T$,\ are i.i.d conditional on $X$ according to $Bi(1,0.5)$. Then, the signs $s(y_{t}),$ for $1\leq t\leq T$, will also be i.i.d conditional on $X$
with%
\begin{equation*}
P\left[ s(y_{t})=1\mid X\right] =P\left[ s(y_{t})=0\mid X\right] =%
\frac{1}{2},\quad\text{for}\quad t=1,\cdots,T.
\end{equation*}%
Consequently, under $H_{0}$%
\begin{equation*}
L_{0}\left( U(T),\bm{0},X\right) =\prod\limits_{t=1}^{T}
P\left[ s(y_{t})=s_{t}\mid X\right] =\left( \frac{1}{2}\right) ^{T}
\end{equation*}%
and under $H_1$ we have%
\begin{equation*}
L_1\left( U(T),\bm{\beta} _{1},X\right) =\prod\limits_{t=1}^{T}%
P\left[ s(y_{t})=s_{t}\mid \text{\b{S}}_{t-1}=\text{\b{s}}_{t-1}, X\right]
\end{equation*}%
where now, for $t=1,\cdots,T,$%
\begin{equation*}
y_{t}=\bm{\beta}_1'\bm{x}_{t-1}+\varepsilon_{t}
\end{equation*}%
The log-likelihood ratio is given by%
\begin{equation*}
\ln \left\{ \frac{L_1\left( U(T),\bm{\beta}_{1},X\right) }{L
_{0}\left( U(T),\bm{0},X\right) }\right\} =\sum\limits_{t=1}^{T}\ln \left\{
P\left[ s(y_{t})=s_{t}\mid \text{\b{S}}_{t-1}=\text{\b{s}}_{t-1},X\right] \right\} -\text{T}\ln
\left\{ \frac{1}{2}\right\} .
\end{equation*}%
According to Neyman-Pearson lemma [see e.g. Lehmann (1959), page 65], the
best test to test $H_{0}$ against $H_1,$ based on $s(y_{1}),\cdots,s(y_{T}),$
rejects $H_{0}$ when%
\begin{equation*}
SL_T(\bm{\beta}_1)=\ln \left\{ \frac{L_1\left( U(T),\bm{\beta}_{1},X\right) }{L_{0}\left( U(T),\bm{0},X\right) }\right\} \geq c
\end{equation*}%
or when%
\begin{equation*}
\sum\limits_{t=1}^{T}\ln \left\{ P\left[ s(y_{t})=s_{t}\mid \text{%
\b{S}}_{t-1}=\text{%
\b{s}}_{t-1},X\right] \right\} \geq c_1\equiv c+T\ln\left(\frac{1}{2}\right),
\end{equation*}%
The critical value, say $c_1$ is given by the smallest constant $c_1$ such that%
\begin{equation*}
P\left(\sum\limits_{t=1}^{T}\ln \left\{ P\left[ s(y_{t})=s_{t}\mid \text{%
\b{S}}_{t-1}=\text{%
\b{s}}_{t-1},X\right] \right\} >c_1\mid
H_{0}\right) \leq \alpha .
\end{equation*}
We have
\begingroup
\allowdisplaybreaks
\begin{align*}
\ln\left\{P[s(y_1)=s_1\mid \text{\b{S}}_{0}=\text{\b{s}}_{0},X]\right\}&=\ln\left\{P[s(y_1)=s_1\mid X]\right\}\\
&= s(y_1)\ln P[y_1\geq0\mid X]+(1-s(y_1))\ln P[y_1<0\mid X]\\
&=s(y_1)\ln \left\{\frac{P[y_1\geq 0\mid X]}{P[y_1< 0\mid X]}\right\}+ \ln P[y_1< 0\mid X]\\
&=s(y_1)\ln \left\{\frac{P[\varepsilon_1\geq -\bm{\beta}_1' \bm{x}_{t-1}\mid X]}{P[\varepsilon_1< -\bm{\beta}_1'\bm{x}_{t-1}\mid X]}\right\}+ \ln P[\varepsilon_1< -\bm{\beta}_1'\bm{x}_{t-1}\mid X],
\end{align*}
\endgroup
and for $t=2,\cdots,T$, with $T\leq3$ we have
\begingroup
\allowdisplaybreaks
\begin{align*}
 \sum\limits_{t=2}^{T} \ln P\left[s(y_{t})=s_{t}\mid \text{\b{S}}_{t-1}=\text{\b{s}}_{t-1},X\right] &=\sum\limits_{t=2}^{T}\ln\left(\frac{P[s(y_t)=s_t,s(y_{t-1})=s_{t-1}\mid \text{\b{S}}%
_{t-2}=\text{\b{s}}%
_{t-2},X]}{P[s(y_{t-1})=s_{t-1}\mid \text{\b{S}}%
_{t-2}=\text{\b{s}}%
_{t-2},X]}\right)\\
&=\sum\limits_{t=2}^{T}\ln\Bigg(\sum\limits_{k_t=0,1}\sum\limits_{k_{t-1}=0,1}(-1)^{k_t+k_{t-1}}\\
&\textcolor{white}{=}\times\left\{P[s(y_t)\leq s_t-k_t,s(y_{t-1})\leq s_{t-1}-k_{t-1}\mid \text{\b{S}}%
_{t-2}=\text{\b{s}}_{t-2},X]\right\}\\
&\textcolor{white}{=}/P[s(y_{t-1})=s_{t-1}\mid \text{\b{S}}_{t-2}=\text{\b{s}}_{t-2},X]\Bigg)\\
\textcolor{white}{\left\{ P\left[s(y_{t})=s_{t}\mid \text{\b{S}}_{t-1}=\text{\b{s}}_{t-1}\right] \right\} }&=\sum\limits_{t=2}^{T}\ln\Bigg(\sum\limits_{k_t=0,1}\sum\limits_{k_{t-1}=0,1}(-1)^{k_t+k_{t-1}}\\
& \textcolor{white}{=}\times\Big\{C_{s(y_t),s(y_{t-1})\mid \text{\b{S}}%
_{t-2}}\left(F_{s(y_t)\mid \text{\b{S}}%
_{t-2} }(s_t-k_t\mid \text{\b{s}}%
_{t-2},X),\right.\\
& \textcolor{white}{=}\hspace{14em}\left.F_{s(y_{t-1})\mid \text{\b{S}}_{t-2}}(s_{t-1}-k_{t-1}\mid \text{\b{s}}_{t-2},X)\right)\Big\}\\
&\textcolor{white}{=}/P[s(y_{t-1})=s_{t-1}\mid \text{\b{S}}_{t-2}=\text{\b{s}}_{t-2},X]\Bigg)\\
&=\sum\limits_{t=2}^{T}\ln\Bigg\{\sum\limits_{k_t=0,1}\sum\limits_{k_{t-1}=0,1}(-1)^{k_t+k_{t-1}} \\
&\textcolor{white}{=} \times\Big\{C_{s(y_t),s(y_{t-1})\mid \text{\b{S}}%
_{t-2}}\left(F_{s(y_t)\mid \text{\b{S}}%
_{t-2} }(s_t-k_t\mid \text{\b{s}}%
_{t-2},X),\right.\\
& \textcolor{white}{=}\hspace{14em}\left.F_{s(y_{t-1})\mid \text{\b{S}}%
_{t-2}}(s_{t-1}-k_{t-1}\mid \text{\b{s}}%
_{t-2},X)\right)\Big\}\Bigg\}\\
&\textcolor{white}{=}-\sum\limits_{t=2}^{T}\ln\left\{P[s(y_{t-1})=s_{t-1}\mid \text{\b{S}}_{t-2}=\text{\b{s}}_{t-2},X]\right\}
\end{align*}%
\endgroup
Each argument $F_{s(y_t)\mid \text{\b{S}}%
_{t-2}}(s_t-k_t\mid \text{\b{s}}%
_{t-2},X)$ and $F_{s(y_{t-1})\mid \text{\b{S}}%
_{t-2}}(s_{t-1}-k_{t-1}\mid \text{\b{s}}%
_{t-2},X)$ in the copula expression above can be evaluated as follows
\begingroup
\allowdisplaybreaks
\begin{align*}
&F_{s(y_t)\mid\text{\b{S}}%
_{t-2}}(s_t-k_t\mid\text{\b{s}}%
_{t-2},X)=\\
&\left\{C_{s(y_t),s(y_{t-2})\mid\text{\b{S}}%
_{t-3}}(F(s_t-k_t\mid\text{\b{s}}%
_{t-3},X),F(s_{t-2}\mid\text{\b{s}}%
_{t-3},X))\right.\\
&\left.-C_{s(y_t),s(y_{t-2})\mid\text{\b{S}}%
_{t-3}}(F(s_t-k_t\mid\text{\b{s}}%
_{t-3},X),F(s_{t-2}-1\mid\text{\b{s}}%
_{t-3},X))\right\}/P[s(y_{t-2})=s_{t-2}\mid\text{\b{S}}%
_{t-3}=\text{\b{s}}%
_{t-3},X]
\end{align*}
 \endgroup
and similarly
\begingroup
\allowdisplaybreaks
\begin{align*}
&F_{s(y_{t-1})\mid\text{\b{S}}%
_{t-2}}(s_{t-1}-k_{t-1}\mid\text{\b{s}}%
_{t-2},X)=\\
&\left\{C_{s(y_{t-2}),s(y_{t-1})\mid\text{\b{S}}%
_{t-3}}(F(s_{t-2}\mid\text{\b{s}}%
_{t-3},X),F(s_{t-1}-k_{t-1}\mid\text{\b{s}}%
_{t-3},X))\right.\\
&\left.-C_{s(y_{t-2}),s(y_{t-1})\mid\text{\b{S}}%
_{t-3}}(F(s_{t-2}-1\mid\text{\b{s}}%
_{t-3},X),F(s_{t-1}-k_{t-1}\mid\text{\b{s}}%
_{t-3},X))\right\}/P[s(y_{t-2})=s_{t-2}\mid\text{\b{S}}%
_{t-3}=\text{\b{s}}%
_{t-3},X]
\end{align*}
\endgroup
Thus, for $T\leq 3$ the Neyman-Pearson type test statistic based on $%
s(y_{1}),\cdots,s(y_{T}),$ can be expressed as%
\begingroup
\allowdisplaybreaks
\begin{align*}
SL_T(\bm{\beta}_1)&=s(y_1)\ln \left\{\frac{P[\varepsilon_1\geq -\bm{\beta}_1' \bm{x}_{t-1}\mid X]}{P[\varepsilon_1< -\bm{\beta}_1'\bm{x}_{t-1}\mid X]}\right\}+ \ln P[\varepsilon_1< -\bm{\beta}_1'\bm{x}_{t-1}\mid X]+\sum\limits_{t=2}^{T}\ln\Bigg\{\sum\limits_{k_t=0,1}\sum\limits_{k_{t-1}=0,1}(-1)^{k_t+k_{t-1}} \\
&\textcolor{white}{=}\times \left(C_{s(y_t),s(y_{t-1})\mid \text{\b{S}}%
_{t-2}}\left(F_{s(y_t)\mid \text{\b{S}}%
_{t-2} }(s_t-k_t\mid \text{\b{s}}%
_{t-2},X),F_{s(y_{t-1})\mid \text{\b{S}}%
_{t-2}}(s_{t-1}-k_{t-1}\mid \text{\b{s}}%
_{t-2},X)\right)\right)\Bigg\}\\
&\textcolor{white}{=}-\sum\limits_{t=2}^{T}\ln\left\{P[s(y_{t-1})=s_{t-1}\mid \text{\b{S}}_{t-2}=\text{\b{s}}_{t-2},X]\right\}-n\ln \left\{ \frac{1}{2}\right\}
\end{align*}%
\endgroup
\end{proof}

\begin{proof}[Vine decomposition in the continuous case]

In Section \ref{EstimationC3}, it is shown that the signs $s(y_1),\cdots,s(y_T)$ may have a continuous extension with a perturbation in $[0,1]$ [see \citet{denuit2005constraints}]. This can be achieved by employing a transformation of the form $s^*(y_t)=s(y_t)+U-1$ for $t=1,\cdots,T$, where a natural choice for $U$ is the uniform distribution. Thus, for $\{s^*(y_t)\in \mathbb{R},t=1,\cdots,T\}$ consider the continuous equivalent of the conditional probability mass function  (\ref{eq: bayes}) - i.e. the conditional density function. Further, by letting $\text{\b{S}}_{t-1}^{*}$ be the continuous extension of $\text{\b{S}}_{t-1}$, the conditional density function may be expressed as
\begin{equation}\label{eq: continuous}
f_{s^*(y_t)\mid \text{\b{S}}_{t-1}^{*\backslash j}\cup s^*(y_j)}=\frac{f_{s^*(y_t),s^*(y_j)\mid \text{\b{S}}_{t-1}^{*\backslash j}}}{f_{s^*(y_j)\mid \text{\b{S}}_{t-1}^{*\backslash j}}}.
\end{equation}
From the Theorem of \citet{sklar1959fonctions}, we know that
\begingroup
\allowdisplaybreaks
\begin{align}\label{eq: Sklar}
\begin{split}
f_{s^*(y_t),s^*(y_j)\mid \text{\b{S}}_{t-1}^{*\backslash j}}(s_t^*,s_j^*\mid \text{\b{s}}_{t-1}^{*\backslash j},X)&=c_{s^*(y_t),s^*(y_j)\mid \text{\b{S}}_{t-1}^{*\backslash j}}\left(F_{s^*(y_t)\mid\text{\b{s}}_{t-1}^{*\backslash j}}(s_t^*\mid \text{\b{s}}_{t-1}^{*\backslash j},X),F_{s^*(y_j)\mid\text{\b{S}}_{t-1}^{*\backslash j}}(s_j^*\mid \text{\b{s}}_{t-1}^{*\backslash j},X) \right)\\
&\times\textcolor{white}{=}f_{s^*(y_t)\mid \text{\b{S}}_{t-1}^{*\backslash j}}f_{s^*(y_j)\mid \text{\b{S}}_{t-1}^{*\backslash j}},
\end{split}
\end{align}
\endgroup
where $c()$ is the copula density function. Thus,
\begin{equation}
f_{s^*(y_t)\mid \text{\b{S}}_{t-1}^{*\backslash j}\cup s^*(y_j)}=c_{s^*(y_t),s^*(y_j)\mid \text{\b{S}}_{t-1}^{*\backslash j}}\left(F_{s^*(y_t)\mid\text{\b{s}}_{t-1}^{*\backslash j}}(s_t^*\mid \text{\b{s}}_{t-1}^{*\backslash j},X),F_{s^*(y_j)\mid\text{\b{S}}_{t-1}^{*\backslash j}}(s_j^*\mid \text{\b{s}}_{t-1}^{*\backslash j},X) \right)f_{s^*(y_t)\mid \text{\b{S}}_{t-1}^{*\backslash j}},
\end{equation}
with
\begingroup
\allowdisplaybreaks
\begin{align}
\begin{split}
\begin{array}{l}
c_{s^*(y_t),s^*(y_j)\mid \text{\b{S}}_{t-1}^{*\backslash j}}\left(F_{s^*(y_t)\mid\text{\b{S}}_{t-1}^{*\backslash j}}(s_t^*\mid \text{\b{s}}_{t-1}^{*\backslash j},X),F_{s^*(y_j)\mid\text{\b{S}}_{t-1}^{*\backslash j}}(s_j^*\mid \text{\b{s}}_{t-1}^{*\backslash j},X) \right)=\\
\textcolor{white}{c_{s^*(y_t),s^*(y_j)\mid \text{\b{S}}_{t-1}^{*\backslash j}}(F_{s^*(y_t)\mid\text{\b{S}}_{t-1}^{*\backslash j}}(s_t^*\mid \text{\b{s}}_{t-1}^{*\backslash j},X)}\frac{\partial^2C_{s^*(y_t),s^*(y_j)\mid \text{\b{S}}_{t-1}^{*\backslash j}}\left(F_{s^*(y_t)\mid\text{\b{S}}_{t-1}^{*\backslash j}}(s_t^*\mid \text{\b{s}}_{t-1}^{*\backslash j},X),F_{s^*(y_j)\mid\text{\b{S}}_{t-1}^{*\backslash j}}(s_j^*\mid \text{\b{s}}_{t-1}^{*\backslash j},X) \right)}{\partial F_{s^*(y_t)\mid\text{\b{S}}_{t-1}^{*\backslash j}}\left(s_t^*\mid \text{\b{s}}_{t-1}^{*\backslash j},X\right)\partial F_{s^*(y_j)\mid\text{\b{S}}_{t-1}^{*\backslash j}}\left(s_j^*\mid \text{\b{s}}_{t-1}^{*\backslash j},X\right)}
\end{array}
\end{split}
\end{align}
\endgroup
 can express (\ref{eq: continuous}), and the arguments of the copulas, say, $F_{s^*(y_t)\mid\text{\b{S}}_{t-1}^{*\backslash j}}(s_t^*\mid \text{\b{s}}_{t-1}^{*\backslash j},X)$ are obtained using the expression by \citet{joe1996families}, such that
\begin{equation}
F_{s^*(y_t)\mid\text{\b{S}}_{t-1}^{*\backslash j}}(s_t^*\mid \text{\b{s}}_{t-1}^{*\backslash j},X)=\frac{\partial C_{s^*(y_t),s^*(y_i)\mid \text{\b{S}}_{t-1}^{*\backslash j,i}}\left(F_{s^*(y_t)\mid\text{\b{S}}_{t-1}^{*\backslash j,i}}(s_t^*\mid \text{\b{s}}_{t-1}^{*\backslash j,i},X),F_{s^*(y_i)\mid\text{\b{S}}_{t-1}^{*\backslash j,i}}(s_i^*\mid \text{\b{s}}_{t-1}^{*\backslash j,i},X)\right)}{\partial F_{s^*(y_i)\mid\text{\b{S}}_{t-1}^{*\backslash j,i}}(s_i^*\mid \text{\b{s}}_{t-1}^{*\backslash j,i},X)}.
\end{equation}
Therefore, when the data is continuous, the marginals in the copula expressions of, say, the third tree, $F_{t\mid t+1,t+2}$ for $t=1,\cdots,T-2$ and $F_{t+3\mid t+1,t+2}$ for $t=1,\cdots,T-3$ are obtained by 
\begin{equation}
F_{t\mid t+1,t+2}=\frac{\partial C_{t,t+1\mid t+2}(F_{t\mid t+2}(s_t^*\mid s_{t+2}^*,X),F_{t+1\mid t+2}(s_{t+1}^*\mid s_{t+2}^*,X))}{\partial F_{t+1\mid t+2}(s_{t+1}^*\mid s_{t+2}^*,X)},
\end{equation}
where $F_{t+3\mid t+1,t+2}$ is obtained in a similar way.
\end{proof}

\begin{proof}[Proof of Proposition \protect\ref{coroll1}]

The likelihood function of the sample in terms of signs $s(y_{1}),\cdots,s(y_{T})$ conditional on $X$ is given by%
\begin{equation*}
L\left( U(T),\bm{\beta},X \right) =P\left[s(y_{1})=s_{1},\cdots,s(y_{T})=s_{T}\mid X\right] 
\end{equation*}%
where each $s_{t}$, for $1\leq t\leq T$, takes two possible values $0$ and $%
1 $. Given model (\ref{eq: DGP}) and assumption (\ref{eq: median}), under the null hypothesis the signs $s(\varepsilon_{t})$, for $1\leq t\leq T$,\ are i.i.d conditional on $X$ according to $Bi(1,0.5)$. Then, the signs $s(y_{t}),$ for $1\leq t\leq T$, will also be i.i.d conditional on $X$
\begin{equation*}
P\left[ s(y_{t})=1\mid X\right] =P\left[ s(y_{t})=0\mid X\right] =%
\frac{1}{2},\text{ for }t=1,\cdots,T.
\end{equation*}%
Consequently, under $H_0$ we have
\begin{equation*}
L_{0}\left( U(T),\bm{0},X\right) =\prod\limits_{t=1}^{T}
P\left[s(y_{t})=s_{t}\mid X\right] =\left( \frac{1}{2}\right) ^{T}
\end{equation*}%
and under $H_1$ the likelihood function conditional on $X$ can be expressed as 
\[
L_1\left( U(T),\bm{\beta}_1,X \right) =P_1[s(y_1)=s_1\mid X]\times\prod\limits_{t=2}^{T}P_{t\mid 1:{t-1}}[s(y_t)=s_t\mid s(y_1)=s_1:s(y_{t-1})=s_{t-1},X].
\]
which can further be decomposed using the D-vine array $A=(\sigma_{lt})_{1\leq l\leq t\leq T}$ to obtain
\[
L_1\left( U(T),\bm{\beta}_1,X \right) =P_1[s(y_1)=s_1\mid X]\times\prod\limits_{t=2}^{T}\prod\limits_{l=t-1}^{2} c_{\sigma_{lt}t,\mid \sigma_{1t},\cdots,\sigma_{t-1,t}}\times c_{\sigma_{1t}t}\times P_t[s(y_t)=s_t\mid X]
\]
where now for $t=1,\cdots,T,$%
\begin{equation*}
y_{t}=\bm{\beta}_1'\bm{x}_{t-1}+\varepsilon_{t}
\end{equation*}%
Under assumption (\ref{eq: DGP}) and (\ref{eq: median}), the likelihood function conditional on $X$, under the alternative hypothesis can be expressed as 
\begingroup
\allowdisplaybreaks
\begin{align*}
L_1\left( U(T),\bm{\beta}_1,X \right)& =\left(1-P_1[\varepsilon_1< -\bm{\beta}_1' \bm{x}_0\mid X]\right)^{s(y_1)}\times P_1[\varepsilon_1 <- \bm{\beta}_1' \bm{x_0}\mid X]^{1-s(y_1)}\\
&\textcolor{white}{=}\times\prod\limits_{t=2}^{T}\prod\limits_{l=t-1}^{2} c_{\sigma_{lt}j,\mid \sigma_{1t},\cdots,\sigma_{t-1,t}}\times c_{\sigma_{1t}t}
\times \left(1-P_t[\varepsilon_t< -\bm{\beta}_1'\bm{x}_{t-1}\mid X]\right)^{s(y_t)}\\
&\textcolor{white}{=}\times P_t[\varepsilon_t < -\bm{\beta}_1' \bm{x}_{t-1}\mid X]^{1-s(y_t)}
\end{align*}
\endgroup
The log-likelihood ratio is given by%
\begingroup
\allowdisplaybreaks
\begin{align*}
\ln \left\{ \frac{L_1\left( U(T),\bm{\beta} _{1},X\right) }{L
_{0}\left( U(T),\bm{0},X\right) }\right\}&=s(y_1)\ln\left\{\frac{1-P_1[\varepsilon_1<-\bm{\beta}_1' \bm{x_0}\mid X ]}{P_1[\varepsilon_1 < -\bm{\beta}_1'\bm{x_0}\mid X]}\right\}+\ln \left(1-P_1[\varepsilon_1< -\bm{\beta}_1' \bm{x_0}\mid X]\right)\\
&\textcolor{white}{=}+\sum\limits_{t=2}^{T}\sum\limits_{l=t-1}^{2}\ln c_{\sigma_{lt}t,\mid \sigma_{1t},\cdots,\sigma_{t-1,t}}+\sum\limits_{t=2}^{T}\ln c_{a_{1t}t}+\sum\limits_{t=2}^{T}s(y_t)\ln\left\{\frac{1-P_t[\varepsilon_t<- \bm{\beta}_1' \bm{x}_{t-1}\mid X]}{P_t[\varepsilon_t <-\bm{\beta}_1' \bm{x}_{t-1}\mid X]}\right\}\\
&\textcolor{white}{=}+\sum\limits_{t=2}^{T}\ln\left(1- P_t[\varepsilon_t< -\bm{\beta}_1'\bm{x}_{t-1}\mid X]\right)-T\ln\left(\frac{1}{2}\right)
\end{align*}
\endgroup
According to Neyman-Pearson Lemma [see e.g. Lehmann (1959), page 65], the
best test for testing $H_{0}$ against $H_1,$ based on $s(y_{1}),\cdots,s(y_{T}),$
rejects $H_{0}$ when%
\begin{equation*}
\ln \left\{ \frac{L_1\left( U(T),\bm{\beta} _{1},X\right) }{L_{0}\left( U(T),\bm{0},X\right) }\right\} \geq c
\end{equation*}%
or when%
\begingroup
\allowdisplaybreaks
\begin{align*}
\ln \left\{ \frac{L_1\left( U(T),\bm{\beta} _{1},X\right) }{L
_{0}\left( U(T),\bm{0}\right) }\right\}&=\sum\limits_{t=2}^{T}\sum\limits_{l=t-1}^{2}\ln c_{\sigma_{lt}t,\mid \sigma_{1t},\cdots,\sigma_{t-1,t}}+\sum\limits_{t=2}^{T}\ln c_{\sigma_{1t}t}\\\
&\textcolor{white}{=}+\sum\limits_{t=1}^{T}s_t\ln\left\{\frac{1-P_t[\varepsilon_t<-\bm{\beta}_1' \bm{x}_{t-1}\mid X]}{P_t[\varepsilon_t< -\bm{\beta}_1' \bm{x}_{t-1}\mid X]}\right\}>c_1(\bm{\beta}_1)
\end{align*}
\endgroup
The critical value, say $c_1(\beta_1)$ is given by the smallest constant $c_1(\beta_1)$ such that%
\begin{equation*}
P\left( \ln \left\{ \frac{L_1\left( U(T),\bm{\beta}_{1},X\right) }{L_{0}\left( U(T),\bm{0},X\right) }\right\} >c_1(\bm{\beta}_1)\mid
H_{0}\right) \leq \alpha .
\end{equation*}
\end{proof}
\begin{proof}[Algorithm for the likelihood function of the signs under the alternative hypothesis]

In this Section, we adapt the algorithm for the joint pmf for D-vine for discrete variables of \citet{panagiotelis2012pair} and \citet{joe2014dependence} to the context of our study. Let $U(n)=\left(s(y_1),s(y_2),\cdots,s(y_T)\right)'$ be a binary valued $T$-vector. Furthermore, for a vector of integers $\mathbf{i}$, let $\mathbf{S_{i}}=\{s(y_i), i\in \mathbf{i}\}$, where $\mathbf{s_i}$ is a mass point of $\mathbf{S_i}$ and $s_g$ is a mass point of $s(y_g)$. Let
\begingroup
\allowdisplaybreaks
\begin{align*}
F_{g\mid \mathbf{i}}^{+}&:=P\left[s(y_g)\leq s_g\mid \mathbf{S_i}=\mathbf{s_i},X\right],\quad F_{g\mid \mathbf{i}}^{-}:=P\left[s(y_g)< s_g\mid \mathbf{S_i}=\mathbf{s_i},X\right],\\
f_{g\mid \mathbf{i}}&:=P[s(y_g)=s_g\mid\mathbf{S_i}=\mathbf{s_i},X].
\end{align*}
 \endgroup 
noting that when $\mathbf{i}=\{\emptyset\}$, these conditional probabilities, correspond to marginal probabilities. Furthermore, let  $C_{gh\mid\mathbf{i}}$ be a bivariate copula for the conditional CDFs $F_{g\mid\mathbf{i}}$ and $F_{h\mid\mathbf{i}}$, and denote
\begingroup
\allowdisplaybreaks
\begin{align*}
C^{++}_{gh\mid\mathbf{i}}&:=C_{gh\mid\mathbf{i}}\left(F_{g\mid\mathbf{i}}^+,F_{h\mid\mathbf{i}}^+\right),\quad C^{+-}_{gh\mid\mathbf{i}}:=C_{gh\mid\mathbf{i}}\left(F_{g\mid\mathbf{i}}^+,F_{h\mid\mathbf{i}}^-\right),\\
C^{-+}_{gh\mid\mathbf{i}}&:=C_{gh\mid\mathbf{i}}\left(F_{g\mid\mathbf{i}}^-,F_{h\mid\mathbf{i}}^+\right),\quad C^{--}_{gh\mid\mathbf{i}}:=C_{gh\mid\mathbf{i}}\left(F_{g\mid\mathbf{i}}^-,F_{h\mid\mathbf{i}}^-\right).
\end{align*}
\endgroup
The main elements of the algorithm is the following recursions:
\begin{itemize}
\item[(\rom{1})] $F_{j-t\mid (j-t+1):(j-1)}^+=\left[C_{j-t,j-1\mid(j-t+1):(j-2)}^{++}-C_{j-t,j-1\mid(j-t+1):(j-2)}^{+-}\right]/f_{j-1\mid(j-t+1):(j-2)};$ 
\item[(\rom{2})] $F_{j-t\mid (j-t+1):(j-1)}^-=\left[C_{j-t,j-1\mid(j-t+1):(j-2)}^{-+}-C_{j-t,j-1\mid(j-t+1):(j-2)}^{--}\right]/f_{j-1\mid(j-t+1):(j-2)};$
\item[(\rom{3})] $f_{j-t\mid (j-t+1):(j-1)}=F_{j-t\mid (j-t+1):(j-1)}^+-F_{j-t\mid (j-t+1):(j-1)}^-;$
\item[(\rom{4})] $F_{j\mid (j-t+1):(j-1)}^+=\left[C_{j-t+1,j\mid(j-t+2):(j-1)}^{++}-C_{j-t+1,j\mid(j-t+2):(j-1)}^{-+}\right]/f_{j-t+1\mid(j-t+2):(j-1)};$
\item[(\rom{5})] $F_{j\mid (j-t+1):(j-1)}^-=\left[C_{j-t+1,j\mid(j-t+2):(j-1)}^{+-}-C_{j-t+1,j\mid(j-t+2):(j-1)}^{--}\right]/f_{j-t+1\mid(j-t+2):(j-1)};$
\item[(\rom{6})] $f_{j\mid (j-t+1):(j-1)}=F_{j\mid (j-t+1):(j-1)}^+-F_{j\mid (j-t+1):(j-1)}^-;$
\item[(\rom{7})] The values based on $C_{j-t,j\mid (j-t+1):(j-1)}$ is computed; 
\item[(\rom{8})] $t$ is incremented by $1$ and back to (\rom{1}).
\end{itemize}
The identity employed in the recursions is
\begingroup
\allowdisplaybreaks
\begin{align*}
\begin{array}{ll}
P\left[s(y_g)\leq s_g\mid s(y_h)=s_h,\mathbf{S_i}=\mathbf{s_i},X\right]=\\
\textcolor{white}{P[s(y_g)\leq s_g\mid s(y_h)=s_h,}\frac{P\left[s(y_g)\leq s_g, s(y_h)\leq s_h\mid\mathbf{S_i}=\mathbf{s_i},X\right]-P\left[s(y_g)\leq s_g, s(y_h)< s_h\mid\mathbf{S_i}=\mathbf{s_i},X\right]}{P\left[s(y_h)=s_h\mid \mathbf{S_i}=\mathbf{s_i},X\right]}.
\end{array}
\end{align*}
 \endgroup
The algorithm is as follows
\begin{enumerate}
\item Input $\mathbf{s}_T=\left(s_1,\cdots,s_T\right)$.
\item Allocate an $T\times T$ matrix $\pi$, where $\pi_{tj}=f_{(j-t+1):j}$ for $t=1,\cdots,T$ and $j=t+1,\cdots,T$ and the likelihood function $P[s(y_1)=s_1,\cdots,s(y_T)=s_T]$ under the alternative will appear as $\pi_{TT}$.
\item Allocate $C^{++}$, $C^{+-}$, $C^{-+}$, $C^{--}$, $U^{'+}$, $U^{'-}$, $U^{+}$, $U^{-}$, $u'$, $u$, $w'$, $w$, as vectors of length $T$.
\item Evaluate $F^{+}_j$, $F^{-}_j$, and $f_j=F^{+}_j-F^{-}_j$ using (\ref{eq: BernoulliCDF2}) and let $\pi_{1j}\leftarrow f_j$ for $j=1,\cdots,T$;  
\item Let $C^{++}_j\leftarrow C_{j-1,j}\left(F^{+}_{j-1},F^{+}_{j}\right)$, $C^{+-}_j\leftarrow C_{j-1,j}\left(F^{+}_{j-1},F^{-}_{j}\right)$, $C^{-+}_j\leftarrow C_{j-1,j}\left(F^{-}_{j-1},F^{+}_{j}\right)$, and $C^{--}_j\leftarrow C_{j-1,j}\left(F^{-}_{j-1},F^{-}_{j}\right)$ for $j=2,\cdots,T$;
\item  Set $P_{2j}\leftarrow C^{++}_j-C^{+-}_j-C^{-+}_j+C^{--}_j$ for $j=2,\cdots,T$;
\item \textbf{for} $j=2,\cdots,T:\left(\mathcal{T}_1\right)$ \textbf{do}
\item \hspace{10pt} \parbox[t]{\linegoal}{$U_j^{'+}\leftarrow F_{j-1\mid j}^{+}=\left(C^{++}_j-C^{+-}_j\right)/f_j$, $U_j^{'-}\leftarrow F_{j-1\mid j}^{-}=\left(C^{-+}_j-C^{--}_j\right)/f_j$, and $u_j'\leftarrow f_{j-1\mid j}=F_{j-1\mid j}^{+}-F_{j-1\mid j}^{-}$;}
\item \hspace{10pt} \parbox[t]{\linegoal}{$U_j^{+}\leftarrow F_{j\mid j-1}^{+}=\left(C^{++}_j-C^{-+}_j\right)/f_{j-1}$, $U_j^{-}\leftarrow F_{j\mid j-1}^{-}=\left(C^{-+}_j-C^{--}_j\right)/f_{j-1}$, and $u_j\leftarrow f_{j\mid j-1}=F_{j\mid j-1}^{+}-F_{j\mid j-1}^{-}$;}
\item \textbf{end for}
\item \textbf{for} $t=2,\cdots,T-1: \left(\mathcal{T}_2,\cdots,\mathcal{T}_{T-1}\right)$ \textbf{do}
\item\hspace{10pt} let $C^{\alpha\beta}_j\leftarrow C_{j-t,j\mid (j-t+1):(j-1)}\left(U^{'\alpha}_{j-1},U^{\beta}_j\right)$, for $j=t+1,\cdots,T$ and $\alpha,\beta\in\{+,-\}$; 
\item\hspace{10pt} let $w_{j}'\leftarrow u_{j}'$, $w_{j}\leftarrow u_{j}$ for $j=t,\cdots,T;$
\item\hspace{10pt} \textbf{for} $j=t+1,\cdots,T:$ \textbf{do}
\item\hspace{30pt} $U^{'+}_{j}\leftarrow \left(C^{++}_j-C^{+-}_j\right)/w_j$, $U^{'-}_{j}\leftarrow \left(C^{-+}_j-C^{--}_j\right)/w_j$ and $u_j'\leftarrow U^{'+}_{j}-U^{'-}_{j}$;
\item\hspace{30pt} $U^{+}_{j}\leftarrow \left(C^{++}_j-C^{-+}_j\right)/w_{j-1}'$, $U^{-}_{j}\leftarrow \left(C^{+-}_j-C^{--}_j\right)/w_{j-1}'$ and $u_j\leftarrow U^{+}_{j}-U^{-}_{j}$;
\item\hspace{10pt} \textbf{end for}
\item\hspace{10pt} let $\pi_{t+1,j}\leftarrow \pi_{t,j-1}\times u_j$ for $j=t+1,\cdots,T$.
\item\textbf{end for}
\item Return the likelihood function $\pi_{TT}$.
\end{enumerate}
\end{proof}

\begin{proof}[Proof of Theorem \protect\ref{Theorem1}]
 The characteristic function of the test statistic $SN_T(\bm{\beta}_0\mid\bm{\beta}_1)$ conditional on $X$ is given by
\begingroup
\allowdisplaybreaks
\begin{align*}
\phi_{SN_T}(u)&=\mathbb{E}_{X}\left[\exp(iu SN_T(\bm{\beta}_0\mid\bm{\beta}_1))\right]\\
&=\E_X\left[\exp\left(iu\left(\sum\limits_{t=1}^T R_{t,t-1}+\sum\limits_{t=1}^{T}\ln\left\{\frac{1-p_t[\bm{x}_{t-1},\bm{\beta}_0,\bm{\beta}_1\mid X]}{p_t[\bm{x}_{t-1},\bm{\beta}_0,\bm{\beta}_1\mid X]}\right\}s(\tilde{y}_t)\right)\right)\right],
\end{align*}
\endgroup
which may be expressed as
\begingroup
\allowdisplaybreaks
\begin{align*}
\phi_{SN_T}(u)=\E_X\left[\prod\limits_{t=1}^{T}\exp\left(iu\left(R_{t,t-1}+\ln\left\{\frac{1-p_t[\bm{x}_{t-1},\bm{\beta}_0,\bm{\beta}_1\mid X]}{p_t[\bm{x}_{t-1},\bm{\beta}_0,\bm{\beta}_1\mid X]}\right\}s(\tilde{y}_t)\right)\right)\right],
\end{align*}
\endgroup
with $R_{1,0}=0$, and $R_{t,t-1}=\sum\limits_{l=t-1}^{2}\ln c_{\tilde{\sigma}_{lt}t\mid \tilde{\sigma}_{1t},\cdots,\tilde{\sigma}_{t-1,t}}+\ln c_{\tilde{\sigma}_{1t}t}$ for $t=2,\cdots,T$, for the D-vine-array $\tilde{A}=(\tilde{\sigma}_{lt})_{1\leq l\leq t\leq T}$, such that $l=2,\cdots,T-1$ is the row with tree $\mathcal{T}_l$, and column $t$ has the permutation $\tilde{\underline{\sigma}}_{t-1}=(\tilde{\sigma}_{1t},\cdots,\tilde{\sigma}_{t-1,t})$ of the previously added variables, $p_t[\bm{x}_{t-1},\bm{\beta}_0,\bm{\beta}_1\mid X]=P_t[\varepsilon_t\leq f(\bm{x}_{t-1},\bm{\beta}_0)-f(\bm{x}_{t-1},\bm{\beta}_1)\mid X]$, and $s(\tilde{y}_t)=s(y_t-f(\bm{x}_{t-1},\bm{\beta}_0))$. Furthermore, $u\in\mathbb{R}$ and the complex number $i=\sqrt{-1}$. Unlike \citet{dufour2010exact}, $\tilde{y}_t$ for $t=1,\cdots,T$ are no longer necessarily independent conditional on $X$. Therefore, we follow \citet{heinrich1982factorization} by expressing the characteristic function $\phi_{SN_T}(u)$ as follows
\[
\phi_{SN_T}(u)=\prod\limits_{t=1}^{T}\varphi_t(u)
\]
where $\varphi_1(u)=\E_X\left[\exp(iu\left(\ln\left\{\frac{1-p_1[\bm{x}_{0},\bm{\beta}_0,\bm{\beta}_1\mid X]}{p_1[\bm{x}_{0},\bm{\beta}_0,\bm{\beta}_1\mid X]}\right\}s(\tilde{y}_1)\right))\right]$ and for $t=2,\cdots,T$
\[
\varphi_t(u)=\frac{f_t(u)}{f_{t-1}(u)},\quad\text{where},\quad f_t(u)=\E_X\left[\exp(iuSN_t(\bm{\beta}_0\mid\bm{\beta}_1))\right]
\] 
\citet{heinrich1982factorization} shows that $\varphi_t(u)$ can alternatively be expressed as
\[
\varphi_t(u)=\E_X\left[\exp\left(iu\left\{R_{t,t-1}+\ln\left\{\frac{1-p_t[\bm{x}_{t-1},\bm{\beta}_0,\bm{\beta}_1\mid X]}{p_t[\bm{x}_{t-1},\bm{\beta}_0,\bm{\beta}_1\mid X]}\right\}s(\tilde{y}_t)\right\}\right)\right]+\rho_t(u)
\]
where 
\begingroup
\allowdisplaybreaks
\begin{align*}
\rho_t(u)&=\bigg\{\E_X\left[\exp\left(iu\left\{SN_t\left(\bm{\beta}_0\mid\bm{\beta}_1\right)\right\}\right)\right]-\\
&\textcolor{white}{=}\textcolor{white}{=}\E_X\left[\exp\left(iu\left\{R_{t,t-1}+\ln\left\{\frac{1-p_t[\bm{x}_{t-1},\bm{\beta}_0,\bm{\beta}_1\mid X]}{p_t[\bm{x}_{t-1},\bm{\beta}_0,\bm{\beta}_1\mid X]}\right\}s(\tilde{y}_t)\right\}\right)\right]\times\\
&\textcolor{white}{=}\textcolor{white}{=}\E_X\left[\exp\left(iu\left\{SN_{t-1}\left(\bm{\beta}_0\mid\bm{\beta}_1\right)\right\}\right)\right]\bigg\}\bigg/ \E_X\left[\exp\left(iu\left\{SN_{t-1}\left(\bm{\beta}_0\mid\bm{\beta}_1\right)\right\}\right)\right].
\end{align*}
\endgroup
Therefore, the characteristic function of the PCC-POS test statistic can be expressed as 
\begingroup
\allowdisplaybreaks
\begin{align}
\begin{split}\label{eq: Fourier-inverse}
\phi_{SN_T}(u)&=\prod\limits_{t=1}^{T}\varphi_t(u)\\
&=\prod\limits_{t=1}^{T}\left(\E_X\left[\exp\left(iu\left\{R_{t,t-1}+\ln\left\{\frac{1-p_t[\bm{x}_{t-1},\bm{\beta}_0,\bm{\beta}_1\mid X]}{p_t[\bm{x}_{t-1},\bm{\beta}_0,\bm{\beta}_1\mid X]}\right\}s(\tilde{y}_t)\right\}\right)\right]+\rho_t(u)\right),
\end{split}
\end{align}
\endgroup
where $\rho_1(u)=0$, $R_{1,0}=\rho_1(u)=0$. 

Let $Z_t=R_{t,t-1}+\ln\left\{\frac{1-p_t[\bm{x}_{t-1},\bm{\beta}_0,\bm{\beta}_1\mid X]}{p_t[\bm{x}_{t-1},\bm{\beta}_0,\bm{\beta}_1\mid X]}\right\}s(\tilde{y}_t)$ for $t=1,\cdots,T$. Then following \citet{heinrich1982factorization}, and by assuming that $Z_1,\cdots,Z_T$ are weakly dependent, the term $\rho_t(u)$ can further be factorized. For instance, a case of such weakly dependent random variables for which a Theorem exists is the regularity Markov type process (i.e. RMT-process). Let $\mathcal{B}_s^{s+m}=\sigma(Z_s,\cdots,Z_{s+m})$ be the Borel $\sigma$-field generated by $\{Z_t, t=s,\cdots,s+u\}$. The process $\{Z_t\}_{t=1,2,\cdots}$ is an RMT-process, if for $1\leq s\leq t$, the uniform mixing coefficient $\phi(m)\leq\gamma(s,t)$ with probability one, where 
\begin{equation*}
\phi(m)\equiv\sup_{s\geq 1}\phi(\mathcal{B}_{1}^s,\mathcal{B}_{s+m}^\infty)
\end{equation*}
and where $\phi(\mathcal{B}_{1}^s,\mathcal{B}_{s+m}^\infty)$
\begin{equation*}
\phi(\mathcal{B}_{1}^s,\mathcal{B}_{s+m}^\infty)\equiv \sup_{G\in\mathcal{B}_{s+m}^{\infty},H\in \mathcal{B}_{1}^s}\lvert P[H\mid G]-P[H]\lvert,
\end{equation*}
with $\sup_{s\geq1}\gamma(s,s+m)\rightarrow0$ as $m\rightarrow \infty$. Given such dependence, $\rho_t(u)$ can be factorized using the results of Theorem 2 of \citet{heinrich1982factorization}.

The conditional CDF of $SN_T(\bm{\beta}_0\mid\bm{\beta}_1)$ evaluated at a constant $c_1(\bm{\beta}_0,\bm{\beta}_1)$, where $c_1(\bm{\beta}_0,\bm{\beta}_1)\in\mathbb{R}$, given by the conditional characteristic functions $\phi_{SN_T}(u)$ can then be obtained using the Fourier-inversion formula [see \citet{gil1951note}] as follows
\[
P[SN_T(\bm{\beta}_0\mid \bm{\beta}_1)\leq c_1(\bm{\beta}_0,\bm{\beta}_1)]=\frac{1}{2}-\frac{1}{\pi}\int_{0}^{\infty}\frac{\Im\{\exp(-iuc_1(\bm{\beta}_0,\bm{\beta}_1))\phi_{SN_T}(u)\}}{u}du
\] 
where $\forall u \in \mathbb{R}$, the conditional characteristic function $\phi_{SN_T}(u)$ is expressed by (\ref{eq: Fourier-inverse}) and $\Im{z}$ denotes the imaginary part of the complex number $z$. Therefore, the power function can be obtained as follows
\[
\Pi(\bm{\beta}_0,\bm{\beta}_1)=P[SN_T(\bm{\beta}_0\mid \bm{\beta}_1)> c_1(\bm{\beta}_0,\bm{\beta}_1)]=\frac{1}{2}+\frac{1}{\pi}\int_{0}^{\infty}\frac{\Im\{\exp(-iuc_1(\bm{\beta}_0,\bm{\beta}_1))\phi_{SN_T}(u)\}}{u}du
\] 

\end{proof}
\newpage
\begin{proof}[Additional simulations]

\begin{figure}[tbph]
\caption{Power comparisons: different tests. Student's $t(\nu)$ error distributions, with
different degrees of freedom $\nu$, $\protect\rho=0$ in (\protect\ref{eq: errorsim}) and $\protect%
\theta =0.9$ in (\protect\ref{eq: theta})}
\begin{center}
\subfigure{\includegraphics[scale=0.58]{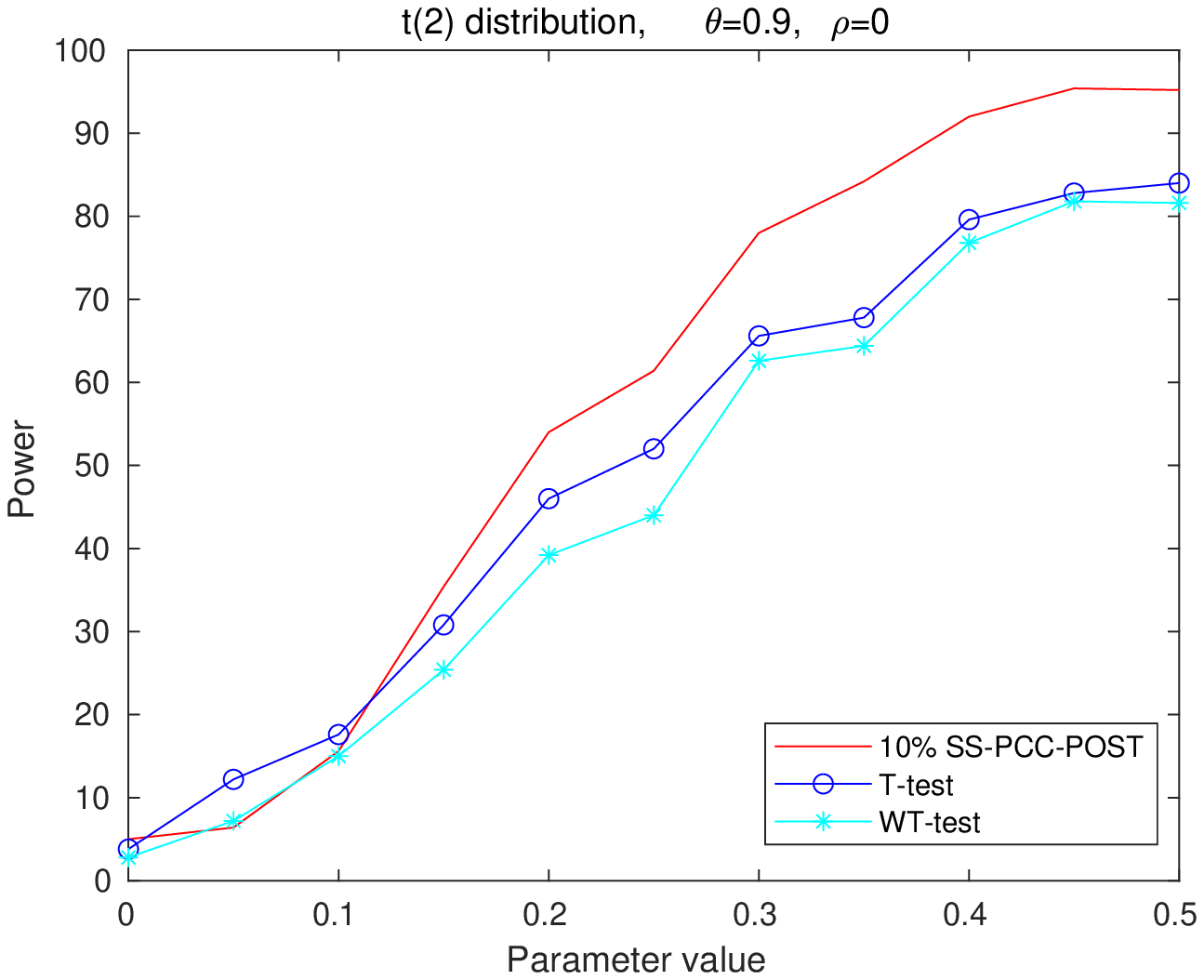}} %
\subfigure{\includegraphics[scale=0.58]{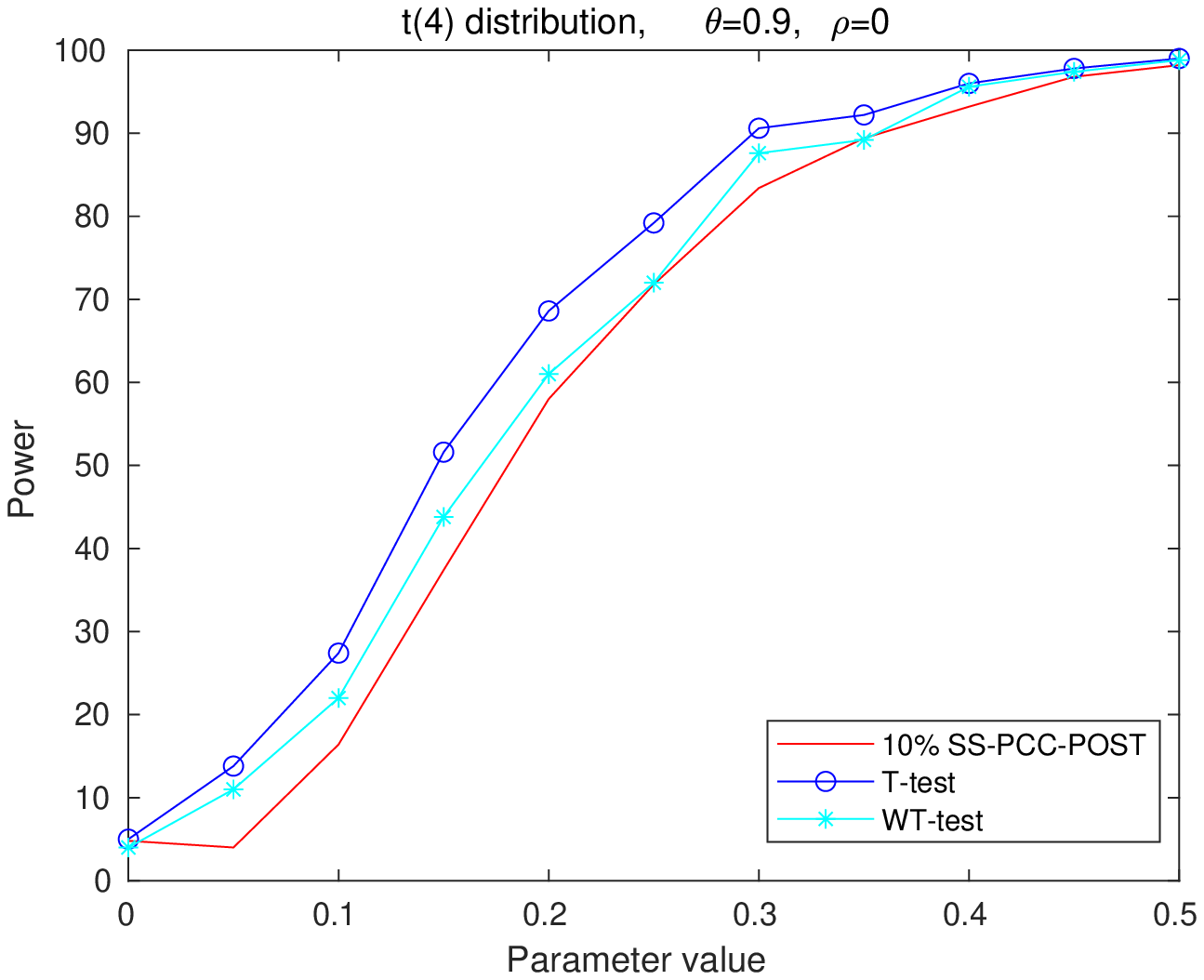}} \\[0pt]
\subfigure{\includegraphics[scale=0.58]{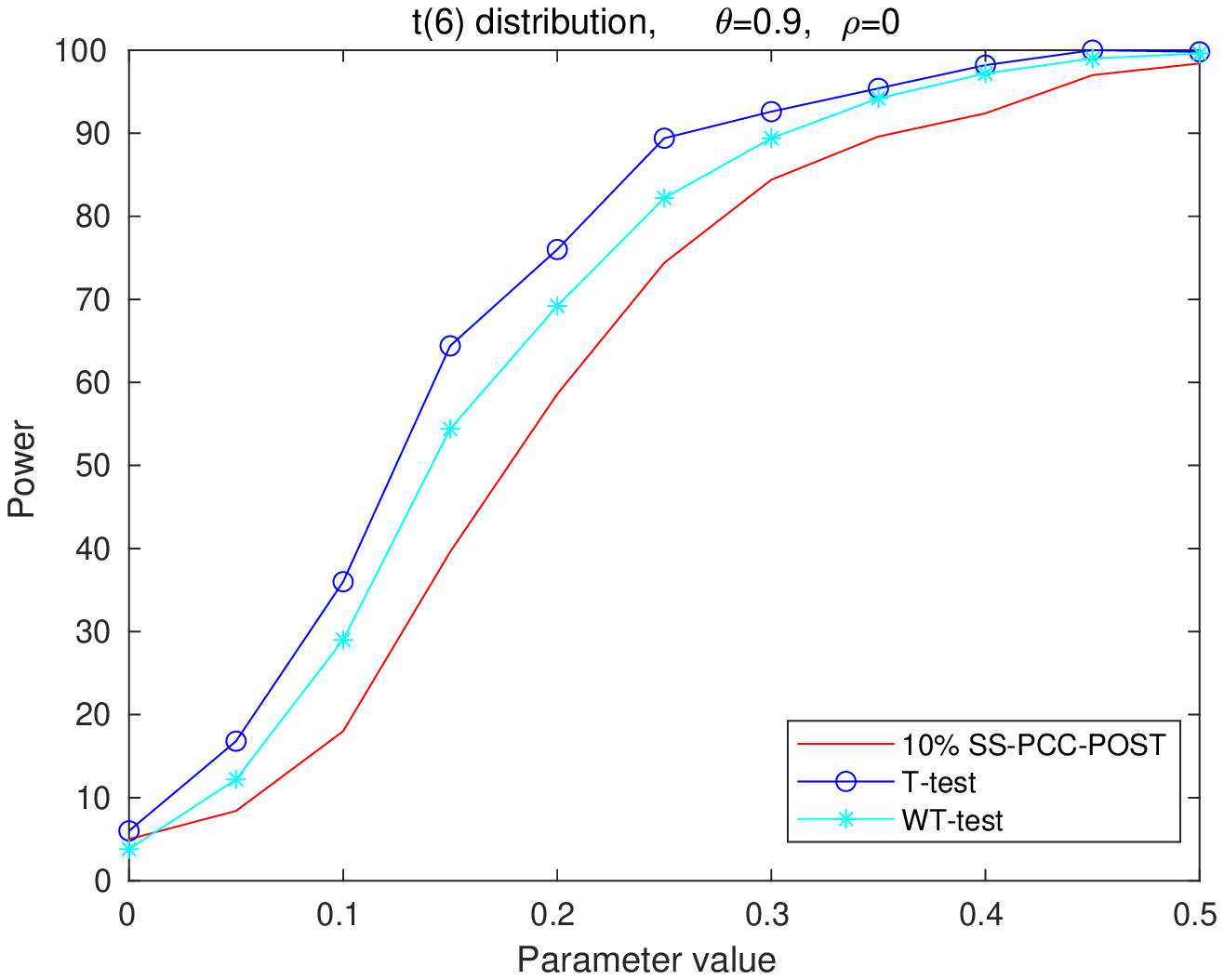}} %
\end{center}
\doublespacing
Note: These figures compare the power curves of the 10\% split-sample PCC-POS test
[10\% SS-PCC-POS test] with: (1) the \textit{t}-test and (2) the \textit{t}-test based
on White's (1980) variance correction [WT-test]. 
\label{fig: c21}
\end{figure}
\FloatBarrier

\begin{figure}[tbph]
\caption{Power comparisons: different tests. Student's $t(\nu)$ error distributions, with
different degrees of freedom $\nu$, $\protect\rho=0.1$ in (\protect\ref{eq: errorsim}) and $\protect%
\theta =0.9$ in (\protect\ref{eq: theta})}
\begin{center}
\subfigure{\includegraphics[scale=0.58]{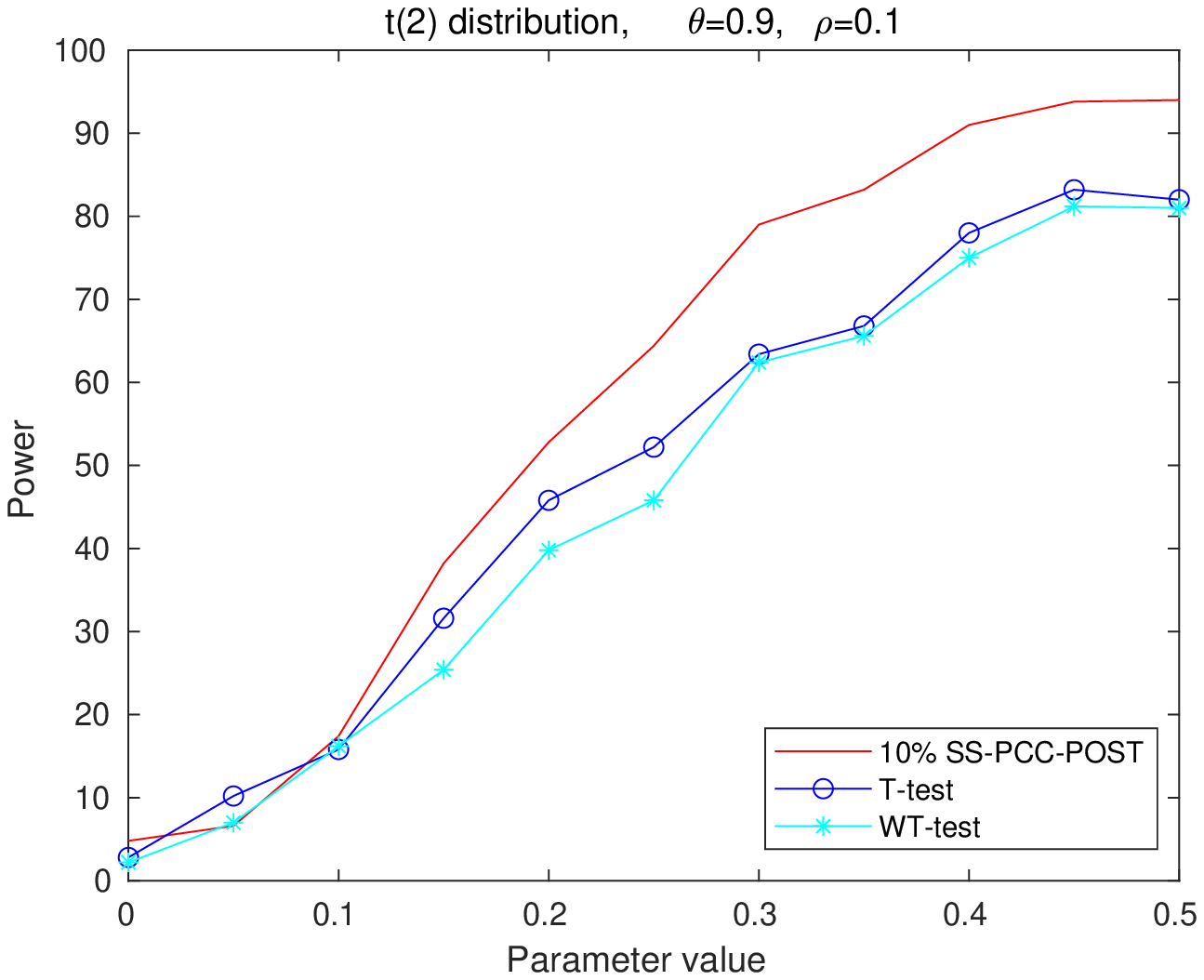}} %
\subfigure{\includegraphics[scale=0.58]{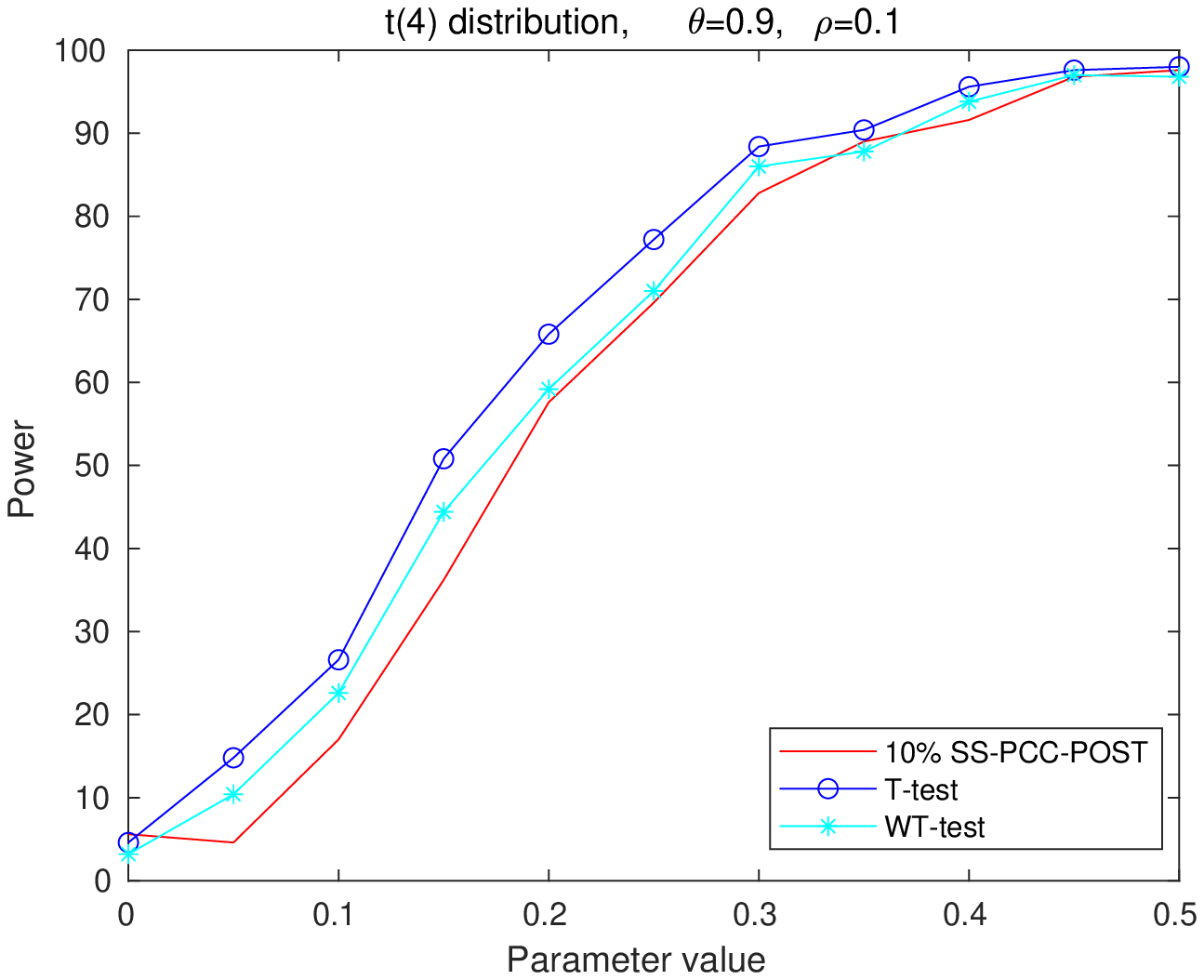}} \\[0pt]
\subfigure{\includegraphics[scale=0.58]{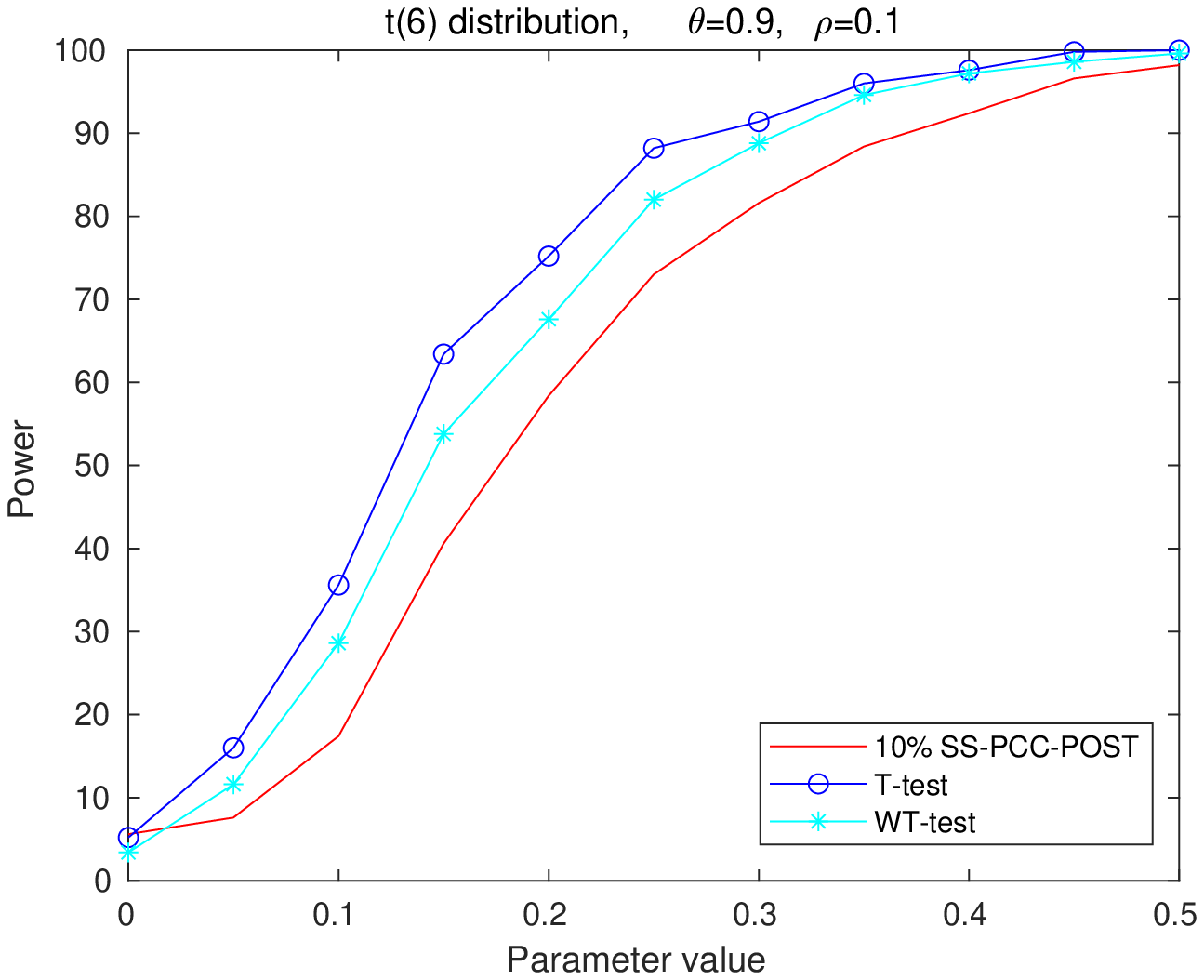}} %
\end{center}
\doublespacing
Note: These figures compare the power curves of the 10\% split-sample PCC-POS test
[10\% SS-PCC-POS test] with: (1) the \textit{t}-test and (2) the \textit{t}-test based
on White's (1980) variance correction [WT-test]. 
\label{fig: c22}
\end{figure}
\FloatBarrier

\begin{figure}[tbph]
\caption{Power comparisons: different tests. Student's $t(\nu)$ error distributions, with
different degrees of freedom $\nu$, $\protect\rho=0.5$ in (\protect\ref{eq: errorsim}) and $\protect%
\theta =0.9$ in (\protect\ref{eq: theta})}
\begin{center}
\subfigure{\includegraphics[scale=0.58]{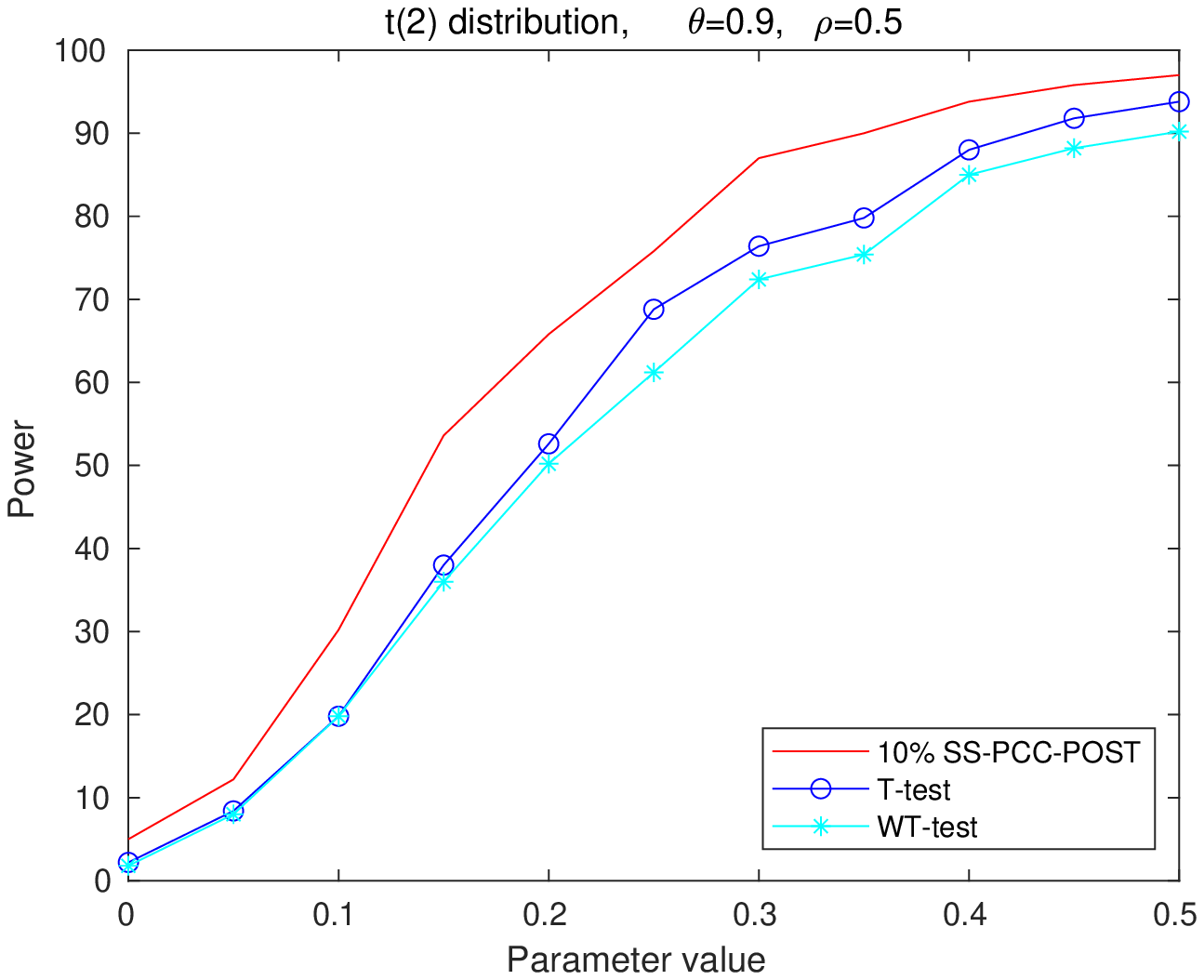}} %
\subfigure{\includegraphics[scale=0.58]{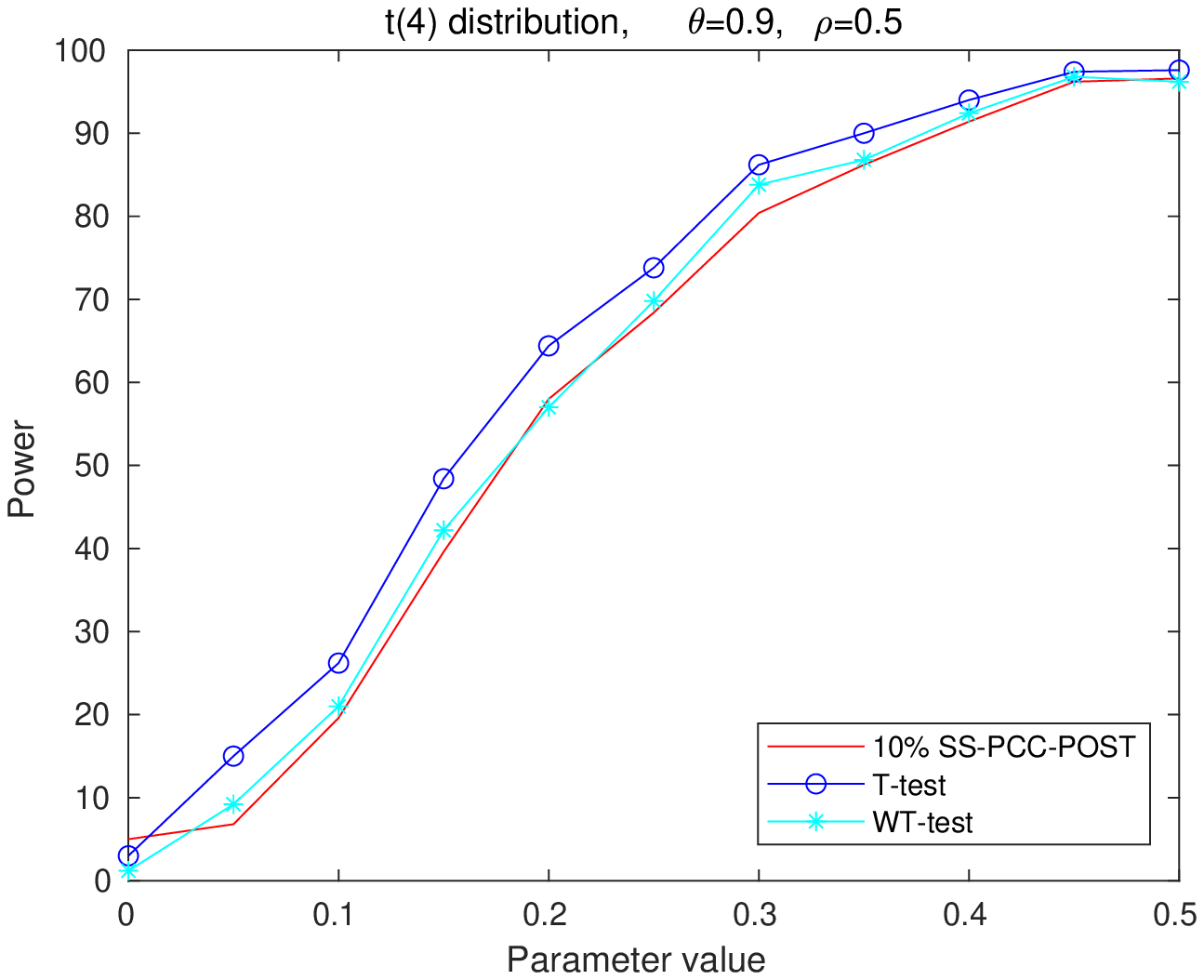}} \\[0pt]
\subfigure{\includegraphics[scale=0.58]{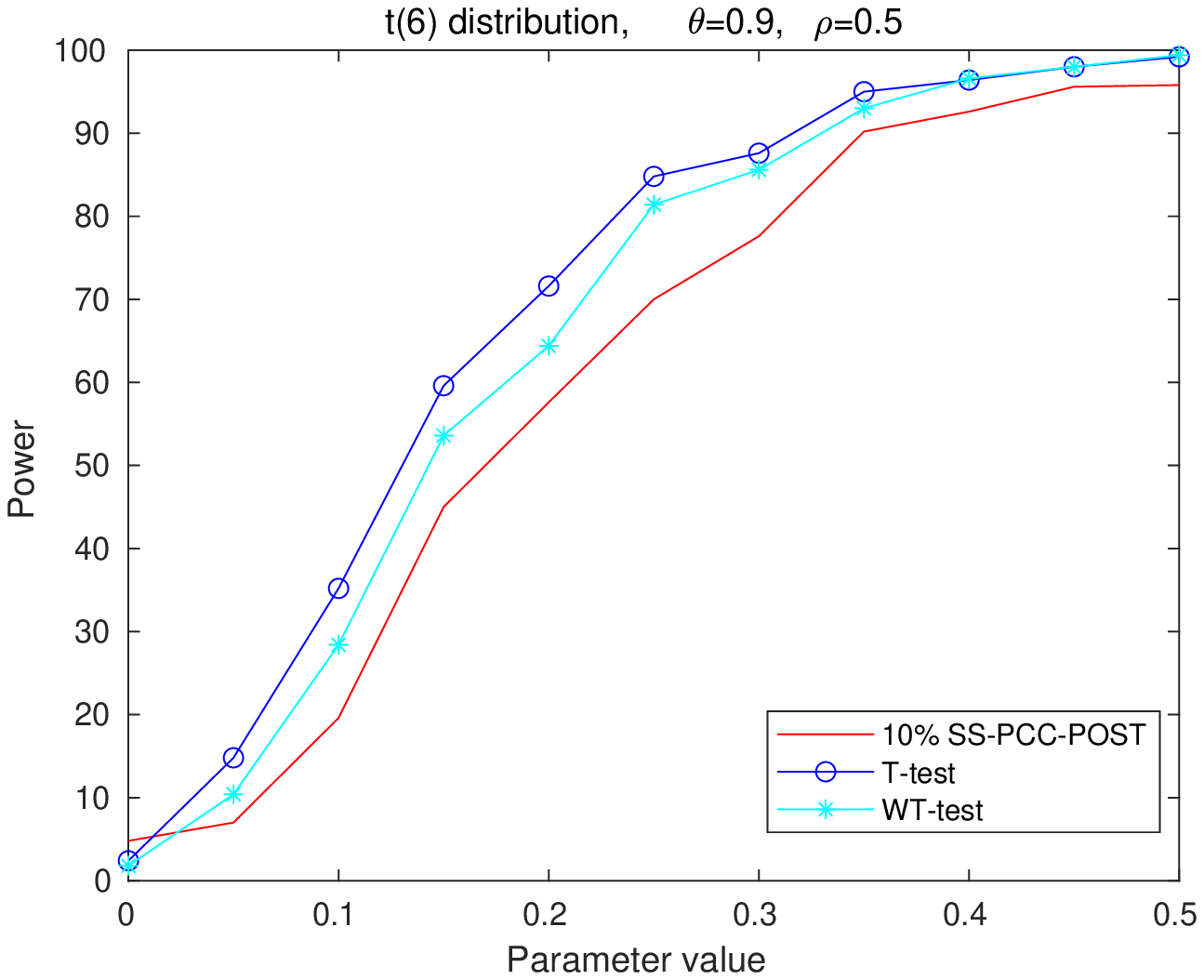}} %
\end{center}
\doublespacing
Note: These figures compare the power curves of the 10\% split-sample PCC-POS test
[10\% SS-PCC-POS test] with: (1) the \textit{t}-test and (2) the \textit{t}-test based
on White's (1980) variance correction [WT-test].  
\label{fig: c23}
\end{figure}
\FloatBarrier

\begin{figure}[tbph]
\caption{Power comparisons: different tests. Student's $t(\nu)$ error distributions, with
different degrees of freedom $\nu$, $\protect\rho=0.9$ in (\protect\ref{eq: errorsim}) and $\protect%
\theta =0.9$ in (\protect\ref{eq: theta})}
\begin{center}
\subfigure{\includegraphics[scale=0.58]{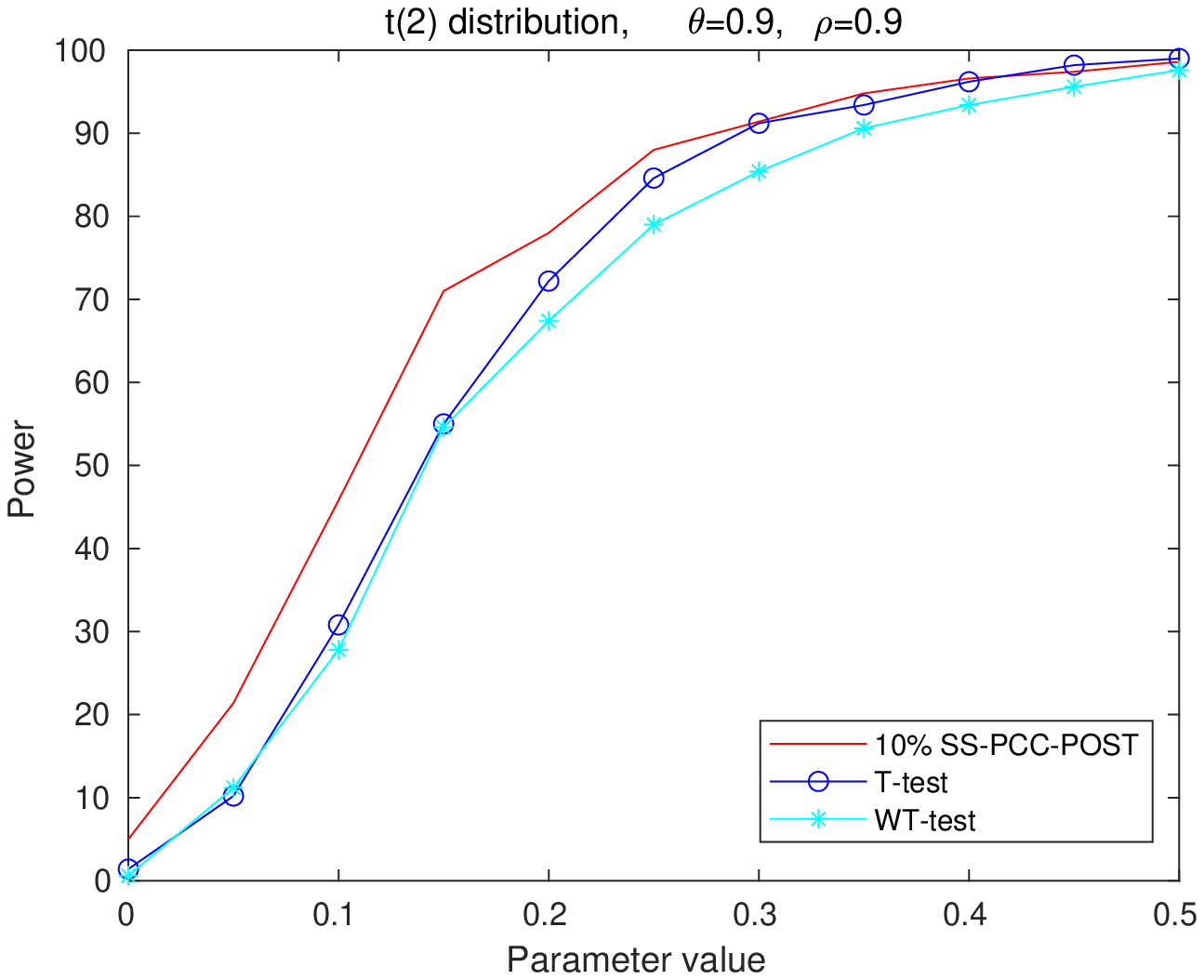}} %
\subfigure{\includegraphics[scale=0.58]{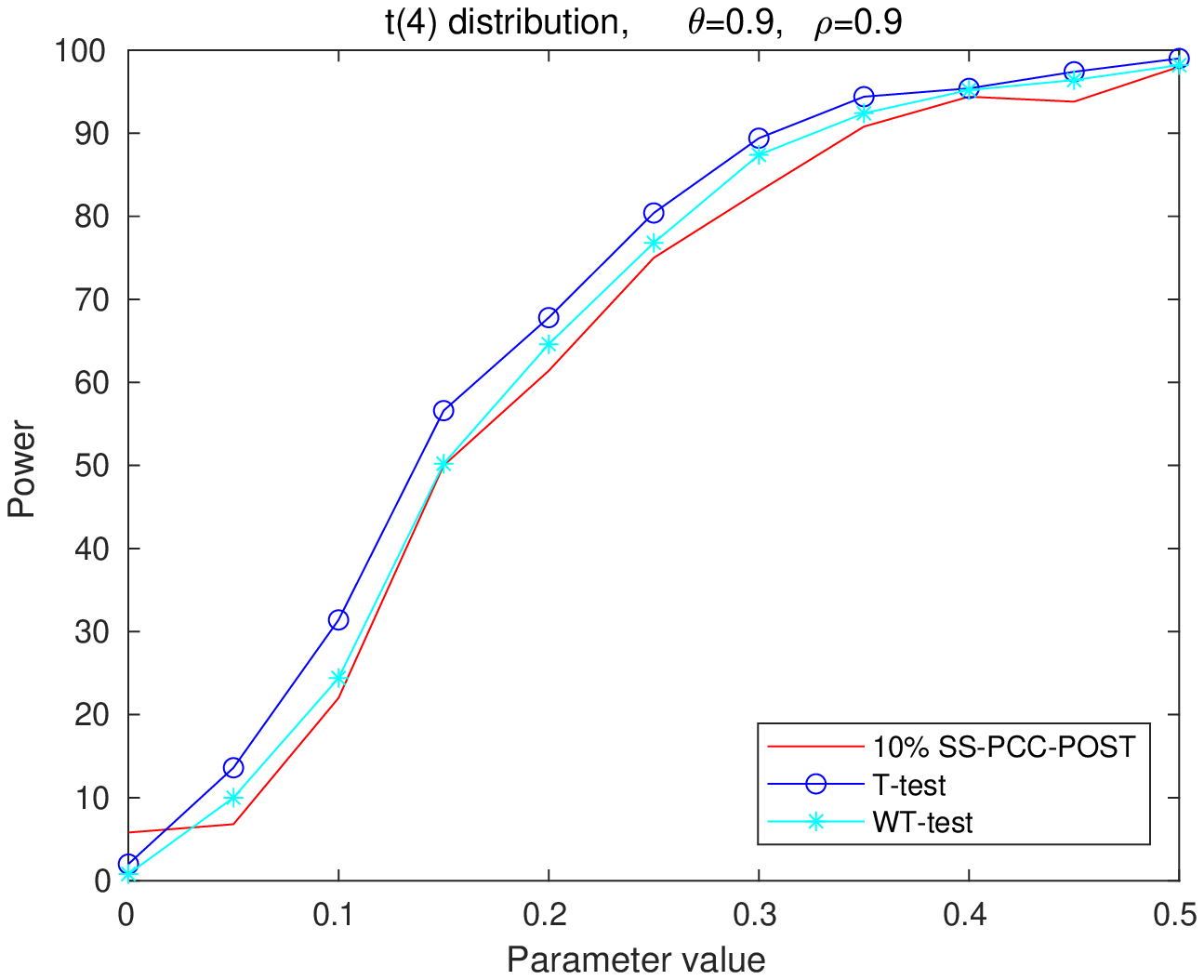}} \\[0pt]
\subfigure{\includegraphics[scale=0.58]{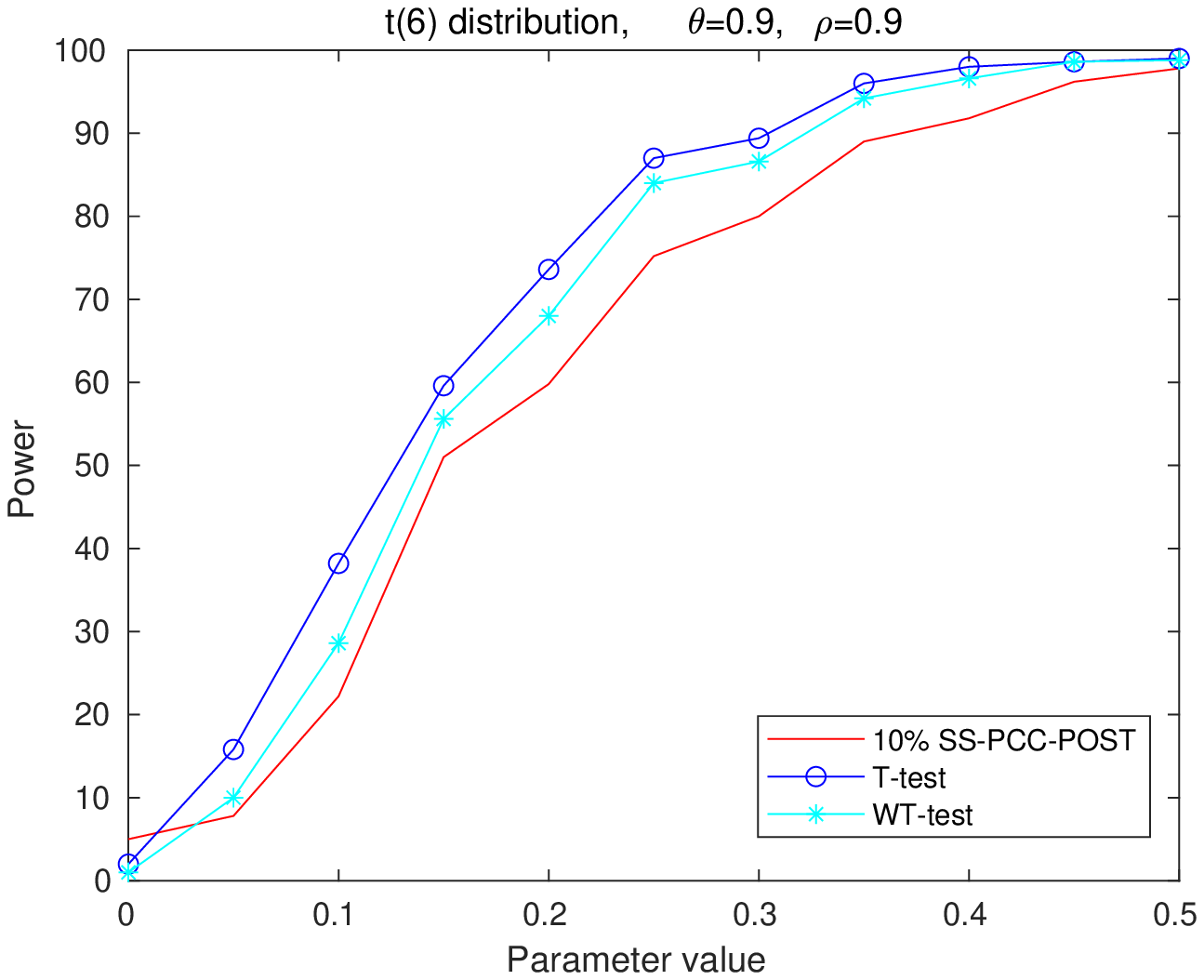}} %
\end{center}
\doublespacing
Note: These figures compare the power curves of the 10\% split-sample PCC-POS test
[10\% SS-PCC-POS test] with: (1) the \textit{t}-test and (2) the \textit{t}-test based
on White's (1980) variance correction [WT-test]. 
\label{fig: c24}
\end{figure}
\FloatBarrier
\begin{figure}
\caption{Comparison of the student's $t$ distribution with various degrees of freedom to the normal distribution}
\includegraphics[width=\textwidth]{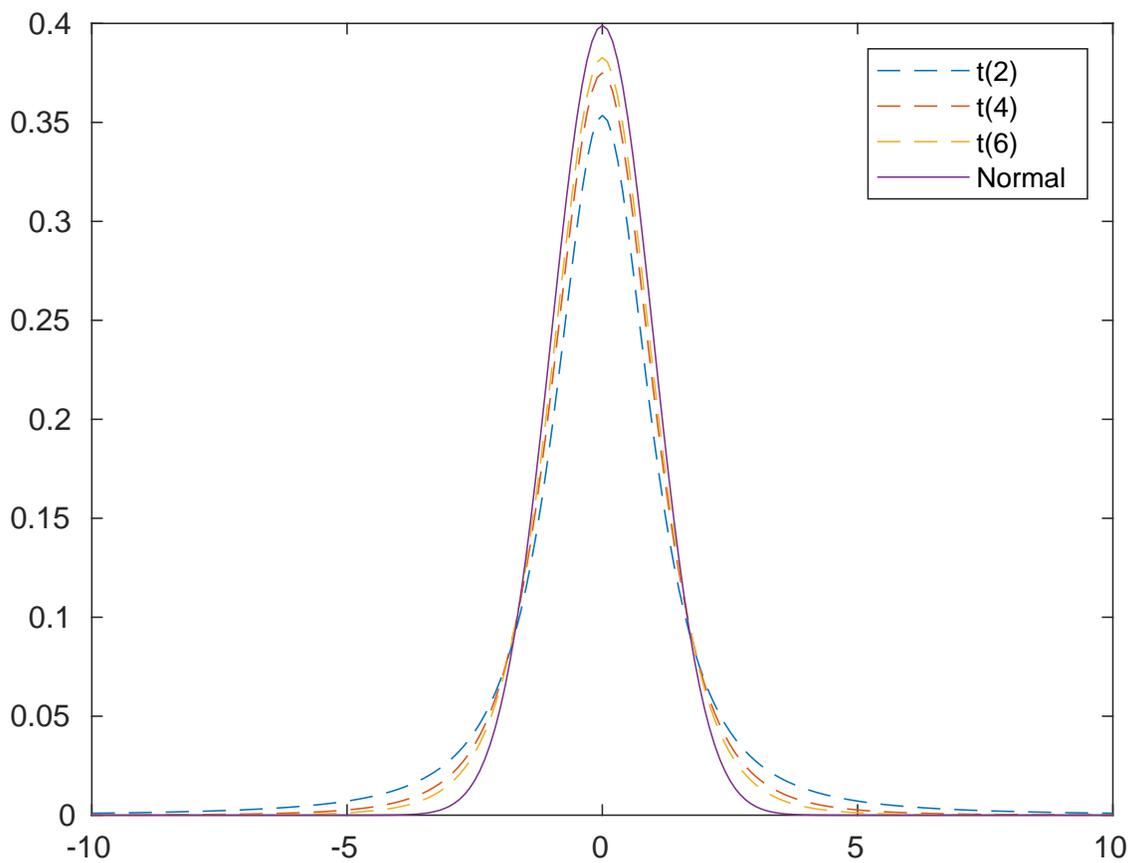}\label{fig: distos1}
Note: In this figure, we compare the Normal and Student's distribution with two, four and six degrees of freedom - i.e. $\nu=2,\quad \nu=4,\quad \nu=6$.
\label{fig: c25}
\end{figure}
\FloatBarrier
\end{proof}
\newpage
\bibliographystyle{chicago}
\bibliography{References_Final}

\end{document}